\documentclass[11pt]{article}
\pdfoutput=1
\usepackage[T1]{fontenc}
\usepackage[utf8]{inputenc}
\usepackage{lmodern}
\usepackage[a4paper,margin=28mm]{geometry}
\usepackage{microtype}
\usepackage{amsmath,amssymb,amsfonts,amsthm,mathtools}
\usepackage{graphicx}
\usepackage{booktabs,longtable,multirow,array}
\usepackage{braket}
\usepackage{dsfont}
\usepackage{xcolor}
\usepackage{tikz}
\usetikzlibrary{arrows.meta,decorations.pathmorphing}
\usetikzlibrary{calc,patterns}
\usepackage{authblk}
\usepackage{fancyhdr}
\usepackage[numbers,sort&compress]{natbib}
\usepackage{doi}
\usepackage{hyperref}
\hypersetup{
	colorlinks=true,
	linkcolor=blue, 
	citecolor=red, 
	urlcolor=blue   
}

\allowdisplaybreaks
\newcommand{\abs}[1]{\left\lvert#1\right\rvert}
\DeclareRobustCommand{\id}{\mathds{1}}
\newcommand{\eps}{\varepsilon}
\newcommand\Bell{{\boldsymbol\ell}}
\newcommand{\sixj}[6]{%
	\begin{Bmatrix}
		#1&#2&#3\\
		#4&#5&#6
\end{Bmatrix}}

\newtheorem{theorem}{Theorem}[section]
\newtheorem{lemma}[theorem]{Lemma}

\newtheorem{conjecture}[theorem]{Conjecture}
\theoremstyle{definition}

\theoremstyle{remark}
\newtheorem{remark}[theorem]{Remark}

\begin{document}

\fancypagestyle{titlepage}{
  \fancyhf{} 
  \rhead{\small RIKEN-iTHEMS-Report-26} 
  \renewcommand{\headrulewidth}{0pt} 
}
	
	\title{Bulk OPE Coefficients of the $E$-Series Virasoro Minimal Models}
	\author[1,2]{Amaury Lhoste}
	\author[1,5]{Jiaxin Qiao}
	\author[1,3,4,5,6]{Masahito Yamazaki}
	\affil[1]{Kavli Institute for the Physics and Mathematics of the Universe,
		UTIAS, University of Tokyo, Kashiwa, Chiba 277-8583, Japan}
	\affil[2]{École normale supérieure, Université PSL,
		45 rue d'Ulm, 75005 Paris, France}
	\affil[3]{Department of Physics, Graduate School of Science, University of Tokyo, Hongo 7-3-1, Bunkyo-ku, Tokyo 113-0033, Japan}
	\affil[4]{Trans-Scale Quantum Science Institute, University of Tokyo, Hongo 7-3-1, Bunkyo-ku, Tokyo 113-0033, Japan}
    \affil[5]{Center for Data-Driven Discovery (CD3), Kavli IPMU, University of Tokyo, Chiba 277-8583, Japan}
	\affil[6]{Center for Interdisciplinary Theoretical and Mathematical Sciences,
		RIKEN, Saitama 351-0198, Japan}
	\date{}
    
	\maketitle
    \thispagestyle{titlepage}
	
	\begin{abstract}
		We present bulk-primary operator-product coefficients for every
		Virasoro minimal model with an \(E_6\), \(E_7\), or \(E_8\) modular invariant.
		Dividing by a universal product of chiral generalized-minimal-model
		OPE coefficients reduces the conformal bootstrap problem to a finite algebraic problem governed by
		root-of-unity quantum-group \(6j\) symbols. The exceptional factors of the OPE coefficients can be chosen
		independent of the second Kac indices $s$, and their dependence on the central charge reduces to finitely many reference cases and explicit phases. A finite computer-assisted calculation yields candidate algebraic $s=1$ seed data and high-precision numerical evidence for existence and uniqueness modulo primary-field signs. We prove that exact existence and uniqueness for these seeds imply the corresponding result for all Kac labels within the genus-zero bulk locality and crossing equations; the finite seed step remains uncertified.
		The three families require \(18\), \(32\), and \(107\) representative reduced OPE coefficients for \(E_6\), \(E_7\), and \(E_8\), respectively. We provide reconstruction conventions and
		coefficient tables for both unitary and nonunitary models. Our calculation assumes only Virasoro symmetry, bulk locality (permutation symmetry) and OPE associativity (crossing equations). 
	\end{abstract}
	
	\newpage
	\tableofcontents
	\vspace{1em}
	
	\newpage
	\section{Introduction}
	
	Virasoro minimal models are among the few interacting conformal field theories
	for which one may determine the complete local operator algebra
	exactly~\cite{Belavin:1984vu}.  Modular invariance, nonnegative integral
	multiplicities, and a unique vacuum classify their torus spectra by pairs of
	ADE Dynkin diagrams~\cite{Cappelli:1986hf,Cappelli:1987xt}.  The modular
	invariant specifies which left- and right-moving Virasoro representations
	occur, but it does not by itself determine the local theory: one needs to find the operator product expansion (OPE) coefficients that satisfy a collection of self-consistency conditions.
	
	The purpose of this paper is to give explicit bulk OPE coefficients for every Virasoro minimal model with an exceptional modular invariant ($E$-series).  For all coprime pairs \((p,q)\), with \(p,q>1\) and
	\begin{equation}
		q=12,\ 18,\ 30
		\qquad\text{for}\qquad
		E_6,\ E_7,\ E_8,
		\label{eq:q-values}
	\end{equation}
	respectively, we give expressions for all bulk-primary OPE coefficients in one fixed normalization.  The result includes both the unitary and nonunitary theories in the three families. Many partial results on Virasoro minimal models already exist in the literature. We briefly review what was known and what is new here.
	
	\subsection{What it means to solve an exceptional minimal model}
	
	Several logically different achievements are often described by saying that a
	rational CFT is ``solved''.  It is useful to distinguish them here.
	\begin{center}
		\begin{tabular}{p{0.25\textwidth}p{0.67\textwidth}}
			\toprule
			Level of description & Data determined \\
			\midrule
			Modular invariant & Spectrum, including multiplicities. \\[1em]
			\shortstack[l]{Bulk data} &
			Three-point functions on the sphere (bulk OPE coefficients). \\[1em]
			\shortstack[l]{Boundary data} & Conformal boundary conditions and related OPE coefficients.\\[1em]
			\shortstack[l]{Interface} & Conformal/topological line defects and their fusions, junctions, etc.\\[1em]
			\shortstack[l]{$\cdots$}  & $\cdots$ \\
			\bottomrule
		\end{tabular}
	\end{center}
	The present work concerns the bulk data.  It is compatible with, but is not a
	claim to replace, the more abstract categorical description of the rational CFTs \cite{Moore:1988uz,Moore:1988qv}.
	Conversely, an abstract description does not by itself produce the
	coefficient tables used in a direct correlation-function computation.
	
	By solving the bulk data, we mean giving explicit expressions for all the bulk OPE coefficients, including their magnitudes and phases. This distinction is essential for the comparison with earlier results below. Once the two-point
	functions have been normalized, the OPE coefficients admit the field-basis transformation
	\begin{equation}\label{eq:gauge}
		C_{IJK}\longrightarrow \eps_I\eps_J\eps_K C_{IJK},
		\qquad \eps_I=\pm1,
	\end{equation}
	which corresponds to independent field redefinitions, $V_I\rightarrow\eps_I V_I$, with the vacuum sign fixed to $+1$.  Thus a complete answer
	means a complete list of representative $C_{IJK}$'s in one specified operator-sign gauge.
	
	\subsection{Earlier explicit calculations}
	
	Dotsenko and Fateev obtained the diagonal \(A\)-series structure constants by
	Coulomb-gas methods~\cite{Dotsenko:1984nm,Dotsenko:1984ad,Dotsenko:1985hi}.
	These techniques were then applied to the non-diagonal $D$- and $E$-series. Petkova computed the $D$-series constants~\cite{Petkova:1988cy,Petkova:1988yf}, and Furlan, Ganchev, and Petkova
	computed the unitary $E_6$ case while developing the corresponding
	fusion-matrix formalism~\cite{Furlan:1989ra}. 
	
	Along similar lines, the OPE coefficients of $SU(2)$ WZW CFTs were studied. Fateev and Zamolodchikov computed the $A$-series structure constants \cite{Zamolodchikov:1986bd}. Di Francesco obtained partial results in the non-diagonal cases, $D$ and $E$ \cite{DiFrancesco:1988sc}. Fuchs and Klemm systematically studied the OPE coefficients in the non-diagonal cases \cite{Fuchs:1989prl,Fuchs:1989kz,Fuchs:1989Z2,Fuchs:1989zp}. Around the same time, the $E_7$ OPE coefficients were also calculated by Douglas and Trivedi \cite{Douglas:1988rv} and by Kato and Kitazawa \cite{Kato:1988ct}.
	
	The $SU(2)$ WZW results are relevant because their reduced crossing equations involve the same quantum-group fusion data as the first-Kac-label sector studied here, after the root-of-unity and normalization conventions are matched. This is a comparison of reduced equations and coefficients; the WZW and Virasoro theories have different central charges. Section~\ref{sec:s=1} explains the label and parameter map.
	
	Petkova and Zuber subsequently gave the most systematic early treatment of
	the exceptional signs in the unitary Virasoro minimal models~\cite{Petkova:1994zs}.  After removing
	diagonal chiral factors, they identified the scalar relative constants with the Pasquier graph
	algebra, fixed field conventions, presented detailed \(E_6\) and \(E_7\) sign
	relations, and reported completing the remaining \(423\) \(E_8\) signs
	numerically in Section~2.3 of Ref.~\cite{Petkova:1994zs}. The full lengthy \(E_8\) spin-field list was not printed.  Their
	detailed exceptional analysis concerns the unitary $p=q-1$ setting, with all second Kac labels equal to one.  It therefore supplies an indispensable overlap
	and independent comparison for our \(s=1\) data. Their discussion also addresses second-label sign dependence, without providing a displayed all-\(p\), all-Kac-label catalogue.
	
	Runkel obtained explicit bulk and boundary coefficients for the \(A\)- and
	\(D\)-series, with uniqueness up to field redefinitions established within the
	bulk/boundary sewing problem~\cite{Runkel:1998a,Runkel:1999dz}. These results
	include bulk coefficients and provide an important precedent for stating
	uniqueness modulo field gauge. Our analytic argument uses the bulk crossing
	equations directly.
	
	More recently, Nivesvivat and Ribault revisited the $E$-series Virasoro bootstrap~\cite{Nivesvivat:2025odb}.  They determined the nonchiral
	fusion rules for all three exceptional families and worked out explicit signed
	reduced constants for the full \(E_6\) family.  This includes the nonunitary \(E_6\) models and goes
	well beyond the old \(s=1\) analysis.  For $E_7$ and $E_8$, they supplied
	selected exact coefficients and high-precision tests, rather than an
	exhaustive analogue of their $E_6$ formulas.

    Accordingly, our \(E_6\) results overlap both the early unitary analysis~\cite{Furlan:1989ra} and the recent all-central-charge formulas of Ref.~\cite{Nivesvivat:2025odb}. In the unitary cases, our \(E_7\) and \(E_8\) results also overlap partially with the earlier calculations~\cite{Fuchs:1989prl,Fuchs:1989kz,Fuchs:1989Z2,Fuchs:1989zp,Douglas:1988rv,Kato:1988ct,Petkova:1994zs}. The main new explicit data are the complete all-central-charge \(E_7\) and \(E_8\) OPE coefficients, including nonunitary theories and with all relative signs fixed. For completeness, we present coefficient tables for all three families, \(E_6\), \(E_7\), and \(E_8\). We summarize the comparison with the literature in Section~\ref{sec:s=1} and give a detailed discussion in Appendix~\ref{app:literature-comparison}.
	
	\subsection{Other aspects of rational CFTs}
	
	Complementary constructions use graphs, cells, module categories, and algebra objects. The relation between
	ADE data and Ocneanu cells organized boundary conditions, twisted amplitudes,
	and generalized chiral vertex operators
	\cite{Petkova:2000ip,PetkovaZuber:2001Ocneanu}.  Kirillov--Ostrik and Ostrik
	placed the ADE theories in the language of algebra objects, module categories,
	and weak Hopf algebras~\cite{KirillovOstrik:2002,Ostrik:2003}. The Fuchs--Runkel--Schweigert TFT approach constructs full rational-CFT correlators from a symmetric special Frobenius algebra in the modular tensor category of chiral data, expressing bulk, boundary, and bulk--boundary structure constants as invariants of ribbon graphs in three-manifolds \cite{Fuchs:2001triangulations,Fuchs:2002cm,Fuchs:2004xi}. The resulting correlators on oriented surfaces of arbitrary genus, with or without boundary, were proved to satisfy mapping-class-group invariance and bulk and boundary factorization \cite{Fjelstad:2005}.

	These constructions provide a consistent framework for rational CFTs. There is
	also a converse relevant to a direct bulk approach: a local rational bulk
	algebra with a unit, a nondegenerate invariant two-point pairing, one vacuum,
	and the prescribed ADE modular-invariant spectrum admits compatible boundary
	data. In the categorical formulation, the spectrum fixes the required quantum
	dimension, and the bulk algebra is the full center of a boundary algebra \cite[Theorems~3.4 and~3.22]{KongRunkel:2008Cardy}. Under these assumptions, every admissible ADE bulk solution has a compatible boundary realization. This converse does not by itself prove uniqueness for a fixed spectrum. Our approach computes the bulk OPE coefficients directly from bulk locality and crossing equations, without introducing boundary data.
	
	Within the conformal-net framework, Kawahigashi and Longo established the existence and uniqueness, up to isomorphism, of the full unitary $c<1$ conformal theories associated with the ADE modular invariants \cite{Kawahigashi:2002px,Kawahigashi:2003gi}. This gives an abstract uniqueness result for the corresponding bulk OPE data, up to field redefinitions, and includes all six unitary
	$E$-series models: $(A_{10},E_6),(E_6,A_{12}),(A_{16},E_7),(E_7,A_{18}),
	(A_{28},E_8),(E_8,A_{30})$. Their analysis is restricted to unitary theories and does not provide explicit formulas for the bulk OPE coefficients. Our aim is an explicit presentation of the bulk data in a fixed Virasoro normalization and operator-sign convention, suitable for direct correlation-function calculations. Our calculations supply explicit coefficients for both unitary and nonunitary $E$-series models. The assumptions and level of rigor of our arguments are stated below.

	\subsection{Summary of the result}
	For coprime integers $p,q>1$, the central charge is
	\begin{equation}
		\begin{split}
			c=1-\frac{6(p-q)^2}{pq}\,.
		\end{split}
	\end{equation}
	For the $E$-series minimal model, $q=12,18,30$ for $E_6,E_7,E_8$, respectively. We normalize the Virasoro primary operators $V_I$ by
	\begin{equation}\label{eq:normalization}
		\begin{split}
			\braket{V_I(z_1,\bar{z}_1)V_J(z_2,\bar{z}_2)}=\frac{\delta_{IJ}}{(z_2-z_1)^{2h_I}(\bar{z}_2-\bar{z}_1)^{2\bar{h}_I}}\,,
		\end{split}
	\end{equation}
	and fix the convention of OPE coefficients $C_{IJK}$ to be
	\begin{equation}\label{eq:ope_convention}
		V_I(0)V_J(z,\bar z)
		=\sum_K C_{IJK}
		z^{h_K-h_I-h_J}\bar z^{\bar h_K-\bar h_I-\bar h_J}
		\bigl(V_K(0)+\text{Virasoro descendants}\bigr).
	\end{equation}
	
	The finite seed computations support the conjecture that,\footnote{By ``finite seed computation'', we mean solving the conformal bootstrap equations in the $s=1$ sector of the theory. Here ``$s$'' refers to the second Kac index.} for each $E$-series minimal model, bulk locality and OPE associativity admit a unique solution $\{C_{IJK}\}$, modulo the operator-sign freedom \eqref{eq:gauge}. Section~\ref{sec:global_solution} proves that existence and uniqueness for the seed problem imply existence and uniqueness for the full conformal bootstrap problem.
	
	For an admissible triplet $[IJK]$, the candidate coefficients have the form
	\begin{equation}\label{eq:factorization}
		\begin{split}
			C_{IJK}&=\alpha_{IJK}C^{\mathrm{(ref)}}_{IJK},\\
			C^{\mathrm{(ref)}}_{IJK}
			&:={D}_{(r_I,s_I)(r_J,s_J)(r_K,s_K)}
			{D}_{(\bar r_I,s_I)(\bar r_J,s_J)(\bar r_K,s_K)}.
		\end{split}
	\end{equation}
	Here \((r_I,s_I)\) and \((\bar r_I,s_I)\) are the left- and right-moving
	Kac indices of the Virasoro primary operators, and \(D\) is a universal chiral generalized-minimal-model
	coefficient \cite{Gabai:2024qum}.  All exceptional information is contained in the reduced coefficients \(\alpha_{IJK}\), and we will provide explicit expressions for them below.
	For fixed \(q\), this factor is independent of \(s_I,s_J,s_K\) and obeys
	\begin{equation}\label{eq:periodic_phase}
		\begin{split}
			\alpha_{IJK}\big|_{p=kq+w}
			&=\left(\varphi_{IJK}\right)^k\alpha_{IJK}^{(w)},\\
			1&\leq w<q,\qquad \gcd(w,q)=1.
		\end{split}
	\end{equation}
	Writing \(\ell_I=(r_I-1)/2\) and
	\(\bar\ell_I=(\bar r_I-1)/2\), the phase is
	\begin{equation}\label{def:phi}
		\begin{split}
			\varphi_{IJK}
			&=i^{U(\ell_I,\ell_J,\ell_K)
				+U(\bar\ell_I,\bar\ell_J,\bar\ell_K)}
			(-1)^{V(\ell_I,\ell_J,\ell_K)
				+V(\bar\ell_I,\bar\ell_J,\bar\ell_K)},\\
			U(a,b,c)&=2(ab+ac+bc)-(a^2+b^2+c^2),\\
			V(a,b,c)&=
			\begin{cases}
				a+b+c,&a,b,c\in\mathbb Z,\\
				0,&\text{otherwise}.
			\end{cases}
		\end{split}
	\end{equation}
	Thus the full \(p\)-dependence is reconstructed from finitely many residue
	classes.  Complete tables of $\alpha_{IJK}$ in a specified operator-sign gauge are given in
	Section~\ref{sec:summary_alpha_s1}.
	
	\par\medskip
	\noindent\textbf{Notation and reconstruction.}
	We list the main notation used in this paper: 
	\begin{center}
	\begin{tabular}{p{0.45\textwidth}p{0.48\textwidth}}
	\toprule
	Symbol & Meaning \\
	\midrule
	$(p,q)$ & Coprime parameters, $p,q>1$ \\[0.5em]
	$(r,s),(\bar r,s)$ & Left and right Kac labels, with odd $s$ \\[0.5em]
	$\ell=\frac{r-1}{2}$, $\bar\ell=\frac{\bar{r}-1}{2}$ & First-label quantum-group spins \\[0.5em]
	$I=(\ell,\bar\ell,s)$ & Full Virasoro-primary label \\[0.5em]
	$S_I=h_I-\bar h_I$ & Integer physical spin \\[0.5em]
	$w$ & Reference residue; $w=1$ is formal, not a physical $p=1$ model \\[0.5em]
	$q_+,q_-$ & quantum-group parameters \\
	\bottomrule
	\end{tabular}
	\end{center}
	To reconstruct an OPE coefficient, first choose the odd-$s$ Kac representatives
	of Section~\ref{sec:bootstrap}. Check spectrum membership and both truncated
	chiral fusion rules. A triplet forbidden by chiral fusion has coefficient
	zero. For an admissible triplet, use the following steps in the stated sign gauge:
	\begin{enumerate}
		\item Choose a reference residue $w$ by writing
		\begin{equation}\label{eq:p_decomp}
			p=kq+w\quad\text{or}\quad p=kq-w,
		\end{equation}
		with $1\leq w\leq q/2$ and $\gcd(w,q)=1$.
		\item Suppress the second Kac labels in the reduced coefficient and write
		$(r,s),(\bar r,s)\mapsto(\ell,\bar\ell)$, with
		$\ell=(r-1)/2$ and $\bar\ell=(\bar r-1)/2$.
		\item Use the normalized identity couplings ($\alpha_{AA\id}=1$) and the simple-current conventions
		of Appendix~\ref{app:simple_current_conventions} when they apply. Otherwise,
		use permutations and simple-current transformations to reach a representative
		in Tables~\ref{tab:E6-I1-results}--\ref{tab:E8-I13-results}. Apply the established
		zero selection rules to coefficients excluded from the nontrivial representatives.
		\item Read off $\alpha^{(w)}$, restore all permutation and simple-current phases,
		and multiply by $(\varphi_{IJK})^k$. For $p=kq-w$, use the complex conjugate
		of the fully reconstructed $w$-seed coefficient before applying this $k$-dependent phase.
		\item Multiply by the two chiral $D$ factors in Eq.~\eqref{eq:factorization},
		using Eq.~\eqref{ope_chiral_general:final} and the correlated analytic-continuation
		prescription in Appendix~\ref{app:fusion_kernel}.
	\end{enumerate}
	For example, take $(p,q)=(11,12)$ and
	\[
	I=(0,3,3),\qquad J=(\tfrac32,\tfrac32,3),\qquad
	K=(\tfrac32,\tfrac72,3)
	\]
	in terms of $(\ell,\bar\ell,s)$. All three fields occur in the spectrum, and both chiral fusion rules hold.
	Since $11=12-1$, use $w=1$ and $k=1$. Entry~3 of
	Table~\ref{tab:E6-I1-results} gives $\alpha^{(1)}_{IJK}=(3/2)^{1/4}$.
	The two $U$ contributions in Eq.~\eqref{def:phi} are $0$ and $17$, and both
	$V$ contributions vanish, so $\varphi_{IJK}=i$. Consequently,
	\[
	\alpha_{IJK}=i\left(\frac32\right)^{1/4},\qquad
	\alpha_{JIK}=-i\left(\frac32\right)^{1/4},
	\]
	where the second equality uses spins $(S_I,S_J,S_K)=(-2,0,-5)$ and the odd-permutation
	rule in \eqref{eq:permutation_alpha}. The corresponding physical coefficient is therefore
	\[
	C_{JIK}=-i\left(\frac32\right)^{1/4}
	D_{(1,3)(4,3)(4,3)}D_{(7,3)(4,3)(8,3)}.
	\]
	Substituting the values of $D$ obtained with the analytic-continuation convention in Appendix \ref{app:fusion_kernel} gives
	\begin{align*}
		D_{(1,3)(4,3)(4,3)}&=-0.2504182769499117081\ldots,\\
		D_{(7,3)(4,3)(8,3)}&=\phantom{-}0.3906718917009970470\ldots.
	\end{align*}
	Thus the fully reconstructed physical coefficient is
	\begin{equation}\label{eq:worked-E6-coefficient}
		C_{JIK}=0.1082682216082907206\ldots\,i.
	\end{equation}
	This value uses the correlated branches of Appendix \ref{app:fusion_kernel}; evaluating the square roots independently on their principal branches can give inconsistent signs.
	
	This example combines a nontrivial second Kac label, spinning primaries,
	complex conjugation of the reference residue, and a permutation sign.
	
	\subsection{Outline of the paper}
	The paper is organized as follows.
	Section~\ref{sec:bootstrap} introduces the $E$-series spectra and
	fusion-matrix conventions and derives the reduced bulk crossing equations.
	Section~\ref{sec:s=1} presents the finite $s=1$ bootstrap calculation,
	the resulting coefficient tables, and their numerical verification.
	Section~\ref{sec:global_solution} proves that existence and uniqueness
	for all Kac labels reduce to the corresponding questions for the exact
	finite seed problem.
	Section~\ref{sec:conclusion} discusses unitarity, parity and other
	symmetries, Galois conjugation, and directions for future work.

	\section{\texorpdfstring{$E$-series minimal models and reduced crossing}{E-series minimal models and reduced crossing}}\label{sec:bootstrap}
	
	\subsection{Spectrum and Kac-table conventions}
	
	We use the standard minimal-model conventions
	\begin{equation}
		c=1-\frac{6(p-q)^2}{pq},
		\qquad
		h_{r,s}=\frac{(pr-qs)^2-(p-q)^2}{4pq},
	\end{equation}
	with coprime integers $p,q>1$, \(1\leq r\leq q-1\), \(1\leq s\leq p-1\), and the Kac-table
	identification
	\begin{equation}
		(r,s)\sim(q-r,p-s).
	\end{equation}
	Because \(q\in\{12,\,18,\,30\}\) for the $E$ series and \(\gcd(p,q)=1\), every $E$-series minimal model has odd \(p\).  We choose the representative with odd
	second Kac index \(s\).  The torus partition functions of the $E$-series minimal models are
	\cite{Cappelli:1986hf,Cappelli:1987xt}
	\begin{equation}\label{def:model}
		\begin{split}
			Z_{(A_{p-1}, E_6)} &= \sum_{\substack{s=1\\s\text{ odd}}}^{p-1}\Big{(}|\chi_{1,s} + \chi_{7,s}|^2 + |\chi_{4,s} + \chi_{8,s}|^2 + |\chi_{5,s} + \chi_{11,s}|^2\Big{)} \\
			Z_{(A_{p-1}, E_7)} &= \sum_{\substack{s=1\\s\text{ odd}}}^{p-1}\Big{(}|\chi_{1,s} + \chi_{17,s}|^2 + |\chi_{5,s} + \chi_{13,s}|^2 + |\chi_{7,s} + \chi_{11,s}|^2+|\chi_{9,s}|^2 \\
			&\qquad\qquad+\chi_{9,s}(\bar{\chi}_{3,s}+\bar{\chi}_{15,s}) +(\chi_{3,s}+\chi_{15,s})\bar{\chi}_{9,s}\Big{)} \\
			Z_{(A_{p-1}, E_8)} &= \sum_{\substack{s=1\\s\text{ odd}}}^{p-1}\Big{(}|\chi_{1,s} + \chi_{11,s} + \chi_{19,s} + \chi_{29,s}|^2 + |\chi_{7,s} + \chi_{13,s} + \chi_{17,s} + \chi_{23,s}|^2\Big{)} \\
		\end{split}
	\end{equation}
	Each product of holomorphic and antiholomorphic characters in
	Eq.~\eqref{def:model} corresponds to a Virasoro primary.  We label it by
	\begin{equation}
		I=(\ell,\bar\ell,s),
		\qquad r=2\ell+1,
		\qquad \bar r=2\bar\ell+1,
	\end{equation}
	where the equality of the left and right second Kac indices is an important property of the \(E\)-series spectra.
	
	For later use, define
	\begin{equation}\label{def:Rm}
		\mathcal R_m(a,b)=
		\bigl\{|a-b|+1,\ |a-b|+3,\ldots,
		\min(a+b-1,2m-a-b-1)\bigr\}.
	\end{equation}
	The chiral Virasoro fusion rules allow \((r_3,s_3)\) in the product of
	\((r_1,s_1)\) and \((r_2,s_2)\) precisely when \cite{DiFrancesco:1997nk}
	\begin{equation}\label{chiral_fusion_rule}
		r_3\in\mathcal R_q(r_1,r_2),
		\qquad
		s_3\in\mathcal R_p(s_1,s_2).
	\end{equation}
	In our $E$-series notation, these conditions read
	\begin{equation}
		\begin{split}
			&\ell_3\in\{|\ell_1-\ell_2|,\,|\ell_1-\ell_2|+1,\,\ldots\,,\min(\ell_1+\ell_2,q-2-\ell_1-\ell_2)\}\,, \\
			&\bar\ell_3\in\{|\bar\ell_1-\bar\ell_2|,\,|\bar\ell_1-\bar\ell_2|+1,\,\ldots\,,\min(\bar\ell_1+\bar\ell_2,q-2-\bar\ell_1-\bar\ell_2)\}\,, \\
			&s_3\in\mathcal{R}_p(s_1,s_2)\,. \\
		\end{split}
	\end{equation}
	We call a full triplet \((I,J,K)\) \emph{admissible} when the left and right
	chiral fusion rules are both satisfied and all three full fields occur in the
	spectrum.  OPE coefficients for triplets forbidden by chiral fusion are zero.
	
	\subsection{Locality and OPE associativity}
	
	With the OPE convention \eqref{eq:ope_convention}, the normalization condition \eqref{eq:normalization} gives
	\begin{equation}
		\begin{split}
			C_{II\id}=1\,,
		\end{split}
	\end{equation}
	where $\id$ denotes the identity operator.
	
	We work in Euclidean signature. Locality means that correlation functions
	of local bosonic fields are single-valued away from coincident insertions
	and symmetric under simultaneous exchange of a field and its position:
	\begin{equation}\label{eq:locality_condition}
		\big\langle\cdots V_I(x)V_J(y)\cdots\big\rangle
		=\big\langle\cdots V_J(y)V_I(x)\cdots\big\rangle,
		\qquad x\neq y.
	\end{equation}
	This is a statement about Euclidean correlation functions. Applying it to
	the OPE convention \eqref{eq:ope_convention} gives
	\begin{equation}\label{eq:permutation_odd}
		C_{IJK}=(-1)^{S_I+S_J+S_K}C_{JIK},
	\end{equation}
	where \(S_I=h_I-\bar h_I\in\mathbb Z\) is the spin.
	
	We impose OPE associativity,
	\begin{equation}\label{eq:ope_associativity}
		\begin{split}
			\left(V_I(x)V_J(y)\right)V_K(z)=V_I(x)\left(V_J(y)V_K(z)\right)\,.
		\end{split}
	\end{equation}
	On each side, we first apply the OPE \eqref{eq:ope_convention} to the pair of operators in parentheses, and compare the resulting correlation functions.
	
	Taking expectation values on both sides and using \eqref{eq:ope_convention} gives
	\begin{equation}\label{eq:permutation_even}
		C_{IJK}=C_{JKI}\,.
	\end{equation}
	
	Thus $C_{IJK}$ is invariant under even permutations and transforms according to Eq.~\eqref{eq:permutation_odd} under odd permutations:
	\begin{equation}\label{eq:permutation}
		\begin{split}
			C_{IJK}=C_{JKI}=(-1)^{S_I+S_J+S_K}C_{JIK}\,.
		\end{split}
	\end{equation}

	\subsection{Crossing equations and the fusion matrix}
	
	OPE associativity, \((V_IV_J)V_K=V_I(V_JV_K)\), is equivalent to the crossing
	equations for four-point functions, schematically,
	\begin{equation}\label{eq:crossing-diagram}
		\resizebox{0.72\columnwidth}{!}{%
			\begin{tikzpicture}[
				baseline=-0.5ex,
				line/.style={line width=0.6pt,line cap=round},
				vertex/.style={circle,fill=black,inner sep=1.1pt},
				lab/.style={font=\scriptsize}
				]
				
				\node at (-3,0) {$\displaystyle\sum_M$};
				
				\coordinate (sl) at (-2,0);
				\coordinate (sr) at (-1,0);
				
				\draw[line] (sl) -- ++(-0.55, 0.55)
				node[lab,anchor=south east] {$I$};
				\draw[line] (sl) -- ++(-0.55,-0.55)
				node[lab,anchor=north east] {$J$};
				\draw[line] (sr) -- ++( 0.55,-0.55)
				node[lab,anchor=north west] {$K$};
				\draw[line] (sr) -- ++( 0.55, 0.55)
				node[lab,anchor=south west] {$L$};
				
				\draw[line] (sl) --
				node[lab,midway,above] {$M$} (sr);
				
				\node[vertex] at (sl) {};
				\node[vertex] at (sr) {};
				
				\node at (0,0) {$=$};
				
				\node at (0.48,0) {$\displaystyle\sum_N$};
				
				\coordinate (tt) at (1.55, 0.38);
				\coordinate (tb) at (1.55,-0.38);
				
				\draw[line] (tt) -- ++(-0.60, 0.42)
				node[lab,anchor=south east] {$I$};
				\draw[line] (tt) -- ++( 0.60, 0.42)
				node[lab,anchor=south west] {$L$};
				\draw[line] (tb) -- ++(-0.60,-0.42)
				node[lab,anchor=north east] {$J$};
				\draw[line] (tb) -- ++( 0.60,-0.42)
				node[lab,anchor=north west] {$K$};
				
				\draw[line] (tt) --
				node[lab,midway,right] {$N$} (tb);
				
				\node[vertex] at (tt) {};
				\node[vertex] at (tb) {};
				
			\end{tikzpicture}%
		}
	\end{equation}
	The precise bootstrap equation reads
	\begin{equation}\label{eq:crossing}
		\begin{split}
			\sum_{M}C_{IJM}C_{KLM}F_{IJKL}^{(s),M}=\sum_{N}C_{JKN}C_{LIN}F_{IJKL}^{(t),N}\,.
		\end{split}
	\end{equation}
	Here \(F_{IJKL}^{(s),M}\) and \(F_{IJKL}^{(t),N}\) are the nonchiral
	\(s\)- and \(t\)-channel Virasoro blocks.  They are products of holomorphic
	and antiholomorphic blocks,
	\begin{equation}
		\begin{split}
			&F_{IJKL}^{(s/t),M}(z_i,\bar{z}_i)=\mathcal{F}_{(r_i,s_i)}^{(s/t),(r_M,s_M)}(z_i)\mathcal{F}_{(\bar{r}_i,\bar{s}_i)}^{(s/t),(\bar{r}_M,\bar{s}_M)}(\bar{z}_i)\,, \\
		\end{split}
	\end{equation}
	where the subscript \(i\) collectively denotes the labels and coordinates of
	the four external operators. 
	
	We solve Eq.~\eqref{eq:crossing} using the fusion transformation of chiral
	Virasoro blocks~\cite{Moore:1988uz,Moore:1988qv},
	\begin{equation}\label{eq:fusion_transf}
		\begin{split}
			\mathcal{F}_{(r_i,s_i)}^{(s),(r_5,s_5)}(z_i)=\sum_{6}\left[\mathcal{K}_{(r_i,s_i)}\right]_{56}\mathcal{F}_{(r_i,s_i)}^{(t),(r_6,s_6)}(z_i)\,,
		\end{split}
	\end{equation}
    The fusion matrix $\left[\mathcal{K}_{(r_i,s_i)}\right]_{56}$ depends on six pairs of Kac labels, $(r_i,s_i)$ with $i=1,\ldots,6$. The chiral Virasoro blocks solve the BPZ equations \cite{Belavin:1984vu}; Eq.~\eqref{eq:fusion_transf} expresses their $s$-channel basis in the $t$-channel basis. The nontrivial statement is that $\mathcal{K}$ is given by a normalized product of the \(6j\) symbols of
	\(U_{q_+}(\mathfrak{sl}_2)\otimes U_{q_-}(\mathfrak{sl}_2)\):
	\begin{equation}\label{fusion_kernel}
		\begin{split}
			\left[\mathcal{K}_{(r_i,s_i)}\right]_{56}=\frac{D_{236}D_{416}}{D_{125}D_{345}}\,\sixj{\ell^+_1}{\ell^+_2}{\ell^+_5}{\ell^+_3}{\ell^+_4}{\ell^+_6}_{q_+}
			\sixj{\ell^-_1}{\ell^-_2}{\ell^-_5}
			{\ell^-_3}{\ell^-_4}{\ell^-_6}_{q_-}\,. \\
		\end{split}
	\end{equation}
	Here
	\(D_{ijk}\equiv D_{(r_i,s_i)(r_j,s_j)(r_k,s_k)}\) are chiral
	generalized-minimal-model coefficients~\cite{Gabai:2024qum}, and
	\begin{equation}
		\begin{split}
			r_i&=2\ell^+_i+1,\qquad
			\bar r_i=2\bar\ell^+_i+1,\\
			s_i&=2\ell^-_i+1,\qquad
			\bar s_i=2\bar\ell^-_i+1,\\
			q_+&=e^{i\pi p/q}\,,\qquad q_-=e^{i\pi q/p}\,.
		\end{split}
	\end{equation}
	The brackets $\{\ldots\}_{q_{\pm}}$ are the quantum-group $6j$ symbols, whose expressions are as follows. Define
	the two finite lists
	\begin{align*}
		\mathcal A={}&(\ell_1+\ell_2+\ell_5,\,
		\ell_3+\ell_4+\ell_5,\,
		\ell_1+\ell_4+\ell_6,\,
		\ell_2+\ell_3+\ell_6),\\
		\mathcal B={}&(\ell_1+\ell_2+\ell_3+\ell_4,\,
		\ell_1+\ell_3+\ell_5+\ell_6,\,
		\ell_2+\ell_4+\ell_5+\ell_6).
	\end{align*}
	The $6j$ symbol of $U_{q}(\mathfrak{sl}_2)$ is then given by \cite{Kirillov:1989repr}
	\begin{equation}\label{def:sixj}
		\begin{aligned}
			\sixj{\ell_1}{\ell_2}{ \ell_5}{ \ell_3}{ \ell_4}{ \ell_6}_{q}&=
			(-1)^{\ell_1+\ell_2+\ell_3+\ell_4}
			\sqrt{[2\ell_5+1]_q[2\ell_6+1]_q} \\
			&\quad\times\Delta(\ell_1,\ell_2,\ell_5)
			\Delta(\ell_3,\ell_4,\ell_5) \Delta(\ell_1,\ell_4,\ell_6)
			\Delta(\ell_2,\ell_3,\ell_6)\\
			&\quad \times
			\sum_{z=z_{\min}}^{z_{\max}}
			\frac{(-1)^z[z+1]_q!}
			{\displaystyle
				\prod_{a\in\mathcal A}[z-a]_q!\,
				\prod_{b\in\mathcal B}[b-z]_q!}\, .
		\end{aligned}
	\end{equation}
	where
	\begin{equation}
		z_{\min}=\max\mathcal A,\qquad z_{\max}=\min\mathcal B.
	\end{equation}
	For an admissible sextuple $(\ell_1,\ell_2,\ell_3,\ell_4,\ell_5,\ell_6)$, these endpoints are integers and every factorial
	argument in the sum is a nonnegative integer.  Furthermore,
	\begin{equation}
		\Delta(a,b,c)
		=
		\sqrt{
			\frac{
				[-a+b+c]_q!\,
				[a-b+c]_q!\,
				[a+b-c]_q!
			}{
				[a+b+c+1]_q!
			}
		}\, .
	\end{equation}
	The $q$-numbers and $q$-factorials are defined by\footnote{The ``$q$'' used for $6j$ symbols should not be confused with the ``$q$'' that labels the minimal models $\mathcal{M}(p,q)$.}
	\begin{equation}\label{def:qnum}
		\begin{split}
			[n]_q:=\frac{q^n-q^{-n}}{q-q^{-1}}\,,\quad [n]_q!:=\prod_{k=1}^{n}[k]_q\,.
		\end{split}
	\end{equation}
	
	The relationship between fusion matrices and quantum-group $6j$ symbols
	was developed through two complementary approaches: Coulomb-gas constructions \cite{Petkova:1988yf,Felder:1989wv,Furlan:1989ra,Hou:1990gq,Gomez:1990er,Gomez:1989sw} and the analysis of polynomial consistency equations for fusion and braiding
	\cite{Moore:1988uz,Moore:1988qv,Alvarez-Gaume:1988bek,Alvarez-Gaume:1988izd,Alvarez-Gaume:1989blj}.
	Our notation, normalization, and square-root conventions for
	Eq.~\eqref{fusion_kernel} are specified in
	Appendix~\ref{app:fusion_kernel}.

	Now we return to the bootstrap problem. All divisions by chiral reference factors below are restricted to
	triplets satisfying the truncated fusion rules. For coprime physical
	parameters $p,q>1$, the reference factors for such triplets are finite
	and nonzero: the truncated factorial ranges avoid vanishing $q$-numbers,
	and the corresponding Gamma and $Y$ products have no poles or zeros in
	these ranges; see Appendix~\ref{app:fusion_kernel}. Thus the normalization
	\eqref{eq:factorization} and the divisions in \eqref{fusion_kernel} are
	well-defined on admissible channels. This does not assert nonvanishing
	of the physical coefficient $C_{IJK}$, whose additional zeros are carried
	by $\alpha_{IJK}$.
	
	Using Eq.~\eqref{fusion_kernel}, the bootstrap equation \eqref{eq:crossing} is
	equivalent to
	\begin{equation}\label{eq:reduced-crossing}
		\begin{split}
			&\sum_{I_5}
			\alpha_{I_1I_2I_5}\,\alpha_{I_3I_4I_5}
			\sixj{\ell_1}{\ell_2}{\ell_5}{\ell_3}{\ell_4}{\ell_6}_{q_+}\sixj{\ell^-_1}{\ell^-_2}{\ell^-_5}
			{\ell^-_3}{\ell^-_4}{\ell^-_6}_{q_-}
			\sixj{\bar\ell_1}{\bar\ell_2}{\bar\ell_5}
			{\bar\ell_3}{\bar\ell_4}{\bar\ell_6}_{q_+}
			\sixj{\bar\ell^-_1}{\bar\ell^-_2}{\bar\ell^-_5}
			{\bar\ell^-_3}{\bar\ell^-_4}{\bar\ell^-_6}_{q_-} \\
			&=
			\begin{cases}
				\alpha_{I_2I_3I_6}\,\alpha_{I_4I_1I_6},& s_6=\bar{s}_6\ \text{and}\ I_6\in\text{the spectrum},\\
				0,&\text{otherwise},
			\end{cases}\\
			I_i&=(\ell_i,\bar\ell_i,s_i)\,,\ s_i=2\ell_i^-+1\,,\ \bar{s}_i=2\bar\ell_i^-+1\,.
		\end{split}
	\end{equation}
	Here \(\alpha_{IJK}\) is defined by Eq.~\eqref{eq:factorization}.  For every
	physical external or \(s\)-channel field in an \(E\)-series spectrum,
	\(s_i=\bar s_i\).  After applying the two chiral fusion
	transformations, however, the trial \(t\)-channel labels \(s_6\) and
	\(\bar s_6\) are independent: the vanishing RHS for $s_6\neq\bar{s}_6$ also constrains the OPE coefficients.
	
	In the $E$-series minimal models, the reference OPE coefficients
	\begin{equation}
		C^\text{(ref)}_{IJK}\equiv D_{(r_I,s_I)(r_J,s_J)(r_K,s_K)}
		D_{(\bar r_I,s_I)(\bar r_J,s_J)(\bar r_K,s_K)}
	\end{equation}
	are symmetric under permutations of the indices. Indeed, the chiral reference coefficient $D_{(r_I,s_I)(r_J,s_J)(r_K,s_K)}$ satisfies the permutation properties
	\begin{equation}
		\begin{split}
			&D_{(r_I,s_I)(r_J,s_J)(r_K,s_K)}=D_{(r_J,s_J)(r_K,s_K)(r_I,s_I)} \\
			&=(-1)^{(r_I-1)(s_J-1)}i^{(r_I-1)(s_I-1)+(r_J-1)(s_J-1)+(r_K-1)(s_K-1)}\,D_{(r_J,s_J)(r_I,s_I)(r_K,s_K)}\,.
		\end{split}
	\end{equation}
	Then the invariance of $C^\text{(ref)}_{IJK}$ under even permutations is immediate, while the invariance under the odd permutations follows by direct calculation:
	\begin{equation}
		\begin{split}
			\frac{C^\text{(ref)}_{IJK}}{C^\text{(ref)}_{JIK}}=(-1)^{(r_I+\bar{r}_I-2)(s_J-1)}i^{(r_I+\bar{r}_I-2)(s_I-1)+(r_J+\bar{r}_J-2)(s_J-1)+(r_K+\bar{r}_K-2)(s_K-1)}\,.
		\end{split}
	\end{equation}
	On the right-hand side, the $(-1)$ factor is equal to one, because $s_J$ is always odd. For the $i$ factor, $r_I+\bar{r}_I$ is even and $s_I$ is odd, so the exponent is a multiple of $4$. Thus the right-hand side is equal to one, showing the invariance under odd permutations.
	
	Consequently, $\alpha_{IJK}$
	inherits the permutation properties of $C_{IJK}$:
	\begin{equation}\label{eq:permutation_alpha}
		\boxed{
			\alpha_{IJK}=\alpha_{KIJ}=(-1)^{S_I+S_J+S_K}\alpha_{JIK}\,.
		}
	\end{equation}
	Moreover, $C^\text{(ref)}_{II\id}=1$, which under our normalization
	implies $\alpha_{II\id}=1$.
	
	The $q_+$ $6j$ symbols are independent of the second Kac
	indices, whereas all $s_i$ dependence is confined to the
	$q_-$ factors. Moreover, the parity of the spin is independent
	of $s$:
	\begin{equation}
		S_I
		=
		\frac{(\ell-\bar\ell)
			(\ell+\bar\ell+1)p}{q}-(\ell-\bar\ell)s\,.
		\label{eq:spin}
	\end{equation}
	These observations motivate an $s_i$-independent extension of the reduced OPE
	coefficients. Section~\ref{sec:s=1} constructs candidate seed data ($s=1$); the subsequent lifting argument applies to any exact seed solution. 
	
	The concrete problem is therefore to solve the bootstrap equations \eqref{eq:reduced-crossing}, subject to the
	permutation constraints \eqref{eq:permutation_alpha} and the normalization
	conditions $\alpha_{II\id}=1$.

	\section{\texorpdfstring{Solving the $s=1$ sector}{Solving the s=1 sector}}\label{sec:s=1}
	
	The operators with $s=1$ form a closed OPE subalgebra, so their bootstrap
	equations can be analyzed independently of the remaining operators. In this case, \eqref{eq:reduced-crossing} simplifies to
		\begin{equation}\label{eq:reduced-crossing_s=1}
		\boxed{\begin{aligned}
			&\sum_{I_5}
			\alpha_{I_1I_2I_5}\,\alpha_{I_3I_4I_5}
			\sixj{\ell_1}{\ell_2}{\ell_5}{\ell_3}{\ell_4}{\ell_6}_{q_+}
			\sixj{\bar\ell_1}{\bar\ell_2}{\bar\ell_5}
			{\bar\ell_3}{\bar\ell_4}{\bar\ell_6}_{q_+} \\
			&=
			\begin{cases}
				\alpha_{I_2I_3I_6}\,\alpha_{I_4I_1I_6},&  I_6\in\text{the spectrum},\\
				0,&\text{otherwise},
			\end{cases}
		\end{aligned}}
	\end{equation}
	Here $I_i=(\ell_i,\bar\ell_i,1)$ for all the labels.
	
	The simplified bootstrap equations \eqref{eq:reduced-crossing_s=1} also appear in the
	$E$-type $SU(2)$ WZW CFTs \cite{DiFrancesco:1988sc,Douglas:1988rv}. There, the Virasoro algebra is extended to the Kac-Moody algebra $\widehat{\mathfrak{su}}(2)_{10}$, $\widehat{\mathfrak{su}}(2)_{16}$, $\widehat{\mathfrak{su}}(2)_{28}$ for $E_6$, $E_7$, $E_8$, respectively; the corresponding quantum-group parameters are \cite{Gaberdiel:1994vv,Rehren:1994jc}
	\begin{equation}
		\begin{split}
			q_\text{WZW}=\begin{cases}
				e^{i\pi/{12}} & \text{for }E_6,\ SU(2)_{10}\,, \\
				e^{i\pi/{18}} & \text{for }E_7,\ SU(2)_{16}\,, \\
				e^{i\pi/{30}} & \text{for }E_8,\ SU(2)_{28}\,. \\
			\end{cases}
		\end{split}
	\end{equation}
	The comparison concerns reduced crossing equations, not equality of the
	minimal-model and WZW CFTs or their central charges. The field-label map
	is $r=2\ell+1$, $\bar r=2\bar\ell+1$, with the WZW affine spins
	$(\ell,\bar\ell)$ and the minimal-model second labels fixed to $s=1$.
	Writing $q_{\rm WZW}=e^{i\pi/q}$, our parameter is $q_+=(q_\text{WZW})^p$.
	At the formal residue $p=w=1$, the normalized $6j$ symbols coincide with
	those entering the WZW equations. To compare constants, one must also
	divide out the respective chiral coupling factors and match the
	conformal-block and operator-sign conventions, as in
	Eq.~\eqref{eq:factorization}. Raw OPE coefficients are not identified
	by this comparison. The phase and conjugation transformations below
	relate other representatives of a residue class; the remaining
	coprime residues require the corresponding root-of-unity evaluations.
	Our calculation includes all these residues, encompassing both unitary
	and nonunitary minimal models.
	
	The main part of this section is a numerical calculation. It produces candidate algebraic solutions of the seed
	bootstrap equations and supports their uniqueness after fixing the
	operator-sign freedom. We do not supply an exact algebraic certificate
	or certified numerical error bounds for this finite step.
	
	\subsection{\texorpdfstring{Finite reduction in \(p\)}{Finite reduction in p}}\label{subsec:finite_p_reduction}
	
	Although each $E$-series family contains infinitely many minimal models, solving the $s=1$ sector requires only a finite computation, for the following reasons. First, by \eqref{def:model}, the number of $s=1$
	operators depends only on $q$ and is therefore fixed within each family: the
	$E_6$, $E_7$, and $E_8$ series contain $12$, $17$, and $32$ such operators,
	respectively. Second, the $q_-$ $6j$ symbols equal unity in this sector, so only
	the $q_+$ symbols contribute (as we have seen in \eqref{eq:reduced-crossing_s=1}). Third, the latter are periodic in the phase of
	the quantum-group parameter. Writing
	\begin{equation}\label{eq:thetap}
		q_+=e^{i\theta_+},
		\qquad
		\theta_+=\frac{\pi p}{q},
	\end{equation}
	the complete normalized $6j$ symbol has the periodicity
	\begin{equation}
		\theta_+\longmapsto\theta_++2\pi,
		\qquad\text{equivalently}\qquad
		p\longmapsto p+2q.
	\end{equation}
	Although separate square roots can change sign under this continuation,
	their monodromies cancel in Eq.~\eqref{def:sixj}. To see this, we count the number of $\Delta$ factors in \eqref{def:sixj} containing
	two half-integer spins. Each such $\Delta$ contributes a minus sign, whereas a $\Delta$ factor with three integer spins does not. Since each half-integer spin appears in two $\Delta$ factors, the product of the signs from all $\Delta$ factors is 
	$$(-1)^{2L}\,,\quad L=\sum_{i=1}^{6}\ell_i$$
	The Virasoro fusion rule makes
	$\ell_1+\ell_2+\ell_3+\ell_4\in\mathbb{Z}$, and therefore $2L\equiv 2\ell_5+2\ell_6\pmod 2$.
	The square roots in the first line of \eqref{def:sixj} contribute $(-1)^{2\ell_5+2\ell_6}$, cancelling
	the product of the $\Delta$ signs. On the other hand, the rational Racah sum is unchanged.
	
	Therefore, it suffices to consider the finitely many potentially inequivalent values
	\begin{equation}\label{eq:finite_cases}
		\begin{split}
			p&=kq+w, \\
			k&=0,1,
			\quad
			1\leqslant w<q,
			\quad
			\gcd(w,q)=1.
		\end{split}
	\end{equation}
	
	We next reduce the number of reference cases based on the properties of the $6j$ symbols and the $E$-series spectra.
	
	\subsubsection{\texorpdfstring{From $(p,q)$ to $(p+q,q)$}{From (p,q) to (p+q,q)}}
	For $p=kq+w$, the $k$ dependence of the $6j$ symbols can be absorbed into a phase described in \eqref{def:phi}. To see this, consider the phase change of the $6j$ symbol under
	$$(p,q)\rightarrow (p+q,q)$$
	which is equivalent to
	$$\theta_+\rightarrow\theta_++\pi$$
	in \eqref{eq:thetap}. For each $q$-number, the phase change is
	\begin{equation}
		\begin{split}
			[n]_q\rightarrow e^{-(n-1)\pi i}\,[n]_q\,.
		\end{split}
	\end{equation}
	So the phase changes of the prefactors in \eqref{def:sixj} are given by
	\begin{equation}
		\begin{split}
			\sqrt{[2\ell+1]_q}&\rightarrow i^{-2\ell}\sqrt{[2\ell+1]_q}\,, \\
			\Delta(a,b,c)&\rightarrow i^{(a+b+c)(1-a-b-c)+4(ab+ac+bc)}\Delta(a,b,c)
		\end{split}
	\end{equation}
	For the phase change of the factor
	$$\frac{(-1)^z[z+1]_q!}
	{\displaystyle
		\prod_{a\in\mathcal A}[z-a]_q!\,
		\prod_{b\in\mathcal B}[b-z]_q!}$$
	the $z$-dependent part is
	$$(-1)^{3z^2-z-4(\ell_1+\ell_2+\ell_3+\ell_4+\ell_5+\ell_6)z}=(-1)^{3z^2-z}=1$$
	so it is actually $z$-independent:
	\begin{equation}\label{eq:phase_z_term}
		\begin{split}
			\frac{(-1)^z[z+1]_q!}
			{\displaystyle
				\prod_{a\in\mathcal A}[z-a]_q!\,
				\prod_{b\in\mathcal B}[b-z]_q!}\rightarrow (-1)^{\left(\sum \ell_i\right)^2+\sum\ell_i^2}\frac{(-1)^z[z+1]_q!}
				{\displaystyle
				\prod_{a\in\mathcal A}[z-a]_q!\,
				\prod_{b\in\mathcal B}[b-z]_q!}
		\end{split}
	\end{equation}
	
	When all $\ell_i$ are integers, the phase factor in \eqref{eq:phase_z_term} is equal to $1$. Then the total phase change of the $6j$ symbol is 
	$$\frac{\psi(\ell_1,\ell_2,\ell_5)\psi(\ell_3,\ell_4,\ell_5)}{\psi(\ell_1,\ell_4,\ell_6)\psi(\ell_2,\ell_3,\ell_6)}\,,\quad \text{with }\psi(a,b,c)=i^{-(a+b+c)^2}\,.$$
	This can already be absorbed into redefinitions of $\alpha_{IJK}$. Multiplying the phase $\psi$ by the chiral part of $\varphi_{IJK}$ defined in \eqref{def:phi} gives
	\begin{equation}
		\begin{split}
			i^{-(a+b+c)^2+2(ab+ac+bc)-(a^2+b^2+c^2)}(-1)^{a+b+c}=(-1)^{a(1-a)+b(1-b)+c(1-c)}\,.
		\end{split}
	\end{equation}
	The right-hand side is an operator-sign ambiguity, which manifestly cancels out in the bootstrap equation. This justifies \eqref{eq:periodic_phase} for $E_7$- and $E_8$-series, where all the quantum-group spins are integers, as well as the integer-spin (odd $r$) sector of the $E_6$ series.
	
	When we include half-integer $\ell_i$'s in the $6j$ symbol, the above analysis fails. If we consider all the representations in the fusion ring $SU(2)_N$, which contains the representations with total spins
	$$\ell=0,\frac{1}{2},1,\ldots, \frac{N}{2}\,,$$
	then we cannot always find a function $\psi$ which is symmetric in its arguments, and such that the phase change of the $6j$ symbol is globally of the form\footnote{One obstruction is from the case $\ell_1=\ell_2=\ell_3=\ell_4=\ell$ and $\ell_5=\ell_6=0$, for which the $6j$ symbol is equal to $(-1)^{2\ell}/[2\ell+1]_q$. When $q\rightarrow-q$, the $6j$ symbol changes sign for half-integer $\ell$, but the ratio of identical $\psi(\ell,\ell,0)$ must equal $1$.}
	\begin{equation}
		\begin{split}
			\{6j\}\rightarrow \frac{\psi(\ell_1,\ell_2,\ell_5)\psi(\ell_3,\ell_4,\ell_5)}{\psi(\ell_1,\ell_4,\ell_6)\psi(\ell_2,\ell_3,\ell_6)}\times\{6j\}
		\end{split}
	\end{equation}
	We require $\psi(a,b,c)$ to be permutation invariant so that the redefinition preserves the permutation properties of $\alpha_{IJK}$. 
	
	However, the above requirement is stronger than we need. It suffices to impose the following weaker conditions on the full phase factors $\varphi_{IJK}$:
	\begin{itemize}
        \item $\varphi_{IJK}$ is permutation invariant.
		\item When all the $(\ell_i,\bar\ell_i)$ are in the spectrum, the phase change on the product of the $6j$ symbols in \eqref{eq:reduced-crossing_s=1} can be absorbed into $\varphi$:
		\begin{equation}
			\begin{split}
				\{6j\ \text{of}\ \ell\}_{q_+}\{6j\ \text{of}\ \bar\ell\}_{q_+}\rightarrow\frac{\varphi_{236}\varphi_{146}}{\varphi_{125}\varphi_{345}}\{6j\ \text{of}\ \ell\}_{q_+}\{6j\ \text{of}\ \bar\ell\}_{q_+}\,.
			\end{split}
		\end{equation}
		\item When all the $(\ell_i,\bar\ell_i)$ except for $(\ell_6,\bar\ell_6)$ are in the spectrum, the phase change of the product of $6j$ symbols in \eqref{eq:reduced-crossing_s=1} multiplying $\varphi_{125}\varphi_{345}$ does not depend on $(\ell_5,\bar\ell_5)$ in the sum.
	\end{itemize}
	We only need to search for such $\varphi_{IJK}$ for $E_6$. It turns out that the construction \eqref{def:phi} satisfies these requirements.\footnote{By enumerating all possible $(\ell_i,\bar\ell_i)$ ($i=1,2,3,4,5$) in the spectrum and all possible $(\ell_6,\bar\ell_6)$ allowed by the fusion rules, we have verified that $\varphi$ in \eqref{def:phi} compensates the phase change of the $6j$ symbols.}
	
	Therefore, the solution at $p=kq+w$ is related to the solution at $p=w$ by
	\begin{equation}
		\alpha_{IJK}\big|_{p=kq+w}
		=\left(\varphi_{IJK}\right)^k\alpha_{IJK}\big|_{p=w}\,.
	\end{equation}
	So it remains only to solve the $k=0$
	cases in \eqref{eq:finite_cases}.
	
	\subsubsection{\texorpdfstring{From $(p,q)$ to $(-p,q)$}{From (p,q) to (-p,q)}}
	Changing $p$ to $-p$ complex conjugates the $6j$ symbols. The
	solution at $(-p,q)$ is therefore obtained by complex conjugating the solution
	at $(p,q)$:
	\begin{equation}
		\alpha_{IJK}\big{|}_{(p,q)}\longrightarrow
		\alpha_{IJK}\big{|}_{(-p,q)}
		=\left(\alpha_{IJK}\big{|}_{(p,q)}\right)^*\,.
	\end{equation}
	This, together with the previous observations, reduces the reference residues to
	\begin{equation}
		\begin{split}
			1\leqslant w\leqslant q/2\,,\quad \gcd(w,q)=1\,.
		\end{split}
	\end{equation}
	There are thus nine reference cases: \(w=1,5\) for \(q=12\)
	(\(E_6\)); \(w=1,5,7\) for \(q=18\) (\(E_7\)); and
	\(w=1,7,11,13\) for \(q=30\) (\(E_8\)).
	\footnote{The symbol \(w\) labels a residue class used to evaluate the
		root-of-unity data.  A displayed reference value such as \(w=1\) is not
		itself asserted to define a physical \(p=1\) minimal model.} Section~\ref{subsec:galois} further relates the reference cases within each family by Galois conjugation. We nevertheless compute them separately to obtain independent checks.
	
	In principle, one could solve Eq.~\eqref{eq:reduced-crossing} symbolically in
	each case.\footnote{Additional extended-algebra or defect constraints may simplify an analytic solution, but we do not pursue that approach here. Our aim is to determine the bulk data from Virasoro crossing, locality, and the specified spectrum.} In practice, this is prohibitively expensive. We therefore make
	several further reductions that render the computation feasible on a laptop.
	
	\subsection{\texorpdfstring{$\mathbb{Z}_2$ simple current and triplet orbits}{Z2 simple current and triplet orbits}}\label{sec:triplet_orbit}
	
	The $E$-series minimal models contain the following nontrivial invertible
	$\mathbb{Z}_2$ primaries, which act as order-two simple currents
	\cite{Schellekens:1990xy,Nivesvivat:2025odb}:
	\begin{equation}
		\begin{split}
			E_6:\quad&
			V_{(5,5,1)},
			\\
			E_7:\quad&
			V_{(8,0,1)},\ 
			V_{(0,8,1)},\ 
			V_{(8,8,1)},
			\\
			E_8:\quad&
			V_{(14,0,1)},\ 
			V_{(0,14,1)},\ 
			V_{(14,14,1)}.
		\end{split}
	\end{equation}
	
	Let $\ell_{\mathbb{Z}_2}=5,8,14$ for $E_6,E_7,E_8$, respectively. Fusion with an order-two simple current acts by reflection on the first Kac label:
	\begin{equation}
		\begin{split}
			[\ell_{\mathbb{Z}_2}]\times[\ell]=[\ell_{\mathbb{Z}_2}-\ell]\,.
		\end{split}
	\end{equation}
	
	Together with the identity operator, these $\mathbb{Z}_2$ primaries form the fusion subalgebra $\mathbb{Z}_2$ for $E_6$-series, and $\mathbb{Z}_2\times\mathbb{Z}_2$ for $E_7$- and $E_8$-series.
	
	We now discuss OPE coefficients involving these $\mathbb{Z}_2$ operators. For convenience, we denote them by $(\ell_{\mathbb{Z}_2},\ell_{\mathbb{Z}_2})$ (diagonal), $(\ell_{\mathbb{Z}_2},0)$ (chiral) and $(0,\ell_{\mathbb{Z}_2})$ (antichiral). The second Kac label is fixed to $s=1$ and is suppressed below.
	
	\subsubsection{\texorpdfstring{Diagonal $\mathbb{Z}_2$ operator}{Diagonal Z2 operator}}
	
	Consider the four-point function
	\begin{equation}
		\begin{split}
			\braket{
				(\ell_{\mathbb{Z}_2},\ell_{\mathbb{Z}_2})
				(\ell_1,\bar\ell_1)
				(\ell_2,\bar\ell_2)
				(\ell_3,\bar\ell_3)
			}\,.
		\end{split}
	\end{equation}
	The corresponding crossing equation is
	\begin{equation}\label{crossing:Z2}
		\begin{split}
			&
			\alpha_{(\ell_{\mathbb{Z}_2},\ell_{\mathbb{Z}_2})(\ell_1,\bar\ell_1)(\ell_{\mathbb{Z}_2}-\ell_1,\ell_{\mathbb{Z}_2}-\bar\ell_1)}
			\alpha_{(\ell_2,\bar\ell_2)(\ell_3,\bar\ell_3)(\ell_{\mathbb{Z}_2}-\ell_1,\ell_{\mathbb{Z}_2}-\bar\ell_1)}
			\\
			&\hspace{1cm}\times
			\sixj{\ell_{\mathbb{Z}_2}}{\ell_1}{\ell_{\mathbb{Z}_2}-\ell_1}{\ell_2}{\ell_3}{\ell_{\mathbb{Z}_2}-\ell_3}_{q_+}
			\sixj{\ell_{\mathbb{Z}_2}}{\bar\ell_1}{\ell_{\mathbb{Z}_2}-\bar\ell_1}{\bar\ell_2}{\bar\ell_3}{\ell_{\mathbb{Z}_2}-\bar\ell_3}_{q_+}
			\\
			&=
			\alpha_{(\ell_3,\bar\ell_3)(\ell_{\mathbb{Z}_2},\ell_{\mathbb{Z}_2})(\ell_{\mathbb{Z}_2}-\ell_3,\ell_{\mathbb{Z}_2}-\bar\ell_3)}
			\alpha_{(\ell_1,\bar\ell_1)(\ell_2,\bar\ell_2)(\ell_{\mathbb{Z}_2}-\ell_3,\ell_{\mathbb{Z}_2}-\bar\ell_3)}\,.
		\end{split}
	\end{equation}
	
	Now perform the change of variables
	\begin{equation}
		\begin{split}
			(\ell_3,\bar\ell_3)
			\longrightarrow
			(\ell_{\mathbb{Z}_2}-\ell_3,\ell_{\mathbb{Z}_2}-\bar\ell_3)\,.
		\end{split}
	\end{equation}
	Then \eqref{crossing:Z2} becomes
	\begin{equation}
		\begin{split}
			&
			\alpha_{(\ell_{\mathbb{Z}_2},\ell_{\mathbb{Z}_2})(\ell_1,\bar\ell_1)(\ell_{\mathbb{Z}_2}-\ell_1,\ell_{\mathbb{Z}_2}-\bar\ell_1)}
			\alpha_{(\ell_2,\bar\ell_2)(\ell_{\mathbb{Z}_2}-\ell_3,\ell_{\mathbb{Z}_2}-\bar\ell_3)(\ell_{\mathbb{Z}_2}-\ell_1,\ell_{\mathbb{Z}_2}-\bar\ell_1)}
			\\
			&\hspace{1cm}\times
			\sixj{\ell_{\mathbb{Z}_2}}{\ell_1}{\ell_{\mathbb{Z}_2}-\ell_1}{\ell_2}{\ell_{\mathbb{Z}_2}-\ell_3}{\ell_3}_{q_+}
			\sixj{\ell_{\mathbb{Z}_2}}{\bar\ell_1}{\ell_{\mathbb{Z}_2}-\bar\ell_1}{\bar\ell_2}{\ell_{\mathbb{Z}_2}-\bar\ell_3}{\bar\ell_3}_{q_+}
			\\
			&=
			\alpha_{(\ell_{\mathbb{Z}_2}-\ell_3,\ell_{\mathbb{Z}_2}-\bar\ell_3)(\ell_{\mathbb{Z}_2},\ell_{\mathbb{Z}_2})(\ell_3,\bar\ell_3)}
			\alpha_{(\ell_1,\bar\ell_1)(\ell_2,\bar\ell_2)(\ell_3,\bar\ell_3)}\,.
		\end{split}
	\end{equation}
	If the displayed denominators are nonzero, this is equivalently
	\begin{equation}
		\begin{split}
			&\frac{
				\alpha_{(\ell_1,\bar\ell_1)(\ell_2,\bar\ell_2)(\ell_3,\bar\ell_3)}
			}{
				\alpha_{(\ell_{\mathbb{Z}_2}-\ell_1,\ell_{\mathbb{Z}_2}-\bar\ell_1)(\ell_2,\bar\ell_2)(\ell_{\mathbb{Z}_2}-\ell_3,\ell_{\mathbb{Z}_2}-\bar\ell_3)}
			} \\
			&=
			\frac{
				\alpha_{(\ell_{\mathbb{Z}_2},\ell_{\mathbb{Z}_2})(\ell_1,\bar\ell_1)(\ell_{\mathbb{Z}_2}-\ell_1,\ell_{\mathbb{Z}_2}-\bar\ell_1)}
			}{
				\alpha_{(\ell_{\mathbb{Z}_2}-\ell_3,\ell_{\mathbb{Z}_2}-\bar\ell_3)(\ell_{\mathbb{Z}_2},\ell_{\mathbb{Z}_2})(\ell_3,\bar\ell_3)}
			} \\
			&\quad\times\sixj{\ell_{\mathbb{Z}_2}}{\ell_1}{\ell_{\mathbb{Z}_2}-\ell_1}{\ell_2}{\ell_{\mathbb{Z}_2}-\ell_3}{\ell_3}_{q_+}
			\sixj{\ell_{\mathbb{Z}_2}}{\bar\ell_1}{\ell_{\mathbb{Z}_2}-\bar\ell_1}{\bar\ell_2}{\ell_{\mathbb{Z}_2}-\bar\ell_3}{\bar\ell_3}_{q_+}\,.
		\end{split}
	\end{equation}
	The $6j$ symbol in this equation takes the value $\pm1$. Its explicit form is given by
	\begin{equation}\label{sixj_reflection_notation}
		\begin{split}
			\sixj{\ell_{\mathbb{Z}_2}}{\ell_1}{\ell_{\mathbb{Z}_2}-\ell_1}{\ell_2}{\ell_{\mathbb{Z}_2}-\ell_3}{\ell_3}_{q_+}
			&=
			(-1)^{\ell_1+\ell_2-\ell_3}(-1)^{f_{q,k,w}(\ell_1,\ell_2,\ell_3)}\,, \\
			f_{q,k,w}(\ell_1,\ell_2,\ell_3)=&k(\ell_1+\ell_2-\ell_3)(\ell_2+\ell_3-\ell_1+1) \\
			&\quad+\Phi_{q,w}(2\ell_1)+\Phi_{q,w}(2\ell_3+1) \\
			&\quad+\Phi_{q,w}(\ell_1-\ell_2+\ell_3)+\Phi_{q,w}(\ell_1+\ell_2+\ell_3+1)\,, \\
			q_+&=e^{i\pi\left(k+\tfrac{w}{q}\right)}
		\end{split}
	\end{equation}
	where the function $\Phi$ is defined by
	\begin{equation}
		\begin{split}
			\Phi_{N,r}(n):=\sum_{j=1}^{n}\left\lfloor \frac{r\,j}{N}\right\rfloor\,.
		\end{split}
	\end{equation}
	The above notation is motivated by the fact that $f_{q,0,1}(\ell_1,\ell_2,\ell_3)=0$.
	
	Therefore, we obtain
	\begin{equation}
		\begin{split}
			\frac{
				\alpha_{(\ell_1,\bar\ell_1)(\ell_2,\bar\ell_2)(\ell_3,\bar\ell_3)}
			}{
				\alpha_{(\ell_{\mathbb{Z}_2}-\ell_1,\ell_{\mathbb{Z}_2}-\bar\ell_1)(\ell_2,\bar\ell_2)(\ell_{\mathbb{Z}_2}-\ell_3,\ell_{\mathbb{Z}_2}-\bar\ell_3)}
			}
			&=
			\frac{
				\alpha_{(\ell_{\mathbb{Z}_2},\ell_{\mathbb{Z}_2})(\ell_1,\bar\ell_1)(\ell_{\mathbb{Z}_2}-\ell_1,\ell_{\mathbb{Z}_2}-\bar\ell_1)}
			}{
				\alpha_{(\ell_{\mathbb{Z}_2},\ell_{\mathbb{Z}_2})(\ell_3,\bar\ell_3)(\ell_{\mathbb{Z}_2}-\ell_3,\ell_{\mathbb{Z}_2}-\bar\ell_3)}
			}
			(-1)^{\ell_{123}+\bar\ell_{123}} \\
			&\quad\times(-1)^{f_{q,k,w}(\ell_1,\ell_2,\ell_3)+f_{q,k,w}(\bar{\ell}_1,\bar{\ell}_2,\bar{\ell}_3)}
		\end{split}
	\end{equation}
	where
	\begin{equation}
		\begin{split}
			\ell_{ijk}:=\ell_i+\ell_j-\ell_k\,,
			\qquad
			\bar\ell_{ijk}:=\bar\ell_i+\bar\ell_j-\bar\ell_k\,.
		\end{split}
	\end{equation}
	After relabeling the indices, the above relation can be rewritten as
	\begin{equation}\label{alpha:constr_Z2diagonal1}
		\boxed{\begin{aligned}
			\frac{
				\alpha_{(\ell_1,\bar\ell_1)(\ell_2,\bar\ell_2)(\ell_3,\bar\ell_3)}
			}{
				\alpha_{(\ell_{\mathbb{Z}_2}-\ell_1,\ell_{\mathbb{Z}_2}-\bar\ell_1)(\ell_{\mathbb{Z}_2}-\ell_2,\ell_{\mathbb{Z}_2}-\bar\ell_2)(\ell_3,\bar\ell_3)}
			}
			&=
			\frac{
				\alpha_{(\ell_{\mathbb{Z}_2},\ell_{\mathbb{Z}_2})(\ell_2,\bar\ell_2)(\ell_{\mathbb{Z}_2}-\ell_2,\ell_{\mathbb{Z}_2}-\bar\ell_2)}
			}{
				\alpha_{(\ell_{\mathbb{Z}_2},\ell_{\mathbb{Z}_2})(\ell_1,\bar\ell_1)(\ell_{\mathbb{Z}_2}-\ell_1,\ell_{\mathbb{Z}_2}-\bar\ell_1)}
			}
			(-1)^{\ell_{231}+\bar\ell_{231}} \\
			&\quad\times(-1)^{f_{q,k,w}(\ell_2,\ell_3,\ell_1)+f_{q,k,w}(\bar{\ell}_2,\bar{\ell}_3,\bar{\ell}_1)}
		\end{aligned}
				}
	\end{equation}
	
	We first determine the quantities on the right-hand side of \eqref{alpha:constr_Z2diagonal1}. This gives explicit relations of the form
	$$\frac{\alpha_{IJK}}{\alpha_{\sigma(I)\sigma(J)K}}=\text{a known phase}$$
	These relations organize the triplets $[IJK]$ into orbits and reduce the number of representative OPE coefficients. If an orbit coefficient vanishes, the ratio notation is replaced by the cross-multiplied relation, which also propagates zeros. OPE coefficients of the form $\alpha_{\sigma(\id)A\sigma(A)}$ are phases, as follows from their squared values below.\footnote{$\sigma(\id)$ is the label for the $\mathbb{Z}_2$ operator itself, since the pre-image is the identity operator.}
	
	Let us now specialize to the coefficient
	\begin{equation}
		\begin{split}
			\alpha_{(\ell_{\mathbb{Z}_2},\ell_{\mathbb{Z}_2})(\ell,\bar\ell)(\ell_{\mathbb{Z}_2}-\ell,\ell_{\mathbb{Z}_2}-\bar\ell)}\,.
		\end{split}
	\end{equation}
	This is obtained from the previous relation by setting
	\begin{equation}
		\begin{split}
			(\ell_1,\bar\ell_1)=(\ell,\bar\ell)\,,
			\qquad
			(\ell_2,\bar\ell_2)=(\ell_{\mathbb{Z}_2},\ell_{\mathbb{Z}_2})\,,
			\qquad
			(\ell_3,\bar\ell_3)=(\ell_{\mathbb{Z}_2}-\ell,\ell_{\mathbb{Z}_2}-\bar\ell)\,.
		\end{split}
	\end{equation}
	We then find
	\begin{equation}
		\begin{split}
			\frac{
				\alpha_{(\ell,\bar\ell)(\ell_{\mathbb{Z}_2},\ell_{\mathbb{Z}_2})(\ell_{\mathbb{Z}_2}-\ell,\ell_{\mathbb{Z}_2}-\bar\ell)}
			}
			{
				\alpha_{(\ell_{\mathbb{Z}_2}-\ell,\ell_{\mathbb{Z}_2}-\bar\ell)(0,0)(\ell_{\mathbb{Z}_2}-\ell,\ell_{\mathbb{Z}_2}-\bar\ell)}
			}
			&=
			\frac{
				\alpha_{(\ell_{\mathbb{Z}_2},\ell_{\mathbb{Z}_2})(\ell_{\mathbb{Z}_2},\ell_{\mathbb{Z}_2})(0,0)}
			}{
				\alpha_{(\ell_{\mathbb{Z}_2},\ell_{\mathbb{Z}_2})(\ell,\bar\ell)(\ell_{\mathbb{Z}_2}-\ell,\ell_{\mathbb{Z}_2}-\bar\ell)}
			}(-1)^{2(\ell+\bar{\ell})} \\
			&\quad\times(-1)^{f_{q,k,w}(\ell_{\mathbb{Z}_2},\ell_{\mathbb{Z}_2}-\ell,\ell)+f_{q,k,w}(\ell_{\mathbb{Z}_2},\ell_{\mathbb{Z}_2}-\bar{\ell},\bar{\ell})}\,.
		\end{split}
	\end{equation}
	Using the normalization condition for $\alpha$ involving the identity operator, permutation symmetries and the fact that $\ell+\bar{\ell}$ is always an integer in the $E$-series minimal models, the above expression is equivalent to
	\begin{equation}
		\label{alpha:constr_Z2diagonal2}
		\begin{split}
			&
			\left[\alpha_{(\ell_{\mathbb{Z}_2},\ell_{\mathbb{Z}_2})(\ell,\bar\ell)(\ell_{\mathbb{Z}_2}-\ell,\ell_{\mathbb{Z}_2}-\bar\ell)}\right]^2
			\\
			&=(-1)^{S_{\ell,\bar{\ell}}+S_{\ell_{\mathbb{Z}_2}-\ell,\ell_{\mathbb{Z}_2}-\bar{\ell}}}(-1)^{f_{q,k,w}(\ell_{\mathbb{Z}_2},\ell_{\mathbb{Z}_2}-\ell,\ell)+f_{q,k,w}(\ell_{\mathbb{Z}_2},\ell_{\mathbb{Z}_2}-\bar{\ell},\bar{\ell})}\,.
		\end{split}
	\end{equation}
	Here we have used the fact that the parity of the spin $S_{\ell,\bar\ell,s}\bmod2$ does not depend on $s$. This constraint implies that $\alpha_{(\ell_{\mathbb{Z}_2},\ell_{\mathbb{Z}_2})(\ell,\bar\ell)(\ell_{\mathbb{Z}_2}-\ell,\ell_{\mathbb{Z}_2}-\bar\ell)}$ can only take values in $\{\pm 1,\pm i\}$.

	\subsubsection{\texorpdfstring{Chiral/antichiral $\mathbb{Z}_2$ operator}{Chiral/antichiral Z2 operator}}
	For $E_7$- and $E_8$-series minimal models, there exist chiral and antichiral $\mathbb{Z}_2$ operators. The analyses of the chiral and antichiral $\mathbb{Z}_2$ operators parallel the analysis of the diagonal $\mathbb{Z}_2$ operator. Here we only state the results.
	
	The analysis of the chiral $\mathbb{Z}_2$ operator results in 
	\begin{equation}\label{alpha:constr_Z2chiral1}
		\boxed{\begin{aligned}
			\frac{
				\alpha_{(\ell_1,\bar\ell_1)(\ell_2,\bar\ell_2)(\ell_3,\bar\ell_3)}
			}{
				\alpha_{(\ell_{\mathbb{Z}_2}-\ell_1,\bar\ell_1)(\ell_{\mathbb{Z}_2}-\ell_2,\bar\ell_2)(\ell_3,\bar\ell_3)}
			}
			&=
			\frac{
				\alpha_{(\ell_{\mathbb{Z}_2},0)(\ell_2,\bar\ell_2)(\ell_{\mathbb{Z}_2}-\ell_2,\bar\ell_2)}
			}{
				\alpha_{(\ell_{\mathbb{Z}_2},0)(\ell_1,\bar\ell_1)(\ell_{\mathbb{Z}_2}-\ell_1,\bar\ell_1)}
			}
			(-1)^{\ell_{231}+f_{q,k,w}(\ell_2,\ell_3,\ell_1)} \\
		\end{aligned}}
	\end{equation}
	In particular
	\begin{equation}\label{alpha:constr_Z2chiral2}
		\begin{split}
			&
			\left[\alpha_{(\ell_{\mathbb{Z}_2},0)(\ell,\bar\ell)(\ell_{\mathbb{Z}_2}-\ell,\bar\ell)}\right]^2=(-1)^{S_{\ell_{\mathbb{Z}_2},0}+S_{\ell,\bar{\ell}}+S_{\ell_{\mathbb{Z}_2}-\ell,\bar{\ell}}}(-1)^{f_{q,k,w}(\ell_{\mathbb{Z}_2},\ell_{\mathbb{Z}_2}-\ell,\ell)}\,.
		\end{split}
	\end{equation}
	
	The analysis of the antichiral $\mathbb{Z}_2$ operator results in 
	\begin{equation}\label{alpha:constr_Z2antichiral1}
		\boxed{\begin{aligned}
			\frac{
				\alpha_{(\ell_1,\bar\ell_1)(\ell_2,\bar\ell_2)(\ell_3,\bar\ell_3)}
			}{
				\alpha_{(\ell_1,\ell_{\mathbb{Z}_2}-\bar\ell_1)(\ell_2,\ell_{\mathbb{Z}_2}-\bar\ell_2)(\ell_3,\bar\ell_3)}
			}
			&=
			\frac{
				\alpha_{(0,\ell_{\mathbb{Z}_2})(\ell_2,\bar\ell_2)(\ell_2,\ell_{\mathbb{Z}_2}-\bar\ell_2)}
			}{
				\alpha_{(0,\ell_{\mathbb{Z}_2})(\ell_1,\bar\ell_1)(\ell_1,\ell_{\mathbb{Z}_2}-\bar\ell_1)}
			}
			(-1)^{\bar\ell_{231}+f_{q,k,w}(\bar\ell_2,\bar\ell_3,\bar\ell_1)} \\
		\end{aligned}}
	\end{equation}
	In particular
	\begin{equation}\label{alpha:constr_Z2antichiral2}
		\begin{split}
			&
			\left[\alpha_{(0,\ell_{\mathbb{Z}_2})(\ell,\bar\ell)(\ell,\ell_{\mathbb{Z}_2}-\bar\ell)}\right]^2=(-1)^{S_{0,\ell_{\mathbb{Z}_2}}+S_{\ell,\bar{\ell}}+S_{\ell,\ell_{\mathbb{Z}_2}-\bar{\ell}}}(-1)^{f_{q,k,w}(\ell_{\mathbb{Z}_2},\ell_{\mathbb{Z}_2}-\bar\ell,\bar\ell)}\,.
		\end{split}
	\end{equation}
	Here we have used the fact that $\ell$ and $\bar{\ell}$ are integers for all the operators in $E_7$ and $E_8$.

	\subsection{Algorithm}
	
	We solve the seed bootstrap equations \eqref{eq:reduced-crossing_s=1} for each reference residue $(p,q)$. The computation proceeds in three stages.
	\newline
	\newline\noindent\textbf{Pre-bootstrap stage.}
	
	This stage identifies vanishing OPE coefficients. For each representative
	$\alpha_{IJK}$, we analyze the four-point function
	$$\braket{V_IV_JV_JV_I}\,.$$
	The crossing equations are linear in the squares $(\alpha_{IJK})^2$
	in the $s$-channel and the products $\alpha_{IIK}\alpha_{JJK}$
	in the $t$-channel. The latter include the normalized identity
	contribution $\alpha_{II\id}\alpha_{JJ\id}=1$. When the relevant
	linear system has sufficient rank, this normalization allows us
	to determine the quantities themselves, rather than only their
	ratios. In the present calculation, the ranks and the resulting
	vanishing coefficients are determined numerically.
	
	A principal motivation for this stage is to investigate the
	selection rules associated with the extended symmetries.
	The extended-character decompositions of the torus partition
	functions \eqref{def:model} suggest Ising-type sector selection
	rules for $E_6$ and Fibonacci-type sector selection rules for $E_8$:
	\begin{equation}\label{extended_selcetion_rule}
		\begin{split}
			\alpha_{IJK}=0\quad\text{if }[IJK]\ \text{is forbidden by the extended symmetries}\,.
		\end{split}
	\end{equation}
	
	To test these selection rules without assuming them, we perform
	an independent, unrestricted $\braket{ABBA}$ calculation for the
	formal seeds $(p,q)=(1,12)$ and $(1,30)$. For each ordered pair
	$A,B$, we retain every physical $s$-channel field allowed by
	the separate left- and right-moving Virasoro fusion rules.
	We determine the squared coefficients using the crossing
	equations for the identity target ($\ell_6=\bar\ell_6=0$) and the chirally admissible
	target channels absent from the physical spectrum. Thus, the
	calculation uses only the prescribed spectrum, Virasoro symmetry,
	permutation symmetry, and OPE associativity. The extended-sector
	exclusions are imposed only in subsequent bootstrap stages.
	
    The unrestricted computations provide numerical evidence for these exclusions. Once certified exactly, the relevant rank and vanishing relations would extend to the other admissible parameters $p$ by Galois conjugation; see
	Section~\ref{subsec:galois}.
	
	The numerical rank tests find full column rank for the systems
	associated with all $144$ ordered pairs in $E_6$ and all $1024$
	ordered pairs in $E_8$. Among the reconstructed candidates,
	all $2$ forbidden unordered triples in $E_6$ and all $90$
	forbidden unordered triples in $E_8$ have vanishing coefficients.
	The $E_8$ solution also contains $85$ additional zero unordered
	triples that are allowed by the extended fusion rule.
	\newline
	\newline\noindent\textbf{Correlator-by-correlator bootstrap.}
	
	In this stage, we organize $\alpha_{IJK}$ into triplet orbits under the $\mathbb{Z}_2$ relations \eqref{alpha:constr_Z2diagonal1}, \eqref{alpha:constr_Z2chiral1}, \eqref{alpha:constr_Z2antichiral1} and the permutation relations
	\eqref{eq:permutation_alpha}. 
	The number of representative coefficients before imposing the remaining crossing equations equals the number of triplet orbits, as summarized in Table~\ref{tab:counts}.
	
	\begin{table}[ht]
		\centering
		\begin{tabular}{cccc}
			\toprule
			Series & $q$ & $s=1$ primaries & Triplet orbits\\
			\midrule
			$E_6$ & $12$ & $12$ & $18$\\
			$E_7$ & $18$ & $17$ & $32$\\
			$E_8$ & $30$ & $32$ & $107$\\
			\bottomrule
		\end{tabular}
		\caption{Numbers of $s=1$ primaries and nontrivial representative triplets in
			the three exceptional series. The last column counts variables after
			the orbit identifications and the preliminary sector exclusions,
			before the remaining crossing equations are imposed; it does not
			count free parameters of a solution. Identity and simple-current
			couplings are fixed separately.}
		\label{tab:counts}
	\end{table}
	
	For each orbit $\mathcal{C}_k$, all OPE coefficients are
	expressed in terms of a single representative variable $x_k$. We then solve
	the crossing equations for correlators of the form $\braket{ABBA}$. These
	equations determine all squares $x_k^2$, together with a subset of the
	relations among different $x_k$. Finally, after fixing the operator-sign freedom, we use crossing equations of other correlators to determine the remaining relative signs. We choose to proceed with the correlators in the following order:
	$$\braket{ABBA}\longrightarrow \braket{AAAB}\longrightarrow\braket{AABC}\longrightarrow\braket{ABCD}.$$
	
	In the reported $E_6$ and $E_8$ runs, the $\braket{ABBA}$ and $\braket{AAAB}$ stages determine every representative after gauge fixing. The $E_7$ implementation also uses $\braket{AABC}$: it determines $24$ of the $32$ representatives after $\braket{ABBA}$, $31$ after $\braket{AAAB}$, and all $32$ after $\braket{AABC}$. These counts include the zero classes. This describes the implemented determination stages; it does not rule out a different strategy using fewer correlator types. These results support uniqueness of the seed solution. Proving uniqueness would additionally require certifying the relevant ranks and showing that the crossing equations fix all
	relative signs up to field redefinitions.
	
	\begin{remark}
		For a fixed ordered quartet, we introduce auxiliary variables for the products
		that occur in the two channels,
		\begin{equation}
			Y^{(s)}_{5}=\alpha_{125}\alpha_{345},
			\qquad
			Y^{(t)}_{6}=\alpha_{236}\alpha_{416}.
		\end{equation}
		Equation~\eqref{eq:reduced-crossing} is linear in these variables.  Every solution of the original quadratic system induces a solution of the linearized system, so solving the latter cannot discard a nonlinear branch. When the relevant ratios are uniquely fixed by the linear system, the correlator gives relations of the form
		$$Y^{(s)}_{5}/Y^{(t)}_{6}=\ldots,\quad Y^{(s)}_{5}/Y^{(s)}_{5'}=\ldots$$ 
		with explicit right-hand sides and nonzero denominators. Ratios are formed only when their denominators are known to be nonzero. Otherwise, the corresponding linear relations are retained without division, so that solutions with vanishing OPE products are not excluded. These relations constrain the original factorized products; the auxiliary variables are not assumed to factorize automatically.
	\end{remark}
	\vspace{2mm}
	\noindent\textbf{Verification of all correlators.}
	
	After determining candidate OPE coefficients from a subset of
	correlators, we check them against the crossing equations for all
	ordered external quartets, including $\braket{AABC}$ and general
	$\braket{ABCD}$ correlators. These finite-precision checks provide numerical evidence for existence. An exact proof would require verifying that the candidates satisfy the complete system exactly.

	\subsection{\texorpdfstring{Summary of results: $s=1$}{Summary of results: s=1}}\label{sec:summary_alpha_s1}
	
	In principle, the entire calculation can be carried out symbolically
	using exact algebraic arithmetic. Indeed, $q_+$ is a root of unity,
	so the quantum-group $6j$ symbols are algebraic. After fixing the
	field normalizations, the bootstrap equations for the reduced OPE
	coefficients therefore form a finite polynomial system with algebraic
	coefficients, and every isolated solution has algebraic coordinates.
	A fully symbolic implementation is nevertheless computationally
	expensive. We instead evaluate the $6j$ symbols at high precision,
	solve the linear systems numerically, and use \texttt{RootApproximant}
	to reconstruct candidate algebraic expressions for the reduced OPE
	coefficients. The reconstructed values are then tested numerically
	against all the bootstrap equations. Algebraic reconstruction
	and small numerical residuals do not by themselves establish exact
	crossing identities or provide certified error bounds.
	
	As described above, it suffices to consider the following nine
	reference cases:
	\begin{equation}
		\begin{split}
			E_6:&\quad (p,q)=(1,12),(5,12)\,, \\
			E_7:&\quad (p,q)=(1,18),(5,18),(7,18)\,, \\
			E_8:&\quad (p,q)=(1,30),(7,30),(11,30),(13,30)\,. \\
		\end{split}
	\end{equation}
	We solve the seed bootstrap equations numerically for each case.
	We emphasize that these equations
	\eqref{eq:reduced-crossing_s=1} remain well defined at $p=1$,
	although this parameter choice does not define a Virasoro minimal model.
	
	\paragraph{Numerical precision.}
	We solved the bootstrap equations at $50$-digit working precision and
	reconstructed algebraic expressions for the OPE coefficients using
	\texttt{RootApproximant}.
	For the subsequent verification, these expressions were evaluated afresh
	at $120$-digit working precision, corresponding to $100$ requested digits
	plus $20$ guard digits. The cached $6j$ symbols were also retained at
	$120$ digits during the verification stage, and the crossing sums were computed at the same precision. 
	We imposed the strict absolute acceptance criterion
	\[
	\Delta \equiv
	\bigl|\mathrm{LHS}-\mathrm{RHS}\bigr| < 10^{-100}
	\]
	on every crossing equation.
	
	\paragraph{Exhaustive crossing checks.}
	A separately implemented Python checker tested the reconstructed OPE coefficients using the cached $6j$ symbols. We enumerated every ordered quartet of fields in the field set used in our bootstrap, including
	repeated fields and the identity. These sets contain $12$, $17$, and $32$
	fields for $E_6$, $E_7$, and $E_8$, respectively.
	For each quartet, we tested every $t$-channel component of the $s$--$t$
	crossing relation allowed by the left- and right-moving chiral fusion
	rules. This includes intermediate-channel pairs outside the physical
	spectrum, for which the corresponding right-hand side must vanish.
	Identity normalization and the vanishing of fusion-forbidden OPE
	coefficients were checked separately.
	
	Table~\ref{tab:exhaustive-crossing-checks} lists the coverage and the
	maximum absolute residual, $\Delta_{\max}$, over all tested crossing
	equations in each case.
	The quartet counts include all orderings and quartets with no admissible
	$t$-channel pair; such quartets contribute no equations.
	Each complex equation is counted once, without separating its real and
	imaginary parts. The counts include vanishing equations and are not
	counts of algebraically independent constraints. All tested equations
	passed.
	
	The arXiv ancillary files contain the bootstrap solver, verification code, exact coefficient data, and Galois-transport checks. The accompanying documentation specifies the numerical settings, software versions, cache-generation procedure, and instructions for reproducing the reported checks.

    The tables in Sections~\ref{subsec:result_E6}--\ref{subsec:result_E8} give the representative reduced coefficients $\alpha_{IJK}$ for these nine reference residues. Identity and simple-current couplings are fixed in Appendix~\ref{app:simple_current_conventions}. 

    \paragraph{Comparison with the earlier calculations.}
    We have compared our results with the earlier WZW and unitary minimal-model calculations \cite{Fuchs:1989kz,Fuchs:1989zp,Douglas:1988rv,Kato:1988ct,Furlan:1989ra,Petkova:1994zs} and with the recent $E_6$ formulas of Nivesvivat and Ribault~\cite{Nivesvivat:2025odb}. After matching conventions, we find agreement for the magnitudes and zeros in the overlapping cases. For $E_6$ and $E_7$, this comparison also includes the complete signed relative coefficients in the common reference sector \cite{Petkova:1994zs,Nivesvivat:2025odb}. Appendix~\ref{app:literature-comparison} presents the comparison paper by paper and specifies the scope of the signed checks.
	
	\begin{table}[htbp]
		\centering
		\caption{Exhaustive numerical crossing checks for the nine reference cases. The maximal absolute residual $\Delta_{\max}$ is rounded to six significant figures. All equations satisfy $\Delta<10^{-100}$.}
		\label{tab:exhaustive-crossing-checks}
		\setlength{\tabcolsep}{10pt}
		\renewcommand{\arraystretch}{1.2}
		\resizebox{\textwidth}{!}{%
			\begin{tabular}{crrc}
				\toprule
				$(p,q)$
				& \shortstack{Number of\\ordered quartets}
				& \shortstack{Number of\\equations checked}
				& \shortstack{Maximal absolute\\residual $\Delta_{\max}$} \\
				\midrule
				$(1,12)$  & $20\,736$     & $18\,688$
				& $1.10369\times10^{-119}$ \\
				$(5,12)$  & $20\,736$     & $18\,688$
				& $9.44912\times10^{-119}$ \\
				\midrule
				$(1,18)$  & $83\,521$     & $359\,073$
				& $1.20050\times10^{-119}$ \\
				$(5,18)$  & $83\,521$     & $359\,073$
				& $6.56792\times10^{-118}$ \\
				$(7,18)$  & $83\,521$     & $359\,073$
				& $1.85884\times10^{-118}$ \\
				\midrule
				$(1,30)$  & $1\,048\,576$ & $7\,747\,200$
				& $1.00687\times10^{-119}$ \\
				$(7,30)$  & $1\,048\,576$ & $7\,747\,200$
				& $5.70263\times10^{-117}$ \\
				$(11,30)$ & $1\,048\,576$ & $7\,747\,200$
				& $2.54900\times10^{-118}$ \\
				$(13,30)$ & $1\,048\,576$ & $7\,747\,200$
				& $4.35470\times10^{-118}$ \\
				\bottomrule
			\end{tabular}%
		}
	\end{table}

	\clearpage
	\subsubsection{\texorpdfstring{$E_6$-series}{E6-series}}\label{subsec:result_E6}
	For the $E_6$ series, we fix the remaining operator-sign freedom by setting the OPE coefficients for the following triplets to the values listed in the corresponding tables:
	\begin{equation}
		\begin{split}
			&(0,3)(0,3)(0,3) \\
			&(3,0)(3,0)(3,0) \\
			&(3,3)(3,3)(3,3) \\
			&(\tfrac{3}{2},\tfrac{3}{2})(\tfrac{3}{2},\tfrac{7}{2})(2,5) \\
			&(\tfrac{3}{2},\tfrac{3}{2})(\tfrac{7}{2},\tfrac{3}{2})(5,2) \\
		\end{split}
	\end{equation}
	We verified at the pre-bootstrap stage that all these coefficients are nonzero. For each such triplet, we identify the representative OPE coefficient $x_k$ of its triplet orbit. We then fix the sign of $x_k$ by choosing the principal value of the square root $\sqrt{(x_k)^2}$, where $(x_k)^2$ is determined by the $\braket{ABBA}$ bootstrap equations.
	
	The representative reduced coefficients for $(p,q)=(1,12)$ and $(5,12)$ are listed below.
	
	\begin{table}[ht]
		\centering
		\caption{Representative reduced coefficients for $(p,q)=(1,12)$.}
		\label{tab:E6-I1-results}
		\setlength{\tabcolsep}{4pt}
		\renewcommand{\arraystretch}{1.45}
		\resizebox{\textwidth}{!}{%
			\begin{tabular}{c@{\hspace{6pt}}cccccc}
				\toprule
				No. & 1 & 2 & 3 & 4 & 5 & 6 \\
				\midrule
				Representative
				& $(0,3)(0,3)(0,3)$
				& $(0,3)(\tfrac32,\tfrac32)(\tfrac32,\tfrac32)$
				& $(0,3)(\tfrac32,\tfrac32)(\tfrac32,\tfrac72)$
				& $(0,3)(\tfrac32,\tfrac72)(\tfrac32,\tfrac72)$
				& $(0,3)(2,2)(2,2)$
				& $(0,3)(2,2)(2,5)$ \\
				$\alpha_{IJK}$
				& $\mathrm{i}\sqrt[4]{2}$
				& $\dfrac{\mathrm{i}}{\sqrt[4]{2}}$
				& $\sqrt[4]{\dfrac{3}{2}}$
				& $-\dfrac{\mathrm{i}}{\sqrt[4]{2}}$
				& $-\mathrm{i}\sqrt[4]{2}$
				& $-\mathrm{i}$ \\
				\midrule
				No. & 7 & 8 & 9 & 10 & 11 & 12 \\
				\midrule
				Representative
				& $(\tfrac32,\tfrac32)(\tfrac32,\tfrac32)(2,2)$
				& $(\tfrac32,\tfrac32)(\tfrac32,\tfrac32)(3,0)$
				& $(\tfrac32,\tfrac32)(\tfrac32,\tfrac32)(3,3)$
				& $(\tfrac32,\tfrac32)(\tfrac32,\tfrac72)(2,2)$
				& $(\tfrac32,\tfrac32)(\tfrac32,\tfrac72)(2,5)$
				& $(\tfrac32,\tfrac32)(\tfrac32,\tfrac72)(3,3)$ \\
				$\alpha_{IJK}$
				& $\sqrt{\dfrac{3}{2}}$
				& $-\dfrac{\mathrm{i}}{\sqrt[4]{2}}$
				& $\dfrac{1}{\sqrt{2}}$
				& $\mathrm{i}\sqrt[4]{\dfrac{3}{4}}$
				& $\mathrm{i}\sqrt[4]{\dfrac{3}{2}}$
				& $-\mathrm{i}\sqrt[4]{\dfrac{3}{4}}$ \\
				\midrule
				No. & 13 & 14 & 15 & 16 & 17 & 18 \\
				\midrule
				Representative
				& $(\tfrac32,\tfrac72)(\tfrac32,\tfrac72)(2,2)$
				& $(\tfrac32,\tfrac72)(\tfrac32,\tfrac72)(3,0)$
				& $(\tfrac32,\tfrac72)(\tfrac32,\tfrac72)(3,3)$
				& $(2,2)(2,2)(3,0)$
				& $(2,2)(2,2)(3,3)$
				& $(2,5)(2,5)(3,0)$ \\
				$\alpha_{IJK}$
				& $-\sqrt{\dfrac{3}{2}}$
				& $-\dfrac{\mathrm{i}}{\sqrt[4]{2}}$
				& $-\dfrac{1}{\sqrt{2}}$
				& $\mathrm{i}\sqrt[4]{2}$
				& $\sqrt{2}$
				& $\mathrm{i}\sqrt[4]{2}$ \\
				\bottomrule
			\end{tabular}%
		}
	\end{table}
	
	\begin{table*}[ht]
		\centering
		\caption{Representative reduced coefficients for $(p,q)=(5,12)$.}
		\label{tab:E6-I5-results}
		\setlength{\tabcolsep}{4pt}
		\renewcommand{\arraystretch}{1.45}
		\resizebox{\textwidth}{!}{%
			\begin{tabular}{c@{\hspace{6pt}}cccccc}
				\toprule
				No. & 1 & 2 & 3 & 4 & 5 & 6 \\
				\midrule
				Representative
				& $(0,3)(0,3)(0,3)$
				& $(0,3)(\tfrac32,\tfrac32)(\tfrac32,\tfrac32)$
				& $(0,3)(\tfrac32,\tfrac32)(\tfrac32,\tfrac72)$
				& $(0,3)(\tfrac32,\tfrac72)(\tfrac32,\tfrac72)$
				& $(0,3)(2,2)(2,2)$
				& $(0,3)(2,2)(2,5)$ \\
				$\alpha_{IJK}$
				& $\mathrm{i}\sqrt[4]{2}$
				& $\dfrac{\mathrm{i}}{\sqrt[4]{2}}$
				& $\sqrt[4]{\dfrac{3}{2}}$
				& $\dfrac{\mathrm{i}}{\sqrt[4]{2}}$
				& $\mathrm{i}\sqrt[4]{2}$
				& $\mathrm{i}$ \\
				\midrule
				No. & 7 & 8 & 9 & 10 & 11 & 12 \\
				\midrule
				Representative
				& $(\tfrac32,\tfrac32)(\tfrac32,\tfrac32)(2,2)$
				& $(\tfrac32,\tfrac32)(\tfrac32,\tfrac32)(3,0)$
				& $(\tfrac32,\tfrac32)(\tfrac32,\tfrac32)(3,3)$
				& $(\tfrac32,\tfrac32)(\tfrac32,\tfrac72)(2,2)$
				& $(\tfrac32,\tfrac32)(\tfrac32,\tfrac72)(2,5)$
				& $(\tfrac32,\tfrac32)(\tfrac32,\tfrac72)(3,3)$ \\
				$\alpha_{IJK}$
				& $\sqrt{\dfrac{3}{2}}$
				& $\dfrac{\mathrm{i}}{\sqrt[4]{2}}$
				& $\dfrac{1}{\sqrt{2}}$
				& $\mathrm{i}\sqrt[4]{\dfrac{3}{4}}$
				& $\mathrm{i}\sqrt[4]{\dfrac{3}{2}}$
				& $-\mathrm{i}\sqrt[4]{\dfrac{3}{4}}$ \\
				\midrule
				No. & 13 & 14 & 15 & 16 & 17 & 18 \\
				\midrule
				Representative
				& $(\tfrac32,\tfrac72)(\tfrac32,\tfrac72)(2,2)$
				& $(\tfrac32,\tfrac72)(\tfrac32,\tfrac72)(3,0)$
				& $(\tfrac32,\tfrac72)(\tfrac32,\tfrac72)(3,3)$
				& $(2,2)(2,2)(3,0)$
				& $(2,2)(2,2)(3,3)$
				& $(2,5)(2,5)(3,0)$ \\
				$\alpha_{IJK}$
				& $\sqrt{\dfrac{3}{2}}$
				& $\dfrac{\mathrm{i}}{\sqrt[4]{2}}$
				& $\dfrac{1}{\sqrt{2}}$
				& $\mathrm{i}\sqrt[4]{2}$
				& $\sqrt{2}$
				& $\mathrm{i}\sqrt[4]{2}$ \\
				\bottomrule
			\end{tabular}%
		}
	\end{table*}
	\clearpage
	\newpage
	\subsubsection{\texorpdfstring{$E_7$-series}{E7-series}}\label{subsec:result_E7}
	For the $E_7$ series, we fix the remaining operator-sign freedom by assigning the coefficients of the following triplets the values listed in the corresponding tables: 
	\begin{equation}
		\begin{split}
			&(2,2)(2,2)(2,2) \\
			&(3,3)(3,3)(3,3) \\
			&(4,4)(4,4)(4,4) \\
			&(4,1)(4,1)(4,1) \\
			&(1,4)(1,4)(1,4) \\
		\end{split}
	\end{equation}
	The representative reduced coefficients for $(p,q)=(1,18)$, $(5,18)$ and $(7,18)$ are listed below.
	
	\begin{table}[ht]
		\centering
		\caption{Representative reduced coefficients for $(p,q)=(1,18)$.}
		\label{tab:E7-I1-results}
		\setlength{\tabcolsep}{4pt}
		\renewcommand{\arraystretch}{1.45}
		\resizebox{\textwidth}{!}{%
			\begin{tabular}{c@{\hspace{6pt}}cccccccc}
				\toprule
				No. & 1 & 2 & 3 & 4 & 5 & 6 & 7 & 8 \\
				\midrule
				Representative
				& $(1,4)(1,4)(1,4)$
				& $(1,4)(1,4)(2,2)$
				& $(1,4)(2,2)(2,2)$
				& $(1,4)(2,2)(3,3)$
				& $(1,4)(3,3)(3,3)$
				& $(1,4)(3,3)(4,1)$
				& $(1,4)(3,3)(4,4)$
				& $(1,4)(4,1)(4,4)$
				\\
				$\alpha_{IJK}$
				& $\mathrm{i}\sqrt[4]{2}$
				& $-1$
				& $\dfrac{\mathrm{i}}{\sqrt[4]{2}}$
				& $-\dfrac{\mathrm{i}}{\sqrt[4]{2}}$
				& $-\dfrac{\mathrm{i}}{\sqrt[4]{2}}$
				& $-\dfrac{1}{\sqrt{2}}$
				& $\dfrac{\mathrm{i}}{\sqrt[4]{2}}$
				& $-1$
				\\
				\midrule
				
				No. & 9 & 10 & 11 & 12 & 13 & 14 & 15 & 16 \\
				\midrule
				Representative
				& $(1,4)(4,4)(4,4)$
				& $(2,2)(2,2)(2,2)$
				& $(2,2)(2,2)(3,3)$
				& $(2,2)(2,2)(4,1)$
				& $(2,2)(2,2)(4,4)$
				& $(2,2)(3,3)(3,3)$
				& $(2,2)(3,3)(3,5)$
				& $(2,2)(3,3)(4,1)$
				\\
				$\alpha_{IJK}$
				& $0$
				& $1$
				& $1$
				& $\dfrac{\mathrm{i}}{\sqrt[4]{2}}$
				& $\dfrac{1}{\sqrt{2}}$
				& $-\dfrac{1}{2}$
				& $\dfrac{\mathrm{i}\sqrt{3}}{2}$
				& $-\dfrac{\mathrm{i}}{\sqrt[4]{2}}$
				\\
				\midrule
				
				No. & 17 & 18 & 19 & 20 & 21 & 22 & 23 & 24 \\
				\midrule
				Representative
				& $(2,2)(3,3)(4,4)$
				& $(2,2)(3,3)(5,3)$
				& $(2,2)(3,3)(5,5)$
				& $(2,2)(4,1)(4,1)$
				& $(2,2)(4,4)(4,4)$
				& $(3,3)(3,3)(3,3)$
				& $(3,3)(3,3)(3,5)$
				& $(3,3)(3,3)(4,1)$
				\\
				$\alpha_{IJK}$
				& $\dfrac{1}{\sqrt{2}}$
				& $\dfrac{\mathrm{i}\sqrt{3}}{2}$
				& $\dfrac{1}{2}$
				& $-1$
				& $1$
				& $1$
				& $0$
				& $-\dfrac{\mathrm{i}}{\sqrt[4]{2}}$
				\\
				\midrule
				
				No. & 25 & 26 & 27 & 28 & 29 & 30 & 31 & 32 \\
				\midrule
				Representative
				& $(3,3)(3,3)(4,4)$
				& $(3,3)(3,3)(5,3)$
				& $(3,3)(3,3)(5,5)$
				& $(3,3)(4,1)(4,4)$
				& $(3,3)(4,4)(4,4)$
				& $(4,1)(4,1)(4,1)$
				& $(4,1)(4,4)(4,4)$
				& $(4,4)(4,4)(4,4)$
				\\
				$\alpha_{IJK}$
				& $\dfrac{1}{\sqrt{2}}$
				& $0$
				& $-1$
				& $\dfrac{\mathrm{i}}{\sqrt[4]{2}}$
				& $0$
				& $\mathrm{i}\sqrt[4]{2}$
				& $0$
				& $\sqrt{2}$
				\\
				\bottomrule
			\end{tabular}%
		}
	\end{table}
	
	\begin{table}[ht]
		\centering
		\caption{Representative reduced coefficients for $(p,q)=(5,18)$.}
		\label{tab:E7-I5-results}
		\setlength{\tabcolsep}{4pt}
		\renewcommand{\arraystretch}{1.45}
		\resizebox{\textwidth}{!}{%
			\begin{tabular}{c@{\hspace{6pt}}cccccccc}
				\toprule
				No. & 1 & 2 & 3 & 4 & 5 & 6 & 7 & 8 \\
				\midrule
				Representative
				& $(1,4)(1,4)(1,4)$
				& $(1,4)(1,4)(2,2)$
				& $(1,4)(2,2)(2,2)$
				& $(1,4)(2,2)(3,3)$
				& $(1,4)(3,3)(3,3)$
				& $(1,4)(3,3)(4,1)$
				& $(1,4)(3,3)(4,4)$
				& $(1,4)(4,1)(4,4)$
				\\
				$\alpha_{IJK}$
				& $\sqrt[4]{2}$
				& $-1$
				& $-\dfrac{1}{\sqrt[4]{2}}$
				& $\dfrac{1}{\sqrt[4]{2}}$
				& $\dfrac{1}{\sqrt[4]{2}}$
				& $\dfrac{1}{\sqrt{2}}$
				& $\dfrac{1}{\sqrt[4]{2}}$
				& $1$
				\\
				\midrule
				
				No. & 9 & 10 & 11 & 12 & 13 & 14 & 15 & 16 \\
				\midrule
				Representative
				& $(1,4)(4,4)(4,4)$
				& $(2,2)(2,2)(2,2)$
				& $(2,2)(2,2)(3,3)$
				& $(2,2)(2,2)(4,1)$
				& $(2,2)(2,2)(4,4)$
				& $(2,2)(3,3)(3,3)$
				& $(2,2)(3,3)(3,5)$
				& $(2,2)(3,3)(4,1)$
				\\
				$\alpha_{IJK}$
				& $0$
				& $1$
				& $1$
				& $-\dfrac{1}{\sqrt[4]{2}}$
				& $\dfrac{1}{\sqrt{2}}$
				& $-\dfrac{1}{2}$
				& $-\dfrac{\sqrt{3}}{2}$
				& $\dfrac{1}{\sqrt[4]{2}}$
				\\
				\midrule
				
				No. & 17 & 18 & 19 & 20 & 21 & 22 & 23 & 24 \\
				\midrule
				Representative
				& $(2,2)(3,3)(4,4)$
				& $(2,2)(3,3)(5,3)$
				& $(2,2)(3,3)(5,5)$
				& $(2,2)(4,1)(4,1)$
				& $(2,2)(4,4)(4,4)$
				& $(3,3)(3,3)(3,3)$
				& $(3,3)(3,3)(3,5)$
				& $(3,3)(3,3)(4,1)$
				\\
				$\alpha_{IJK}$
				& $\dfrac{1}{\sqrt{2}}$
				& $-\dfrac{\sqrt{3}}{2}$
				& $\dfrac{1}{2}$
				& $-1$
				& $1$
				& $1$
				& $0$
				& $\dfrac{1}{\sqrt[4]{2}}$
				\\
				\midrule
				
				No. & 25 & 26 & 27 & 28 & 29 & 30 & 31 & 32 \\
				\midrule
				Representative
				& $(3,3)(3,3)(4,4)$
				& $(3,3)(3,3)(5,3)$
				& $(3,3)(3,3)(5,5)$
				& $(3,3)(4,1)(4,4)$
				& $(3,3)(4,4)(4,4)$
				& $(4,1)(4,1)(4,1)$
				& $(4,1)(4,4)(4,4)$
				& $(4,4)(4,4)(4,4)$
				\\
				$\alpha_{IJK}$
				& $\dfrac{1}{\sqrt{2}}$
				& $0$
				& $-1$
				& $\dfrac{1}{\sqrt[4]{2}}$
				& $0$
				& $\sqrt[4]{2}$
				& $0$
				& $\sqrt{2}$
				\\
				\bottomrule
			\end{tabular}%
		}
	\end{table}
	
	\begin{table}[ht]
		\centering
		\caption{Representative reduced coefficients for $(p,q)=(7,18)$.}
		\label{tab:E7-I7-results}
		\setlength{\tabcolsep}{4pt}
		\renewcommand{\arraystretch}{1.45}
		\resizebox{\textwidth}{!}{%
			\begin{tabular}{c@{\hspace{6pt}}cccccccc}
				\toprule
				No. & 1 & 2 & 3 & 4 & 5 & 6 & 7 & 8 \\
				\midrule
				Representative
				& $(1,4)(1,4)(1,4)$
				& $(1,4)(1,4)(2,2)$
				& $(1,4)(2,2)(2,2)$
				& $(1,4)(2,2)(3,3)$
				& $(1,4)(3,3)(3,3)$
				& $(1,4)(3,3)(4,1)$
				& $(1,4)(3,3)(4,4)$
				& $(1,4)(4,1)(4,4)$
				\\
				$\alpha_{IJK}$
				& $\sqrt[4]{2}$
				& $1$
				& $\dfrac{1}{\sqrt[4]{2}}$
				& $\dfrac{1}{\sqrt[4]{2}}$
				& $\dfrac{1}{\sqrt[4]{2}}$
				& $\dfrac{1}{\sqrt{2}}$
				& $-\dfrac{1}{\sqrt[4]{2}}$
				& $1$
				\\
				\midrule
				
				No. & 9 & 10 & 11 & 12 & 13 & 14 & 15 & 16 \\
				\midrule
				Representative
				& $(1,4)(4,4)(4,4)$
				& $(2,2)(2,2)(2,2)$
				& $(2,2)(2,2)(3,3)$
				& $(2,2)(2,2)(4,1)$
				& $(2,2)(2,2)(4,4)$
				& $(2,2)(3,3)(3,3)$
				& $(2,2)(3,3)(3,5)$
				& $(2,2)(3,3)(4,1)$
				\\
				$\alpha_{IJK}$
				& $0$
				& $1$
				& $1$
				& $\dfrac{1}{\sqrt[4]{2}}$
				& $\dfrac{1}{\sqrt{2}}$
				& $-\dfrac{1}{2}$
				& $-\dfrac{\sqrt{3}}{2}$
				& $\dfrac{1}{\sqrt[4]{2}}$
				\\
				\midrule
				
				No. & 17 & 18 & 19 & 20 & 21 & 22 & 23 & 24 \\
				\midrule
				Representative
				& $(2,2)(3,3)(4,4)$
				& $(2,2)(3,3)(5,3)$
				& $(2,2)(3,3)(5,5)$
				& $(2,2)(4,1)(4,1)$
				& $(2,2)(4,4)(4,4)$
				& $(3,3)(3,3)(3,3)$
				& $(3,3)(3,3)(3,5)$
				& $(3,3)(3,3)(4,1)$
				\\
				$\alpha_{IJK}$
				& $\dfrac{1}{\sqrt{2}}$
				& $-\dfrac{\sqrt{3}}{2}$
				& $-\dfrac{1}{2}$
				& $1$
				& $1$
				& $1$
				& $0$
				& $\dfrac{1}{\sqrt[4]{2}}$
				\\
				\midrule
				
				No. & 25 & 26 & 27 & 28 & 29 & 30 & 31 & 32 \\
				\midrule
				Representative
				& $(3,3)(3,3)(4,4)$
				& $(3,3)(3,3)(5,3)$
				& $(3,3)(3,3)(5,5)$
				& $(3,3)(4,1)(4,4)$
				& $(3,3)(4,4)(4,4)$
				& $(4,1)(4,1)(4,1)$
				& $(4,1)(4,4)(4,4)$
				& $(4,4)(4,4)(4,4)$
				\\
				$\alpha_{IJK}$
				& $\dfrac{1}{\sqrt{2}}$
				& $0$
				& $1$
				& $-\dfrac{1}{\sqrt[4]{2}}$
				& $0$
				& $\sqrt[4]{2}$
				& $0$
				& $\sqrt{2}$
				\\
				\bottomrule
			\end{tabular}%
		}
	\end{table}

	\subsubsection{\texorpdfstring{$E_8$-series}{E8-series}}\label{subsec:result_E8}
	For the $E_8$ series, we fix the remaining operator-sign freedom by assigning the coefficients of the following triplets the values listed in the corresponding tables: 
	\begin{equation}
		\begin{split}
			&(3,3)(3,3)(3,3) \\
			&(5,5)(5,5)(5,5) \\
			&(6,6)(6,6)(6,6) \\
			&(0,5)(0,5)(0,5) \\
			&(5,0)(5,0)(5,0) \\
			&(3,6)(3,6)(3,6) \\
			&(6,3)(6,3)(6,3) \\
			&(5,9)(5,9)(5,9) \\
			&(9,5)(9,5)(9,5) \\
		\end{split}
	\end{equation}
	The representative reduced coefficients for $(p,q)=(1,30)$, $(7,30)$, $(11,30)$ and $(13,30)$ are listed below.
	
	\clearpage
	\begingroup
    \noindent\begin{minipage}{\textwidth}
	\fontsize{6}{6.5}\selectfont
	\setlength{\tabcolsep}{1.6pt}
	\renewcommand{\arraystretch}{1.15}
	\setlength{\aboverulesep}{0.4ex}
	\setlength{\belowrulesep}{0.65ex}
	\refstepcounter{table}\label{tab:E8-I1-results}
	\begin{center}
		{\normalsize\textbf{Table~\thetable: Representative reduced coefficients for $(p,q)=(1,30)$.}}
		\par\vspace{3pt}
		\makebox[\textwidth][c]{\resizebox{1.15\textwidth}{!}{%
				\begin{tabular}{c@{\hspace{3pt}}cccccccc}
					\toprule
					No. & 1 & 2 & 3 & 4 & 5 & 6 & 7 & 8 \\*
					\midrule
					Triplet
					& $(0,5)(0,5)(0,5)$
					& $(0,5)(0,5)(0,9)$
					& $(0,5)(3,3)(3,3)$
					& $(0,5)(3,3)(3,6)$
					& $(0,5)(3,3)(3,8)$
					& $(0,5)(3,6)(3,6)$
					& $(0,5)(3,6)(3,8)$
					& $(0,5)(5,0)(5,5)$ \\*[1.5pt]
					$\alpha_{IJK}$
					& $\mathrm{i}\sqrt{\dfrac{3}{2}}$
					& $\dfrac{\mathrm{i}\sqrt[4]{5}}{\sqrt{2}}$
					& $-\mathrm{i}$
					& $-\dfrac{\sqrt[4]{5}}{\sqrt{2}}$
					& $\dfrac{\mathrm{i}}{\sqrt{2}}$
					& $-\dfrac{\mathrm{i}}{\sqrt{2}}$
					& $\dfrac{\sqrt[4]{5}}{\sqrt{2}}$
					& $-1$ \\[2pt]
					\midrule
					
					No. & 9 & 10 & 11 & 12 & 13 & 14 & 15 & 16 \\*
					\midrule
					Triplet
					& $(0,5)(5,5)(5,5)$
					& $(0,5)(5,5)(5,9)$
					& $(0,5)(6,3)(6,3)$
					& $(0,5)(6,3)(6,6)$
					& $(0,5)(6,3)(6,8)$
					& $(0,5)(6,6)(6,6)$
					& $(0,5)(6,6)(6,8)$
					& $(3,3)(3,3)(3,3)$ \\*[1.5pt]
					$\alpha_{IJK}$
					& $\mathrm{i}\sqrt{\dfrac{3}{2}}$
					& $\dfrac{\mathrm{i}\sqrt[4]{5}}{\sqrt{2}}$
					& $-\mathrm{i}$
					& $\dfrac{\sqrt[4]{5}}{\sqrt{2}}$
					& $-\dfrac{\mathrm{i}}{\sqrt{2}}$
					& $-\dfrac{\mathrm{i}}{\sqrt{2}}$
					& $\dfrac{\sqrt[4]{5}}{\sqrt{2}}$
					& $\sqrt{\dfrac{5}{3}}$ \\[2pt]
					\midrule
					
					No. & 17 & 18 & 19 & 20 & 21 & 22 & 23 & 24 \\*
					\midrule
					Triplet
					& $(3,3)(3,3)(3,6)$
					& $(3,3)(3,3)(5,0)$
					& $(3,3)(3,3)(5,5)$
					& $(3,3)(3,3)(6,3)$
					& $(3,3)(3,3)(6,6)$
					& $(3,3)(3,6)(3,6)$
					& $(3,3)(3,6)(3,8)$
					& $(3,3)(3,6)(5,5)$ \\*[1.5pt]
					$\alpha_{IJK}$
					& $\dfrac{\mathrm{i}\sqrt[4]{5}}{\sqrt{3}}$
					& $-\mathrm{i}$
					& $1$
					& $\dfrac{\mathrm{i}\sqrt[4]{5}}{\sqrt{3}}$
					& $\dfrac{1}{\sqrt{3}}$
					& $-\sqrt{\dfrac{5}{6}}$
					& $\dfrac{\mathrm{i}\sqrt[4]{5}}{\sqrt{2}}$
					& $-\dfrac{\mathrm{i}\sqrt[4]{5}}{\sqrt{2}}$ \\[2pt]
					\midrule
					
					No. & 25 & 26 & 27 & 28 & 29 & 30 & 31 & 32 \\*
					\midrule
					Triplet
					& $(3,3)(3,6)(5,9)$
					& $(3,3)(3,6)(6,3)$
					& $(3,3)(3,6)(6,6)$
					& $(3,3)(3,6)(6,8)$
					& $(3,3)(5,0)(6,3)$
					& $(3,3)(5,0)(8,3)$
					& $(3,3)(5,5)(6,3)$
					& $(3,3)(5,5)(6,6)$ \\*[1.5pt]
					$\alpha_{IJK}$
					& $\dfrac{\mathrm{i}}{\sqrt{2}}$
					& $-\dfrac{1}{\sqrt{3}}$
					& $\dfrac{\mathrm{i}\sqrt[4]{5}}{\sqrt{6}}$
					& $\dfrac{1}{\sqrt{2}}$
					& $-\dfrac{\sqrt[4]{5}}{\sqrt{2}}$
					& $-\dfrac{\mathrm{i}}{\sqrt{2}}$
					& $-\dfrac{\mathrm{i}\sqrt[4]{5}}{\sqrt{2}}$
					& $\dfrac{\sqrt{5}}{2}$ \\[2pt]
					\midrule
					
					No. & 33 & 34 & 35 & 36 & 37 & 38 & 39 & 40 \\*
					\midrule
					Triplet
					& $(3,3)(5,5)(6,8)$
					& $(3,3)(5,5)(8,3)$
					& $(3,3)(5,5)(8,6)$
					& $(3,3)(5,5)(8,8)$
					& $(3,3)(6,3)(6,3)$
					& $(3,3)(6,3)(6,6)$
					& $(3,3)(6,3)(8,3)$
					& $(3,3)(6,3)(8,6)$ \\*[1.5pt]
					$\alpha_{IJK}$
					& $\dfrac{\mathrm{i}\sqrt[4]{5}}{2}$
					& $\dfrac{1}{\sqrt{2}}$
					& $\dfrac{\mathrm{i}\sqrt[4]{5}}{2}$
					& $-\dfrac{1}{2}$
					& $-\sqrt{\dfrac{5}{6}}$
					& $\dfrac{\mathrm{i}\sqrt[4]{5}}{\sqrt{6}}$
					& $\dfrac{\mathrm{i}\sqrt[4]{5}}{\sqrt{2}}$
					& $\dfrac{1}{\sqrt{2}}$ \\[2pt]
					\midrule
					
					No. & 41 & 42 & 43 & 44 & 45 & 46 & 47 & 48 \\*
					\midrule
					Triplet
					& $(3,3)(6,6)(6,6)$
					& $(3,3)(6,6)(6,8)$
					& $(3,3)(6,6)(8,6)$
					& $(3,3)(6,6)(8,8)$
					& $(3,6)(3,6)(3,6)$
					& $(3,6)(3,6)(3,8)$
					& $(3,6)(3,6)(5,0)$
					& $(3,6)(3,6)(5,5)$ \\*[1.5pt]
					$\alpha_{IJK}$
					& $\dfrac{\sqrt{5/3}}{2}$
					& $-\dfrac{\mathrm{i}\sqrt[4]{5}}{2}$
					& $-\dfrac{\mathrm{i}\sqrt[4]{5}}{2}$
					& $-\dfrac{\sqrt{3}}{2}$
					& $\mathrm{i}\sqrt{\dfrac{2}{3}}\sqrt[4]{5}$
					& $0$
					& $-\mathrm{i}$
					& $\dfrac{1}{\sqrt{2}}$ \\[2pt]
					\midrule
					
					No. & 49 & 50 & 51 & 52 & 53 & 54 & 55 & 56 \\*
					\midrule
					Triplet
					& $(3,6)(3,6)(5,9)$
					& $(3,6)(3,6)(6,3)$
					& $(3,6)(3,6)(6,6)$
					& $(3,6)(3,6)(6,8)$
					& $(3,6)(3,6)(6,11)$
					& $(3,6)(5,0)(6,6)$
					& $(3,6)(5,0)(8,6)$
					& $(3,6)(5,5)(6,3)$ \\*[1.5pt]
					$\alpha_{IJK}$
					& $-\dfrac{\sqrt[4]{5}}{\sqrt{2}}$
					& $-\dfrac{\mathrm{i}\sqrt[4]{5}}{\sqrt{6}}$
					& $\sqrt{\dfrac{2}{3}}$
					& $0$
					& $-\dfrac{\mathrm{i}}{\sqrt{2}}$
					& $\dfrac{\sqrt[4]{5}}{\sqrt{2}}$
					& $\dfrac{\mathrm{i}}{\sqrt{2}}$
					& $-\dfrac{\sqrt{5}}{2}$ \\[2pt]
					\midrule
					No. & 57 & 58 & 59 & 60 & 61 & 62 & 63 & 64 \\*
					\midrule
					Triplet
					& $(3,6)(5,5)(6,6)$
					& $(3,6)(5,5)(6,8)$
					& $(3,6)(5,5)(6,11)$
					& $(3,6)(5,5)(8,3)$
					& $(3,6)(5,5)(8,6)$
					& $(3,6)(5,5)(8,8)$
					& $(3,6)(5,5)(8,11)$
					& $(3,6)(6,3)(6,3)$ \\*[1.5pt]
					$\alpha_{IJK}$
					& $\dfrac{\mathrm{i}\sqrt[4]{5}}{2}$
					& $\dfrac{\sqrt{5}}{2}$
					& $-\dfrac{\sqrt[4]{5}}{2}$
					& $-\dfrac{\mathrm{i}\sqrt[4]{5}}{2}$
					& $-\dfrac{1}{2}$
					& $\dfrac{\mathrm{i}\sqrt[4]{5}}{2}$
					& $-\dfrac{\mathrm{i}}{2}$
					& $-\dfrac{\mathrm{i}\sqrt[4]{5}}{\sqrt{6}}$ \\[2pt]
					\midrule
					
					No. & 65 & 66 & 67 & 68 & 69 & 70 & 71 & 72 \\*
					\midrule
					Triplet
					& $(3,6)(6,3)(6,6)$
					& $(3,6)(6,3)(6,8)$
					& $(3,6)(6,3)(8,3)$
					& $(3,6)(6,3)(8,6)$
					& $(3,6)(6,3)(8,8)$
					& $(3,6)(6,6)(6,6)$
					& $(3,6)(6,6)(6,8)$
					& $(3,6)(6,6)(8,6)$ \\*[1.5pt]
					$\alpha_{IJK}$
					& $-\dfrac{\sqrt{5/3}}{2}$
					& $-\dfrac{\mathrm{i}\sqrt[4]{5}}{2}$
					& $-\dfrac{1}{\sqrt{2}}$
					& $\dfrac{\mathrm{i}\sqrt[4]{5}}{2}$
					& $-\dfrac{\sqrt{3}}{2}$
					& $-\dfrac{\mathrm{i}\sqrt[4]{5}}{\sqrt{3}}$
					& $0$
					& $-1$ \\[2pt]
					\midrule
					
					No. & 73 & 74 & 75 & 76 & 77 & 78 & 79 & 80 \\*
					\midrule
					Triplet
					& $(3,6)(6,6)(8,8)$
					& $(5,0)(5,0)(5,0)$
					& $(5,0)(5,0)(9,0)$
					& $(5,0)(5,5)(5,5)$
					& $(5,0)(5,5)(9,5)$
					& $(5,0)(6,3)(6,3)$
					& $(5,0)(6,3)(8,3)$
					& $(5,0)(6,6)(6,6)$ \\*[1.5pt]
					$\alpha_{IJK}$
					& $0$
					& $\mathrm{i}\sqrt{\dfrac{3}{2}}$
					& $\dfrac{\mathrm{i}\sqrt[4]{5}}{\sqrt{2}}$
					& $\mathrm{i}\sqrt{\dfrac{3}{2}}$
					& $\dfrac{\mathrm{i}\sqrt[4]{5}}{\sqrt{2}}$
					& $-\dfrac{\mathrm{i}}{\sqrt{2}}$
					& $\dfrac{\sqrt[4]{5}}{\sqrt{2}}$
					& $-\dfrac{\mathrm{i}}{\sqrt{2}}$ \\[2pt]
					\midrule
					
					No. & 81 & 82 & 83 & 84 & 85 & 86 & 87 & 88 \\*
					\midrule
					Triplet
					& $(5,0)(6,6)(8,6)$
					& $(5,5)(5,5)(5,5)$
					& $(5,5)(5,5)(5,9)$
					& $(5,5)(5,5)(9,5)$
					& $(5,5)(5,5)(9,9)$
					& $(5,5)(6,3)(6,3)$
					& $(5,5)(6,3)(6,6)$
					& $(5,5)(6,3)(6,8)$ \\*[1.5pt]
					$\alpha_{IJK}$
					& $\dfrac{\sqrt[4]{5}}{\sqrt{2}}$
					& $\dfrac{3}{2}$
					& $\dfrac{\sqrt{3}\sqrt[4]{5}}{2}$
					& $\dfrac{\sqrt{3}\sqrt[4]{5}}{2}$
					& $-\dfrac{\sqrt{5}}{2}$
					& $\dfrac{1}{\sqrt{2}}$
					& $\dfrac{\mathrm{i}\sqrt[4]{5}}{2}$
					& $\dfrac{1}{2}$ \\[2pt]
					\midrule
					
					No. & 89 & 90 & 91 & 92 & 93 & 94 & 95 & 96 \\*
					\midrule
					Triplet
					& $(5,5)(6,3)(8,3)$
					& $(5,5)(6,3)(8,6)$
					& $(5,5)(6,3)(8,8)$
					& $(5,5)(6,6)(6,6)$
					& $(5,5)(6,6)(6,8)$
					& $(5,5)(6,6)(8,6)$
					& $(5,5)(6,6)(8,8)$
					& $(6,3)(6,3)(6,3)$ \\*[1.5pt]
					$\alpha_{IJK}$
					& $\dfrac{\mathrm{i}\sqrt[4]{5}}{\sqrt{2}}$
					& $-\dfrac{\sqrt{5}}{2}$
					& $\dfrac{\mathrm{i}\sqrt[4]{5}}{2}$
					& $\dfrac{1}{2}$
					& $\dfrac{\mathrm{i}\sqrt[4]{5}}{2}$
					& $\dfrac{\mathrm{i}\sqrt[4]{5}}{2}$
					& $-\dfrac{\sqrt{5}}{2}$
					& $\mathrm{i}\sqrt{\dfrac{2}{3}}\sqrt[4]{5}$ \\[2pt]
					\midrule
					
					No. & 97 & 98 & 99 & 100 & 101 & 102 & 103 & 104 \\*
					\midrule
					Triplet
					& $(6,3)(6,3)(6,6)$
					& $(6,3)(6,3)(8,3)$
					& $(6,3)(6,3)(8,6)$
					& $(6,3)(6,6)(6,6)$
					& $(6,3)(6,6)(6,8)$
					& $(6,3)(6,6)(8,6)$
					& $(6,3)(6,6)(8,8)$
					& $(6,6)(6,6)(6,6)$ \\*[1.5pt]
					$\alpha_{IJK}$
					& $\sqrt{\dfrac{2}{3}}$
					& $0$
					& $0$
					& $-\dfrac{\mathrm{i}\sqrt[4]{5}}{\sqrt{3}}$
					& $-1$
					& $0$
					& $0$
					& $\dfrac{2}{\sqrt{3}}$ \\[2pt]
					\midrule
					
					No. & 105 & 106 & 107 & & & & & \\*
					\midrule
					Triplet
					& $(6,6)(6,6)(6,8)$
					& $(6,6)(6,6)(8,6)$
					& $(6,6)(6,6)(8,8)$
					& & & & & \\*[1.5pt]
					$\alpha_{IJK}$
					& $0$
					& $0$
					& $0$
					& & & & & \\[2pt]
					\bottomrule
				\end{tabular}%
		}}
	\end{center}
	    \end{minipage}
    \endgroup
	\clearpage
	\begingroup
    \noindent\begin{minipage}{\textwidth}
	\fontsize{6}{6.5}\selectfont
	\setlength{\tabcolsep}{1.6pt}
	\renewcommand{\arraystretch}{1.15}
	\setlength{\aboverulesep}{0.4ex}
	\setlength{\belowrulesep}{0.65ex}
	\refstepcounter{table}\label{tab:E8-I7-results}
	\begin{center}
		{\normalsize\textbf{Table~\thetable: Representative reduced coefficients for $(p,q)=(7,30)$.}}
		\par\vspace{3pt}
		\makebox[\textwidth][c]{\resizebox{1.15\textwidth}{!}{%
				\begin{tabular}{c@{\hspace{3pt}}cccccccc}
					\toprule
					No. & 1 & 2 & 3 & 4 & 5 & 6 & 7 & 8 \\*
					\midrule
					Triplet
					& $(0,5)(0,5)(0,5)$
					& $(0,5)(0,5)(0,9)$
					& $(0,5)(3,3)(3,3)$
					& $(0,5)(3,3)(3,6)$
					& $(0,5)(3,3)(3,8)$
					& $(0,5)(3,6)(3,6)$
					& $(0,5)(3,6)(3,8)$
					& $(0,5)(5,0)(5,5)$ \\*[1.5pt]
					$\alpha_{IJK}$
					& $\sqrt{\dfrac{3}{2}}$
					& $\dfrac{\sqrt[4]{5}}{\sqrt{2}}$
					& $-1$
					& $-\dfrac{\sqrt[4]{5}}{\sqrt{2}}$
					& $\dfrac{1}{\sqrt{2}}$
					& $\dfrac{1}{\sqrt{2}}$
					& $\dfrac{\sqrt[4]{5}}{\sqrt{2}}$
					& $1$ \\[2pt]
					\midrule
					
					No. & 9 & 10 & 11 & 12 & 13 & 14 & 15 & 16 \\*
					\midrule
					Triplet
					& $(0,5)(5,5)(5,5)$
					& $(0,5)(5,5)(5,9)$
					& $(0,5)(6,3)(6,3)$
					& $(0,5)(6,3)(6,6)$
					& $(0,5)(6,3)(6,8)$
					& $(0,5)(6,6)(6,6)$
					& $(0,5)(6,6)(6,8)$
					& $(3,3)(3,3)(3,3)$ \\*[1.5pt]
					$\alpha_{IJK}$
					& $\sqrt{\dfrac{3}{2}}$
					& $\dfrac{\sqrt[4]{5}}{\sqrt{2}}$
					& $-1$
					& $-\dfrac{\sqrt[4]{5}}{\sqrt{2}}$
					& $\dfrac{1}{\sqrt{2}}$
					& $\dfrac{1}{\sqrt{2}}$
					& $\dfrac{\sqrt[4]{5}}{\sqrt{2}}$
					& $\sqrt{\dfrac{5}{3}}$ \\[2pt]
					\midrule
					
					No. & 17 & 18 & 19 & 20 & 21 & 22 & 23 & 24 \\*
					\midrule
					Triplet
					& $(3,3)(3,3)(3,6)$
					& $(3,3)(3,3)(5,0)$
					& $(3,3)(3,3)(5,5)$
					& $(3,3)(3,3)(6,3)$
					& $(3,3)(3,3)(6,6)$
					& $(3,3)(3,6)(3,6)$
					& $(3,3)(3,6)(3,8)$
					& $(3,3)(3,6)(5,5)$ \\*[1.5pt]
					$\alpha_{IJK}$
					& $-\dfrac{\sqrt[4]{5}}{\sqrt{3}}$
					& $-1$
					& $1$
					& $-\dfrac{\sqrt[4]{5}}{\sqrt{3}}$
					& $\dfrac{1}{\sqrt{3}}$
					& $\sqrt{\dfrac{5}{6}}$
					& $\dfrac{\sqrt[4]{5}}{\sqrt{2}}$
					& $\dfrac{\sqrt[4]{5}}{\sqrt{2}}$ \\[2pt]
					\midrule
					
					No. & 25 & 26 & 27 & 28 & 29 & 30 & 31 & 32 \\*
					\midrule
					Triplet
					& $(3,3)(3,6)(5,9)$
					& $(3,3)(3,6)(6,3)$
					& $(3,3)(3,6)(6,6)$
					& $(3,3)(3,6)(6,8)$
					& $(3,3)(5,0)(6,3)$
					& $(3,3)(5,0)(8,3)$
					& $(3,3)(5,5)(6,3)$
					& $(3,3)(5,5)(6,6)$ \\*[1.5pt]
					$\alpha_{IJK}$
					& $-\dfrac{1}{\sqrt{2}}$
					& $\dfrac{1}{\sqrt{3}}$
					& $-\dfrac{\sqrt[4]{5}}{\sqrt{6}}$
					& $-\dfrac{1}{\sqrt{2}}$
					& $-\dfrac{\sqrt[4]{5}}{\sqrt{2}}$
					& $-\dfrac{1}{\sqrt{2}}$
					& $\dfrac{\sqrt[4]{5}}{\sqrt{2}}$
					& $\dfrac{\sqrt{5}}{2}$ \\[2pt]
					\midrule
					
					No. & 33 & 34 & 35 & 36 & 37 & 38 & 39 & 40 \\*
					\midrule
					Triplet
					& $(3,3)(5,5)(6,8)$
					& $(3,3)(5,5)(8,3)$
					& $(3,3)(5,5)(8,6)$
					& $(3,3)(5,5)(8,8)$
					& $(3,3)(6,3)(6,3)$
					& $(3,3)(6,3)(6,6)$
					& $(3,3)(6,3)(8,3)$
					& $(3,3)(6,3)(8,6)$ \\*[1.5pt]
					$\alpha_{IJK}$
					& $\dfrac{\sqrt[4]{5}}{2}$
					& $\dfrac{1}{\sqrt{2}}$
					& $\dfrac{\sqrt[4]{5}}{2}$
					& $-\dfrac{1}{2}$
					& $\sqrt{\dfrac{5}{6}}$
					& $-\dfrac{\sqrt[4]{5}}{\sqrt{6}}$
					& $\dfrac{\sqrt[4]{5}}{\sqrt{2}}$
					& $-\dfrac{1}{\sqrt{2}}$ \\[2pt]
					\midrule
					
					No. & 41 & 42 & 43 & 44 & 45 & 46 & 47 & 48 \\*
					\midrule
					Triplet
					& $(3,3)(6,6)(6,6)$
					& $(3,3)(6,6)(6,8)$
					& $(3,3)(6,6)(8,6)$
					& $(3,3)(6,6)(8,8)$
					& $(3,6)(3,6)(3,6)$
					& $(3,6)(3,6)(3,8)$
					& $(3,6)(3,6)(5,0)$
					& $(3,6)(3,6)(5,5)$ \\*[1.5pt]
					$\alpha_{IJK}$
					& $\dfrac{\sqrt{5/3}}{2}$
					& $\dfrac{\sqrt[4]{5}}{2}$
					& $\dfrac{\sqrt[4]{5}}{2}$
					& $-\dfrac{\sqrt{3}}{2}$
					& $\sqrt{\dfrac{2}{3}}\sqrt[4]{5}$
					& $0$
					& $-1$
					& $-\dfrac{1}{\sqrt{2}}$ \\[2pt]
					\midrule
					
					No. & 49 & 50 & 51 & 52 & 53 & 54 & 55 & 56 \\*
					\midrule
					Triplet
					& $(3,6)(3,6)(5,9)$
					& $(3,6)(3,6)(6,3)$
					& $(3,6)(3,6)(6,6)$
					& $(3,6)(3,6)(6,8)$
					& $(3,6)(3,6)(6,11)$
					& $(3,6)(5,0)(6,6)$
					& $(3,6)(5,0)(8,6)$
					& $(3,6)(5,5)(6,3)$ \\*[1.5pt]
					$\alpha_{IJK}$
					& $-\dfrac{\sqrt[4]{5}}{\sqrt{2}}$
					& $-\dfrac{\sqrt[4]{5}}{\sqrt{6}}$
					& $-\sqrt{\dfrac{2}{3}}$
					& $0$
					& $-\dfrac{\mathrm{i}}{\sqrt{2}}$
					& $-\dfrac{\sqrt[4]{5}}{\sqrt{2}}$
					& $-\dfrac{1}{\sqrt{2}}$
					& $\dfrac{\sqrt{5}}{2}$ \\[2pt]
					\midrule
					No. & 57 & 58 & 59 & 60 & 61 & 62 & 63 & 64 \\*
					\midrule
					Triplet
					& $(3,6)(5,5)(6,6)$
					& $(3,6)(5,5)(6,8)$
					& $(3,6)(5,5)(6,11)$
					& $(3,6)(5,5)(8,3)$
					& $(3,6)(5,5)(8,6)$
					& $(3,6)(5,5)(8,8)$
					& $(3,6)(5,5)(8,11)$
					& $(3,6)(6,3)(6,3)$ \\*[1.5pt]
					$\alpha_{IJK}$
					& $-\dfrac{\sqrt[4]{5}}{2}$
					& $\dfrac{\sqrt{5}}{2}$
					& $\dfrac{\mathrm{i}\sqrt[4]{5}}{2}$
					& $\dfrac{\sqrt[4]{5}}{2}$
					& $-\dfrac{1}{2}$
					& $-\dfrac{\sqrt[4]{5}}{2}$
					& $\dfrac{\mathrm{i}}{2}$
					& $-\dfrac{\sqrt[4]{5}}{\sqrt{6}}$ \\[2pt]
					\midrule
					
					No. & 65 & 66 & 67 & 68 & 69 & 70 & 71 & 72 \\*
					\midrule
					Triplet
					& $(3,6)(6,3)(6,6)$
					& $(3,6)(6,3)(6,8)$
					& $(3,6)(6,3)(8,3)$
					& $(3,6)(6,3)(8,6)$
					& $(3,6)(6,3)(8,8)$
					& $(3,6)(6,6)(6,6)$
					& $(3,6)(6,6)(6,8)$
					& $(3,6)(6,6)(8,6)$ \\*[1.5pt]
					$\alpha_{IJK}$
					& $\dfrac{\sqrt{5/3}}{2}$
					& $-\dfrac{\sqrt[4]{5}}{2}$
					& $-\dfrac{1}{\sqrt{2}}$
					& $\dfrac{\sqrt[4]{5}}{2}$
					& $\dfrac{\sqrt{3}}{2}$
					& $\dfrac{\sqrt[4]{5}}{\sqrt{3}}$
					& $0$
					& $1$ \\[2pt]
					\midrule
					
					No. & 73 & 74 & 75 & 76 & 77 & 78 & 79 & 80 \\*
					\midrule
					Triplet
					& $(3,6)(6,6)(8,8)$
					& $(5,0)(5,0)(5,0)$
					& $(5,0)(5,0)(9,0)$
					& $(5,0)(5,5)(5,5)$
					& $(5,0)(5,5)(9,5)$
					& $(5,0)(6,3)(6,3)$
					& $(5,0)(6,3)(8,3)$
					& $(5,0)(6,6)(6,6)$ \\*[1.5pt]
					$\alpha_{IJK}$
					& $0$
					& $\sqrt{\dfrac{3}{2}}$
					& $\dfrac{\sqrt[4]{5}}{\sqrt{2}}$
					& $\sqrt{\dfrac{3}{2}}$
					& $\dfrac{\sqrt[4]{5}}{\sqrt{2}}$
					& $\dfrac{1}{\sqrt{2}}$
					& $\dfrac{\sqrt[4]{5}}{\sqrt{2}}$
					& $\dfrac{1}{\sqrt{2}}$ \\[2pt]
					\midrule
					
					No. & 81 & 82 & 83 & 84 & 85 & 86 & 87 & 88 \\*
					\midrule
					Triplet
					& $(5,0)(6,6)(8,6)$
					& $(5,5)(5,5)(5,5)$
					& $(5,5)(5,5)(5,9)$
					& $(5,5)(5,5)(9,5)$
					& $(5,5)(5,5)(9,9)$
					& $(5,5)(6,3)(6,3)$
					& $(5,5)(6,3)(6,6)$
					& $(5,5)(6,3)(6,8)$ \\*[1.5pt]
					$\alpha_{IJK}$
					& $\dfrac{\sqrt[4]{5}}{\sqrt{2}}$
					& $\dfrac{3}{2}$
					& $\dfrac{\sqrt{3}\sqrt[4]{5}}{2}$
					& $\dfrac{\sqrt{3}\sqrt[4]{5}}{2}$
					& $-\dfrac{\sqrt{5}}{2}$
					& $-\dfrac{1}{\sqrt{2}}$
					& $-\dfrac{\sqrt[4]{5}}{2}$
					& $\dfrac{1}{2}$ \\[2pt]
					\midrule
					
					No. & 89 & 90 & 91 & 92 & 93 & 94 & 95 & 96 \\*
					\midrule
					Triplet
					& $(5,5)(6,3)(8,3)$
					& $(5,5)(6,3)(8,6)$
					& $(5,5)(6,3)(8,8)$
					& $(5,5)(6,6)(6,6)$
					& $(5,5)(6,6)(6,8)$
					& $(5,5)(6,6)(8,6)$
					& $(5,5)(6,6)(8,8)$
					& $(6,3)(6,3)(6,3)$ \\*[1.5pt]
					$\alpha_{IJK}$
					& $-\dfrac{\sqrt[4]{5}}{\sqrt{2}}$
					& $-\dfrac{\sqrt{5}}{2}$
					& $-\dfrac{\sqrt[4]{5}}{2}$
					& $\dfrac{1}{2}$
					& $\dfrac{\sqrt[4]{5}}{2}$
					& $\dfrac{\sqrt[4]{5}}{2}$
					& $-\dfrac{\sqrt{5}}{2}$
					& $\sqrt{\dfrac{2}{3}}\sqrt[4]{5}$ \\[2pt]
					\midrule
					
					No. & 97 & 98 & 99 & 100 & 101 & 102 & 103 & 104 \\*
					\midrule
					Triplet
					& $(6,3)(6,3)(6,6)$
					& $(6,3)(6,3)(8,3)$
					& $(6,3)(6,3)(8,6)$
					& $(6,3)(6,6)(6,6)$
					& $(6,3)(6,6)(6,8)$
					& $(6,3)(6,6)(8,6)$
					& $(6,3)(6,6)(8,8)$
					& $(6,6)(6,6)(6,6)$ \\*[1.5pt]
					$\alpha_{IJK}$
					& $-\sqrt{\dfrac{2}{3}}$
					& $0$
					& $0$
					& $\dfrac{\sqrt[4]{5}}{\sqrt{3}}$
					& $1$
					& $0$
					& $0$
					& $\dfrac{2}{\sqrt{3}}$ \\[2pt]
					\midrule
					
					No. & 105 & 106 & 107 & & & & & \\*
					\midrule
					Triplet
					& $(6,6)(6,6)(6,8)$
					& $(6,6)(6,6)(8,6)$
					& $(6,6)(6,6)(8,8)$
					& & & & & \\*[1.5pt]
					$\alpha_{IJK}$
					& $0$
					& $0$
					& $0$
					& & & & & \\[2pt]
					\bottomrule
				\end{tabular}%
		}}
	\end{center}
	    \end{minipage}
    \endgroup
	\clearpage
	\begingroup
    \noindent\begin{minipage}{\textwidth}
	\fontsize{6}{6.5}\selectfont
	\setlength{\tabcolsep}{1.6pt}
	\renewcommand{\arraystretch}{1.15}
	\setlength{\aboverulesep}{0.4ex}
	\setlength{\belowrulesep}{0.65ex}
	\refstepcounter{table}\label{tab:E8-I11-results}
	\begin{center}
		{\normalsize\textbf{Table~\thetable: Representative reduced coefficients for $(p,q)=(11,30)$.}}
		\par\vspace{3pt}
		\makebox[\textwidth][c]{\resizebox{1.15\textwidth}{!}{%
				\begin{tabular}{c@{\hspace{3pt}}cccccccc}
					\toprule
					No. & 1 & 2 & 3 & 4 & 5 & 6 & 7 & 8 \\*
					\midrule
					Triplet
					& $(0,5)(0,5)(0,5)$
					& $(0,5)(0,5)(0,9)$
					& $(0,5)(3,3)(3,3)$
					& $(0,5)(3,3)(3,6)$
					& $(0,5)(3,3)(3,8)$
					& $(0,5)(3,6)(3,6)$
					& $(0,5)(3,6)(3,8)$
					& $(0,5)(5,0)(5,5)$ \\*[1.5pt]
					$\alpha_{IJK}$
					& $\sqrt{\dfrac{3}{2}}$
					& $\dfrac{\mathrm{i}\sqrt[4]{5}}{\sqrt{2}}$
					& $1$
					& $\dfrac{\mathrm{i}\sqrt[4]{5}}{\sqrt{2}}$
					& $-\dfrac{\mathrm{i}}{\sqrt{2}}$
					& $\dfrac{1}{\sqrt{2}}$
					& $-\dfrac{\sqrt[4]{5}}{\sqrt{2}}$
					& $1$ \\[2pt]
					\midrule
					
					No. & 9 & 10 & 11 & 12 & 13 & 14 & 15 & 16 \\*
					\midrule
					Triplet
					& $(0,5)(5,5)(5,5)$
					& $(0,5)(5,5)(5,9)$
					& $(0,5)(6,3)(6,3)$
					& $(0,5)(6,3)(6,6)$
					& $(0,5)(6,3)(6,8)$
					& $(0,5)(6,6)(6,6)$
					& $(0,5)(6,6)(6,8)$
					& $(3,3)(3,3)(3,3)$ \\*[1.5pt]
					$\alpha_{IJK}$
					& $\sqrt{\dfrac{3}{2}}$
					& $\dfrac{\mathrm{i}\sqrt[4]{5}}{\sqrt{2}}$
					& $1$
					& $-\dfrac{\mathrm{i}\sqrt[4]{5}}{\sqrt{2}}$
					& $\dfrac{\mathrm{i}}{\sqrt{2}}$
					& $\dfrac{1}{\sqrt{2}}$
					& $-\dfrac{\sqrt[4]{5}}{\sqrt{2}}$
					& $\sqrt{\dfrac{5}{3}}$ \\[2pt]
					\midrule
					
					No. & 17 & 18 & 19 & 20 & 21 & 22 & 23 & 24 \\*
					\midrule
					Triplet
					& $(3,3)(3,3)(3,6)$
					& $(3,3)(3,3)(5,0)$
					& $(3,3)(3,3)(5,5)$
					& $(3,3)(3,3)(6,3)$
					& $(3,3)(3,3)(6,6)$
					& $(3,3)(3,6)(3,6)$
					& $(3,3)(3,6)(3,8)$
					& $(3,3)(3,6)(5,5)$ \\*[1.5pt]
					$\alpha_{IJK}$
					& $\dfrac{\mathrm{i}\sqrt[4]{5}}{\sqrt{3}}$
					& $1$
					& $1$
					& $\dfrac{\mathrm{i}\sqrt[4]{5}}{\sqrt{3}}$
					& $\dfrac{1}{\sqrt{3}}$
					& $\sqrt{\dfrac{5}{6}}$
					& $-\dfrac{\sqrt[4]{5}}{\sqrt{2}}$
					& $\dfrac{\mathrm{i}\sqrt[4]{5}}{\sqrt{2}}$ \\[2pt]
					\midrule
					
					No. & 25 & 26 & 27 & 28 & 29 & 30 & 31 & 32 \\*
					\midrule
					Triplet
					& $(3,3)(3,6)(5,9)$
					& $(3,3)(3,6)(6,3)$
					& $(3,3)(3,6)(6,6)$
					& $(3,3)(3,6)(6,8)$
					& $(3,3)(5,0)(6,3)$
					& $(3,3)(5,0)(8,3)$
					& $(3,3)(5,5)(6,3)$
					& $(3,3)(5,5)(6,6)$ \\*[1.5pt]
					$\alpha_{IJK}$
					& $-\dfrac{1}{\sqrt{2}}$
					& $-\dfrac{1}{\sqrt{3}}$
					& $-\dfrac{\mathrm{i}\sqrt[4]{5}}{\sqrt{6}}$
					& $\dfrac{\mathrm{i}}{\sqrt{2}}$
					& $\dfrac{\mathrm{i}\sqrt[4]{5}}{\sqrt{2}}$
					& $\dfrac{\mathrm{i}}{\sqrt{2}}$
					& $\dfrac{\mathrm{i}\sqrt[4]{5}}{\sqrt{2}}$
					& $\dfrac{\sqrt{5}}{2}$ \\[2pt]
					\midrule
					
					No. & 33 & 34 & 35 & 36 & 37 & 38 & 39 & 40 \\*
					\midrule
					Triplet
					& $(3,3)(5,5)(6,8)$
					& $(3,3)(5,5)(8,3)$
					& $(3,3)(5,5)(8,6)$
					& $(3,3)(5,5)(8,8)$
					& $(3,3)(6,3)(6,3)$
					& $(3,3)(6,3)(6,6)$
					& $(3,3)(6,3)(8,3)$
					& $(3,3)(6,3)(8,6)$ \\*[1.5pt]
					$\alpha_{IJK}$
					& $\dfrac{\sqrt[4]{5}}{2}$
					& $\dfrac{\mathrm{i}}{\sqrt{2}}$
					& $\dfrac{\sqrt[4]{5}}{2}$
					& $\dfrac{1}{2}$
					& $\sqrt{\dfrac{5}{6}}$
					& $-\dfrac{\mathrm{i}\sqrt[4]{5}}{\sqrt{6}}$
					& $-\dfrac{\sqrt[4]{5}}{\sqrt{2}}$
					& $\dfrac{\mathrm{i}}{\sqrt{2}}$ \\[2pt]
					\midrule
					
					No. & 41 & 42 & 43 & 44 & 45 & 46 & 47 & 48 \\*
					\midrule
					Triplet
					& $(3,3)(6,6)(6,6)$
					& $(3,3)(6,6)(6,8)$
					& $(3,3)(6,6)(8,6)$
					& $(3,3)(6,6)(8,8)$
					& $(3,6)(3,6)(3,6)$
					& $(3,6)(3,6)(3,8)$
					& $(3,6)(3,6)(5,0)$
					& $(3,6)(3,6)(5,5)$ \\*[1.5pt]
					$\alpha_{IJK}$
					& $\dfrac{\sqrt{5/3}}{2}$
					& $-\dfrac{\sqrt[4]{5}}{2}$
					& $-\dfrac{\sqrt[4]{5}}{2}$
					& $\dfrac{\sqrt{3}}{2}$
					& $\mathrm{i}\sqrt{\dfrac{2}{3}}\sqrt[4]{5}$
					& $0$
					& $1$
					& $\dfrac{1}{\sqrt{2}}$ \\[2pt]
					\midrule
					
					No. & 49 & 50 & 51 & 52 & 53 & 54 & 55 & 56 \\*
					\midrule
					Triplet
					& $(3,6)(3,6)(5,9)$
					& $(3,6)(3,6)(6,3)$
					& $(3,6)(3,6)(6,6)$
					& $(3,6)(3,6)(6,8)$
					& $(3,6)(3,6)(6,11)$
					& $(3,6)(5,0)(6,6)$
					& $(3,6)(5,0)(8,6)$
					& $(3,6)(5,5)(6,3)$ \\*[1.5pt]
					$\alpha_{IJK}$
					& $\dfrac{\mathrm{i}\sqrt[4]{5}}{\sqrt{2}}$
					& $\dfrac{\mathrm{i}\sqrt[4]{5}}{\sqrt{6}}$
					& $\sqrt{\dfrac{2}{3}}$
					& $0$
					& $-\dfrac{1}{\sqrt{2}}$
					& $-\dfrac{\mathrm{i}\sqrt[4]{5}}{\sqrt{2}}$
					& $-\dfrac{\mathrm{i}}{\sqrt{2}}$
					& $-\dfrac{\sqrt{5}}{2}$ \\[2pt]
					\midrule
					No. & 57 & 58 & 59 & 60 & 61 & 62 & 63 & 64 \\*
					\midrule
					Triplet
					& $(3,6)(5,5)(6,6)$
					& $(3,6)(5,5)(6,8)$
					& $(3,6)(5,5)(6,11)$
					& $(3,6)(5,5)(8,3)$
					& $(3,6)(5,5)(8,6)$
					& $(3,6)(5,5)(8,8)$
					& $(3,6)(5,5)(8,11)$
					& $(3,6)(6,3)(6,3)$ \\*[1.5pt]
					$\alpha_{IJK}$
					& $-\dfrac{\mathrm{i}\sqrt[4]{5}}{2}$
					& $-\dfrac{\mathrm{i}\sqrt{5}}{2}$
					& $-\dfrac{\mathrm{i}\sqrt[4]{5}}{2}$
					& $-\dfrac{\sqrt[4]{5}}{2}$
					& $-\dfrac{\mathrm{i}}{2}$
					& $-\dfrac{\mathrm{i}\sqrt[4]{5}}{2}$
					& $-\dfrac{\mathrm{i}}{2}$
					& $\dfrac{\mathrm{i}\sqrt[4]{5}}{\sqrt{6}}$ \\[2pt]
					\midrule
					
					No. & 65 & 66 & 67 & 68 & 69 & 70 & 71 & 72 \\*
					\midrule
					Triplet
					& $(3,6)(6,3)(6,6)$
					& $(3,6)(6,3)(6,8)$
					& $(3,6)(6,3)(8,3)$
					& $(3,6)(6,3)(8,6)$
					& $(3,6)(6,3)(8,8)$
					& $(3,6)(6,6)(6,6)$
					& $(3,6)(6,6)(6,8)$
					& $(3,6)(6,6)(8,6)$ \\*[1.5pt]
					$\alpha_{IJK}$
					& $-\dfrac{\sqrt{5/3}}{2}$
					& $-\dfrac{\sqrt[4]{5}}{2}$
					& $-\dfrac{\mathrm{i}}{\sqrt{2}}$
					& $\dfrac{\sqrt[4]{5}}{2}$
					& $\dfrac{\sqrt{3}}{2}$
					& $\dfrac{\mathrm{i}\sqrt[4]{5}}{\sqrt{3}}$
					& $0$
					& $-\mathrm{i}$ \\[2pt]
					\midrule
					
					No. & 73 & 74 & 75 & 76 & 77 & 78 & 79 & 80 \\*
					\midrule
					Triplet
					& $(3,6)(6,6)(8,8)$
					& $(5,0)(5,0)(5,0)$
					& $(5,0)(5,0)(9,0)$
					& $(5,0)(5,5)(5,5)$
					& $(5,0)(5,5)(9,5)$
					& $(5,0)(6,3)(6,3)$
					& $(5,0)(6,3)(8,3)$
					& $(5,0)(6,6)(6,6)$ \\*[1.5pt]
					$\alpha_{IJK}$
					& $0$
					& $\sqrt{\dfrac{3}{2}}$
					& $\dfrac{\mathrm{i}\sqrt[4]{5}}{\sqrt{2}}$
					& $\sqrt{\dfrac{3}{2}}$
					& $\dfrac{\mathrm{i}\sqrt[4]{5}}{\sqrt{2}}$
					& $\dfrac{1}{\sqrt{2}}$
					& $-\dfrac{\sqrt[4]{5}}{\sqrt{2}}$
					& $\dfrac{1}{\sqrt{2}}$ \\[2pt]
					\midrule
					
					No. & 81 & 82 & 83 & 84 & 85 & 86 & 87 & 88 \\*
					\midrule
					Triplet
					& $(5,0)(6,6)(8,6)$
					& $(5,5)(5,5)(5,5)$
					& $(5,5)(5,5)(5,9)$
					& $(5,5)(5,5)(9,5)$
					& $(5,5)(5,5)(9,9)$
					& $(5,5)(6,3)(6,3)$
					& $(5,5)(6,3)(6,6)$
					& $(5,5)(6,3)(6,8)$ \\*[1.5pt]
					$\alpha_{IJK}$
					& $-\dfrac{\sqrt[4]{5}}{\sqrt{2}}$
					& $\dfrac{3}{2}$
					& $\dfrac{\mathrm{i}\sqrt{3}\sqrt[4]{5}}{2}$
					& $\dfrac{\mathrm{i}\sqrt{3}\sqrt[4]{5}}{2}$
					& $\dfrac{\sqrt{5}}{2}$
					& $\dfrac{1}{\sqrt{2}}$
					& $-\dfrac{\mathrm{i}\sqrt[4]{5}}{2}$
					& $\dfrac{\mathrm{i}}{2}$ \\[2pt]
					\midrule
					
					No. & 89 & 90 & 91 & 92 & 93 & 94 & 95 & 96 \\*
					\midrule
					Triplet
					& $(5,5)(6,3)(8,3)$
					& $(5,5)(6,3)(8,6)$
					& $(5,5)(6,3)(8,8)$
					& $(5,5)(6,6)(6,6)$
					& $(5,5)(6,6)(6,8)$
					& $(5,5)(6,6)(8,6)$
					& $(5,5)(6,6)(8,8)$
					& $(6,3)(6,3)(6,3)$ \\*[1.5pt]
					$\alpha_{IJK}$
					& $-\dfrac{\sqrt[4]{5}}{\sqrt{2}}$
					& $\dfrac{\mathrm{i}\sqrt{5}}{2}$
					& $-\dfrac{\mathrm{i}\sqrt[4]{5}}{2}$
					& $\dfrac{1}{2}$
					& $-\dfrac{\sqrt[4]{5}}{2}$
					& $-\dfrac{\sqrt[4]{5}}{2}$
					& $\dfrac{\sqrt{5}}{2}$
					& $\mathrm{i}\sqrt{\dfrac{2}{3}}\sqrt[4]{5}$ \\[2pt]
					\midrule
					
					No. & 97 & 98 & 99 & 100 & 101 & 102 & 103 & 104 \\*
					\midrule
					Triplet
					& $(6,3)(6,3)(6,6)$
					& $(6,3)(6,3)(8,3)$
					& $(6,3)(6,3)(8,6)$
					& $(6,3)(6,6)(6,6)$
					& $(6,3)(6,6)(6,8)$
					& $(6,3)(6,6)(8,6)$
					& $(6,3)(6,6)(8,8)$
					& $(6,6)(6,6)(6,6)$ \\*[1.5pt]
					$\alpha_{IJK}$
					& $\sqrt{\dfrac{2}{3}}$
					& $0$
					& $0$
					& $\dfrac{\mathrm{i}\sqrt[4]{5}}{\sqrt{3}}$
					& $-\mathrm{i}$
					& $0$
					& $0$
					& $\dfrac{2}{\sqrt{3}}$ \\[2pt]
					\midrule
					
					No. & 105 & 106 & 107 & & & & & \\*
					\midrule
					Triplet
					& $(6,6)(6,6)(6,8)$
					& $(6,6)(6,6)(8,6)$
					& $(6,6)(6,6)(8,8)$
					& & & & & \\*[1.5pt]
					$\alpha_{IJK}$
					& $0$
					& $0$
					& $0$
					& & & & & \\[2pt]
					\bottomrule
				\end{tabular}%
		}}
	\end{center}
	    \end{minipage}
    \endgroup
	\clearpage
	\begingroup
    \noindent\begin{minipage}{\textwidth}
	\fontsize{6}{6.5}\selectfont
	\setlength{\tabcolsep}{1.6pt}
	\renewcommand{\arraystretch}{1.15}
	\setlength{\aboverulesep}{0.4ex}
	\setlength{\belowrulesep}{0.65ex}
	\refstepcounter{table}\label{tab:E8-I13-results}
	\begin{center}
		{\normalsize\textbf{Table~\thetable: Representative reduced coefficients for $(p,q)=(13,30)$.}}
		\par\vspace{3pt}
		\makebox[\textwidth][c]{\resizebox{1.15\textwidth}{!}{%
				\begin{tabular}{c@{\hspace{3pt}}cccccccc}
					\toprule
					No. & 1 & 2 & 3 & 4 & 5 & 6 & 7 & 8 \\*
					\midrule
					Triplet
					& $(0,5)(0,5)(0,5)$
					& $(0,5)(0,5)(0,9)$
					& $(0,5)(3,3)(3,3)$
					& $(0,5)(3,3)(3,6)$
					& $(0,5)(3,3)(3,8)$
					& $(0,5)(3,6)(3,6)$
					& $(0,5)(3,6)(3,8)$
					& $(0,5)(5,0)(5,5)$ \\*[1.5pt]
					$\alpha_{IJK}$
					& $\sqrt{\dfrac{3}{2}}$
					& $\dfrac{\mathrm{i}\sqrt[4]{5}}{\sqrt{2}}$
					& $1$
					& $-\dfrac{\mathrm{i}\sqrt[4]{5}}{\sqrt{2}}$
					& $\dfrac{1}{\sqrt{2}}$
					& $\dfrac{1}{\sqrt{2}}$
					& $\dfrac{\mathrm{i}\sqrt[4]{5}}{\sqrt{2}}$
					& $1$ \\[2pt]
					\midrule
					
					No. & 9 & 10 & 11 & 12 & 13 & 14 & 15 & 16 \\*
					\midrule
					Triplet
					& $(0,5)(5,5)(5,5)$
					& $(0,5)(5,5)(5,9)$
					& $(0,5)(6,3)(6,3)$
					& $(0,5)(6,3)(6,6)$
					& $(0,5)(6,3)(6,8)$
					& $(0,5)(6,6)(6,6)$
					& $(0,5)(6,6)(6,8)$
					& $(3,3)(3,3)(3,3)$ \\*[1.5pt]
					$\alpha_{IJK}$
					& $\sqrt{\dfrac{3}{2}}$
					& $\dfrac{\mathrm{i}\sqrt[4]{5}}{\sqrt{2}}$
					& $1$
					& $\dfrac{\mathrm{i}\sqrt[4]{5}}{\sqrt{2}}$
					& $-\dfrac{1}{\sqrt{2}}$
					& $\dfrac{1}{\sqrt{2}}$
					& $\dfrac{\mathrm{i}\sqrt[4]{5}}{\sqrt{2}}$
					& $\sqrt{\dfrac{5}{3}}$ \\[2pt]
					\midrule
					
					No. & 17 & 18 & 19 & 20 & 21 & 22 & 23 & 24 \\*
					\midrule
					Triplet
					& $(3,3)(3,3)(3,6)$
					& $(3,3)(3,3)(5,0)$
					& $(3,3)(3,3)(5,5)$
					& $(3,3)(3,3)(6,3)$
					& $(3,3)(3,3)(6,6)$
					& $(3,3)(3,6)(3,6)$
					& $(3,3)(3,6)(3,8)$
					& $(3,3)(3,6)(5,5)$ \\*[1.5pt]
					$\alpha_{IJK}$
					& $\dfrac{\mathrm{i}\sqrt[4]{5}}{\sqrt{3}}$
					& $1$
					& $1$
					& $\dfrac{\mathrm{i}\sqrt[4]{5}}{\sqrt{3}}$
					& $\dfrac{1}{\sqrt{3}}$
					& $\sqrt{\dfrac{5}{6}}$
					& $\dfrac{\mathrm{i}\sqrt[4]{5}}{\sqrt{2}}$
					& $-\dfrac{\mathrm{i}\sqrt[4]{5}}{\sqrt{2}}$ \\[2pt]
					\midrule
					
					No. & 25 & 26 & 27 & 28 & 29 & 30 & 31 & 32 \\*
					\midrule
					Triplet
					& $(3,3)(3,6)(5,9)$
					& $(3,3)(3,6)(6,3)$
					& $(3,3)(3,6)(6,6)$
					& $(3,3)(3,6)(6,8)$
					& $(3,3)(5,0)(6,3)$
					& $(3,3)(5,0)(8,3)$
					& $(3,3)(5,5)(6,3)$
					& $(3,3)(5,5)(6,6)$ \\*[1.5pt]
					$\alpha_{IJK}$
					& $-\dfrac{1}{\sqrt{2}}$
					& $-\dfrac{1}{\sqrt{3}}$
					& $-\dfrac{\mathrm{i}\sqrt[4]{5}}{\sqrt{6}}$
					& $\dfrac{1}{\sqrt{2}}$
					& $-\dfrac{\mathrm{i}\sqrt[4]{5}}{\sqrt{2}}$
					& $-\dfrac{1}{\sqrt{2}}$
					& $-\dfrac{\mathrm{i}\sqrt[4]{5}}{\sqrt{2}}$
					& $\dfrac{\sqrt{5}}{2}$ \\[2pt]
					\midrule
					
					No. & 33 & 34 & 35 & 36 & 37 & 38 & 39 & 40 \\*
					\midrule
					Triplet
					& $(3,3)(5,5)(6,8)$
					& $(3,3)(5,5)(8,3)$
					& $(3,3)(5,5)(8,6)$
					& $(3,3)(5,5)(8,8)$
					& $(3,3)(6,3)(6,3)$
					& $(3,3)(6,3)(6,6)$
					& $(3,3)(6,3)(8,3)$
					& $(3,3)(6,3)(8,6)$ \\*[1.5pt]
					$\alpha_{IJK}$
					& $-\dfrac{\mathrm{i}\sqrt[4]{5}}{2}$
					& $-\dfrac{1}{\sqrt{2}}$
					& $-\dfrac{\mathrm{i}\sqrt[4]{5}}{2}$
					& $\dfrac{1}{2}$
					& $\sqrt{\dfrac{5}{6}}$
					& $-\dfrac{\mathrm{i}\sqrt[4]{5}}{\sqrt{6}}$
					& $\dfrac{\mathrm{i}\sqrt[4]{5}}{\sqrt{2}}$
					& $\dfrac{1}{\sqrt{2}}$ \\[2pt]
					\midrule
					
					No. & 41 & 42 & 43 & 44 & 45 & 46 & 47 & 48 \\*
					\midrule
					Triplet
					& $(3,3)(6,6)(6,6)$
					& $(3,3)(6,6)(6,8)$
					& $(3,3)(6,6)(8,6)$
					& $(3,3)(6,6)(8,8)$
					& $(3,6)(3,6)(3,6)$
					& $(3,6)(3,6)(3,8)$
					& $(3,6)(3,6)(5,0)$
					& $(3,6)(3,6)(5,5)$ \\*[1.5pt]
					$\alpha_{IJK}$
					& $\dfrac{\sqrt{5/3}}{2}$
					& $\dfrac{\mathrm{i}\sqrt[4]{5}}{2}$
					& $\dfrac{\mathrm{i}\sqrt[4]{5}}{2}$
					& $\dfrac{\sqrt{3}}{2}$
					& $\mathrm{i}\sqrt{\dfrac{2}{3}}\sqrt[4]{5}$
					& $0$
					& $1$
					& $\dfrac{1}{\sqrt{2}}$ \\[2pt]
					\midrule
					
					No. & 49 & 50 & 51 & 52 & 53 & 54 & 55 & 56 \\*
					\midrule
					Triplet
					& $(3,6)(3,6)(5,9)$
					& $(3,6)(3,6)(6,3)$
					& $(3,6)(3,6)(6,6)$
					& $(3,6)(3,6)(6,8)$
					& $(3,6)(3,6)(6,11)$
					& $(3,6)(5,0)(6,6)$
					& $(3,6)(5,0)(8,6)$
					& $(3,6)(5,5)(6,3)$ \\*[1.5pt]
					$\alpha_{IJK}$
					& $-\dfrac{\mathrm{i}\sqrt[4]{5}}{\sqrt{2}}$
					& $\dfrac{\mathrm{i}\sqrt[4]{5}}{\sqrt{6}}$
					& $\sqrt{\dfrac{2}{3}}$
					& $0$
					& $-\dfrac{\mathrm{i}}{\sqrt{2}}$
					& $\dfrac{\mathrm{i}\sqrt[4]{5}}{\sqrt{2}}$
					& $\dfrac{1}{\sqrt{2}}$
					& $-\dfrac{\sqrt{5}}{2}$ \\[2pt]
					\midrule
					
					No. & 57 & 58 & 59 & 60 & 61 & 62 & 63 & 64 \\*
					\midrule
					Triplet
					& $(3,6)(5,5)(6,6)$
					& $(3,6)(5,5)(6,8)$
					& $(3,6)(5,5)(6,11)$
					& $(3,6)(5,5)(8,3)$
					& $(3,6)(5,5)(8,6)$
					& $(3,6)(5,5)(8,8)$
					& $(3,6)(5,5)(8,11)$
					& $(3,6)(6,3)(6,3)$ \\*[1.5pt]
					$\alpha_{IJK}$
					& $\dfrac{\mathrm{i}\sqrt[4]{5}}{2}$
					& $\dfrac{\sqrt{5}}{2}$
					& $-\dfrac{\sqrt[4]{5}}{2}$
					& $\dfrac{\mathrm{i}\sqrt[4]{5}}{2}$
					& $\dfrac{1}{2}$
					& $\dfrac{\mathrm{i}\sqrt[4]{5}}{2}$
					& $\dfrac{\mathrm{i}}{2}$
					& $\dfrac{\mathrm{i}\sqrt[4]{5}}{\sqrt{6}}$ \\[2pt]
					\midrule
					
					No. & 65 & 66 & 67 & 68 & 69 & 70 & 71 & 72 \\*
					\midrule
					Triplet
					& $(3,6)(6,3)(6,6)$
					& $(3,6)(6,3)(6,8)$
					& $(3,6)(6,3)(8,3)$
					& $(3,6)(6,3)(8,6)$
					& $(3,6)(6,3)(8,8)$
					& $(3,6)(6,6)(6,6)$
					& $(3,6)(6,6)(6,8)$
					& $(3,6)(6,6)(8,6)$ \\*[1.5pt]
					$\alpha_{IJK}$
					& $-\dfrac{\sqrt{5/3}}{2}$
					& $\dfrac{\mathrm{i}\sqrt[4]{5}}{2}$
					& $-\dfrac{1}{\sqrt{2}}$
					& $-\dfrac{\mathrm{i}\sqrt[4]{5}}{2}$
					& $\dfrac{\sqrt{3}}{2}$
					& $\dfrac{\mathrm{i}\sqrt[4]{5}}{\sqrt{3}}$
					& $0$
					& $-1$ \\[2pt]
					\midrule
					
					No. & 73 & 74 & 75 & 76 & 77 & 78 & 79 & 80 \\*
					\midrule
					Triplet
					& $(3,6)(6,6)(8,8)$
					& $(5,0)(5,0)(5,0)$
					& $(5,0)(5,0)(9,0)$
					& $(5,0)(5,5)(5,5)$
					& $(5,0)(5,5)(9,5)$
					& $(5,0)(6,3)(6,3)$
					& $(5,0)(6,3)(8,3)$
					& $(5,0)(6,6)(6,6)$ \\*[1.5pt]
					$\alpha_{IJK}$
					& $0$
					& $\sqrt{\dfrac{3}{2}}$
					& $\dfrac{\mathrm{i}\sqrt[4]{5}}{\sqrt{2}}$
					& $\sqrt{\dfrac{3}{2}}$
					& $\dfrac{\mathrm{i}\sqrt[4]{5}}{\sqrt{2}}$
					& $\dfrac{1}{\sqrt{2}}$
					& $\dfrac{\mathrm{i}\sqrt[4]{5}}{\sqrt{2}}$
					& $\dfrac{1}{\sqrt{2}}$ \\[2pt]
					\midrule
					
					No. & 81 & 82 & 83 & 84 & 85 & 86 & 87 & 88 \\*
					\midrule
					Triplet
					& $(5,0)(6,6)(8,6)$
					& $(5,5)(5,5)(5,5)$
					& $(5,5)(5,5)(5,9)$
					& $(5,5)(5,5)(9,5)$
					& $(5,5)(5,5)(9,9)$
					& $(5,5)(6,3)(6,3)$
					& $(5,5)(6,3)(6,6)$
					& $(5,5)(6,3)(6,8)$ \\*[1.5pt]
					$\alpha_{IJK}$
					& $\dfrac{\mathrm{i}\sqrt[4]{5}}{\sqrt{2}}$
					& $\dfrac{3}{2}$
					& $\dfrac{\mathrm{i}\sqrt{3}\sqrt[4]{5}}{2}$
					& $\dfrac{\mathrm{i}\sqrt{3}\sqrt[4]{5}}{2}$
					& $\dfrac{\sqrt{5}}{2}$
					& $\dfrac{1}{\sqrt{2}}$
					& $\dfrac{\mathrm{i}\sqrt[4]{5}}{2}$
					& $-\dfrac{1}{2}$ \\[2pt]
					\midrule
					
					No. & 89 & 90 & 91 & 92 & 93 & 94 & 95 & 96 \\*
					\midrule
					Triplet
					& $(5,5)(6,3)(8,3)$
					& $(5,5)(6,3)(8,6)$
					& $(5,5)(6,3)(8,8)$
					& $(5,5)(6,6)(6,6)$
					& $(5,5)(6,6)(6,8)$
					& $(5,5)(6,6)(8,6)$
					& $(5,5)(6,6)(8,8)$
					& $(6,3)(6,3)(6,3)$ \\*[1.5pt]
					$\alpha_{IJK}$
					& $\dfrac{\mathrm{i}\sqrt[4]{5}}{\sqrt{2}}$
					& $-\dfrac{\sqrt{5}}{2}$
					& $\dfrac{\mathrm{i}\sqrt[4]{5}}{2}$
					& $\dfrac{1}{2}$
					& $\dfrac{\mathrm{i}\sqrt[4]{5}}{2}$
					& $\dfrac{\mathrm{i}\sqrt[4]{5}}{2}$
					& $\dfrac{\sqrt{5}}{2}$
					& $\mathrm{i}\sqrt{\dfrac{2}{3}}\sqrt[4]{5}$ \\[2pt]
					\midrule
					
					No. & 97 & 98 & 99 & 100 & 101 & 102 & 103 & 104 \\*
					\midrule
					Triplet
					& $(6,3)(6,3)(6,6)$
					& $(6,3)(6,3)(8,3)$
					& $(6,3)(6,3)(8,6)$
					& $(6,3)(6,6)(6,6)$
					& $(6,3)(6,6)(6,8)$
					& $(6,3)(6,6)(8,6)$
					& $(6,3)(6,6)(8,8)$
					& $(6,6)(6,6)(6,6)$ \\*[1.5pt]
					$\alpha_{IJK}$
					& $\sqrt{\dfrac{2}{3}}$
					& $0$
					& $0$
					& $\dfrac{\mathrm{i}\sqrt[4]{5}}{\sqrt{3}}$
					& $-1$
					& $0$
					& $0$
					& $\dfrac{2}{\sqrt{3}}$ \\[2pt]
					\midrule
					
					No. & 105 & 106 & 107 & & & & & \\*
					\midrule
					Triplet
					& $(6,6)(6,6)(6,8)$
					& $(6,6)(6,6)(8,6)$
					& $(6,6)(6,6)(8,8)$
					& & & & & \\*[1.5pt]
					$\alpha_{IJK}$
					& $0$
					& $0$
					& $0$
					& & & & & \\[2pt]
					\bottomrule
				\end{tabular}%
		}}
	\end{center}
	    \end{minipage}
    \endgroup
	\clearpage

	\newpage
	\section{Existence and uniqueness for all Kac labels}\label{sec:global_solution}
    
    Section~\ref{sec:summary_alpha_s1} presents candidate seed solutions and the numerical evidence supporting them. In this section, we establish the correspondence between exact seed solutions and solutions for all Kac labels.
	
	We now consider the general reduced OPE coefficients
	$$\alpha_{(\ell_1,\bar\ell_1,s_1)(\ell_2,\bar\ell_2,s_2)(\ell_3,\bar\ell_3,s_3)}\,.$$
	
	\begin{theorem}[Seed-to-full correspondence]\label{thm:seed-correspondence}
		Fix a physical coprime pair $(p,q)$ and an exceptional spectrum in Eq.~\eqref{def:model}, with the odd-$s$ representatives, two-point normalization~\eqref{eq:normalization}, and finite nonzero reference factors~\eqref{eq:factorization}. Let $\mathcal S_{\mathrm{full}}$ and $\mathcal S_{\mathrm{seed}}$ denote the exact normalized solution spaces of bulk locality and crossing, and let $\mathcal G_{\mathrm{full}}$ and $\mathcal G_{\mathrm{seed}}$ be their respective primary-field sign groups, with vacuum sign fixed. Restriction to $s=1$ induces a bijection
		\[
		\mathcal S_{\mathrm{full}}/\mathcal G_{\mathrm{full}}
		\ \longleftrightarrow\
		\mathcal S_{\mathrm{seed}}/\mathcal G_{\mathrm{seed}}.
		\]
		Its inverse is the $s$-independent lift~\eqref{eq:lift}. Hence exact existence and uniqueness modulo field signs for the seed problem imply the corresponding all-label result.
	\end{theorem}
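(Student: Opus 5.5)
The plan is to prove the bijection by constructing maps in both directions and checking that each respects the sign groups. There are three ingredients: the restriction map, the $s$-independent lift~\eqref{eq:lift} (given a seed solution $\alpha^{(1)}$, set $\alpha_{(\ell_1,\bar\ell_1,s_1)(\ell_2,\bar\ell_2,s_2)(\ell_3,\bar\ell_3,s_3)}:=\alpha^{(1)}_{(\ell_1,\bar\ell_1)(\ell_2,\bar\ell_2)(\ell_3,\bar\ell_3)}$ on triplets admissible in both labels, and zero otherwise), and a rigidity argument showing that every full solution is gauge-equivalent to the lift of its restriction.

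\textbf{Step 1 (restriction is well defined).} The $s=1$ fields form a closed OPE subalgebra, since $\mathcal R_p(1,1)=\{1\}$. The reduced crossing equations~\eqref{eq:reduced-crossing} with all external $s_i=1$ therefore force $s_5=1$ and $\ell^-_6=\bar\ell^-_6=0$, and reduce exactly to~\eqref{eq:reduced-crossing_s=1}. So restriction maps $\mathcal S_{\mathrm{full}}$ to $\mathcal S_{\mathrm{seed}}$. It is also equivariant, because restricting a sign $\eps\in\mathcal G_{\mathrm{full}}$ to the $s=1$ fields gives an element of $\mathcal G_{\mathrm{seed}}$.

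\textbf{Step 2 (the lift solves the full problem).} Take an arbitrary quartet with labels $s_1,\dots,s_4$. The two $q_-$ $6j$ factors in~\eqref{eq:reduced-crossing} depend only on the $s$ labels. For the lift, the left side factorizes as a sum over $\ell_5,\bar\ell_5$ of the seed products and $q_+$ symbols, times $\sum_{s_5}\{\cdot\}_{q_-}\{\cdot\}_{q_-}$ with $s_5=\bar s_5$. I then use orthogonality of the $U_{q_-}$ $6j$ symbols, $\sum_{\ell^-_5}\{\ldots\ell_5^-\ldots\ell_6^-\}\{\ldots\ell_5^-\ldots\bar\ell_6^-\}=\delta_{\ell^-_6\bar\ell^-_6}$, which holds in the normalization~\eqref{def:sixj} on the truncated range $\mathcal R_p$. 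This produces the factor $\delta_{s_6\bar s_6}$ on the right side. The remaining $q_+$ part is the seed equation, with the in-spectrum condition for $I_6$ independent of $s$ because the spectrum~\eqref{def:model} is a product over odd $s$. Two further checks are needed. Locality~\eqref{eq:permutation_alpha} is preserved because the parity of $S_I$ is independent of $s$ by~\eqref{eq:spin}. The normalization $\alpha_{II\id}=1$ is inherited directly. The delicate point is whether the $\mathcal R_p$ truncation for $s_5$ is compatible with orthogonality. I would handle it by noting that for odd $s$ at the root of unity $q_-$ the truncated $6j$ symbols form an orthogonal matrix. This is the standard fact underlying~\eqref{fusion_kernel}: the chiral fusion matrix is invertible and the $D$ prefactors cancel in products.

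\textbf{Step 3 (uniqueness: every full solution is a gauge transform of a lift).} This is the main obstacle. Given $\alpha\in\mathcal S_{\mathrm{full}}$, I would show that for each field $I=(\ell,\bar\ell,s)$ there is a sign $\eps_I$ with $\alpha_{IJK}=\eps_I\eps_J\eps_K\,\alpha_{(\ell_I,\bar\ell_I,1)(\ell_J,\bar\ell_J,1)(\ell_K,\bar\ell_K,1)}$. The first thing I would try is to use the diagonal field $\Phi=(0,0,3)$, i.e.\ $r=\bar r=1$, $s=3$, as a ``translation'' in $s$. Its fusion acts only on the $s$ label, and its $q_+$ symbols are trivial. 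Crossing for $\langle \Phi\,I\,J\,K\rangle$ then gives relations of the same form as the $\mathbb Z_2$ relations~\eqref{alpha:constr_Z2diagonal1}, but now built from $q_-$ $6j$ symbols only, equivalently the $A$-series (diagonal $U_{q_-}$) bootstrap. I would also use that $\alpha_{\Phi\Phi\id}=1$ and that $\alpha_{\Phi I I'}$ with $I'=(\ell,\bar\ell,s\pm2)$ has square fixed, via $\langle\Phi I I\Phi\rangle$, to a nonzero value. Fixing the signs $\eps_{(\ell,\bar\ell,s)}$ inductively in $s$ by demanding that $\alpha_{\Phi(\ell,\bar\ell,s)(\ell,\bar\ell,s+2)}$ equal the lift value, the $\langle\Phi IJK\rangle$ equations propagate $\alpha_{IJK}$ from lower to higher $s$ labels. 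Here the $q_-$ dependence is exactly that of the diagonal $A$-series solution, in which the Runkel-type uniqueness argument applies. I must also check that the required $q_-$ $6j$ entries are nonzero so that the division is legitimate.

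\textbf{Conclusion.} Finally, I compose the two maps. Restricting the lift returns the seed, and Step 3 shows that lifting the restriction returns the original solution up to $\mathcal G_{\mathrm{full}}$. Gauge equivalence on the seed lifts to gauge equivalence on the full problem by assigning the same sign to all $s$ copies. This gives the claimed bijection.
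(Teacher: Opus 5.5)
Your Steps 1 and 2 and the final composition agree with the paper; its existence argument is the same $q_-$ orthogonality computation you describe. The gap is in Step 3, which carries the whole uniqueness claim and is only sketched.

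\textbf{The missing argument in Step 3.} The field $\Phi=\id_3$ is not a simple current: $\Phi\times A_s=A_{s-2}+A_s+A_{s+2}$. So $\langle\Phi\,A_{s_1}B_{s_2}C_{s_3}\rangle$ does not give a ratio relation of the form \eqref{alpha:constr_Z2diagonal1}. For each trial pair $(s_6,\bar s_6)$ it is an identity with up to three terms on the left, involving $A_{s_1-2}$, $A_{s_1}$ and $A_{s_1+2}$ at once. It therefore cannot be solved ``from lower to higher $s$''.

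To extract anything, you need an additional idea, such as the complex-orthogonality argument of Lemma~\ref{lemma:linear_algebra}. It turns $\sum_a x_aU_{ab}U_{ac}=y_b\delta_{bc}$ into $x_a=y_b$ whenever $U_{ab}\neq0$. You then need a connectivity argument through entries you can actually prove nonzero. The central entry of each such $3\times3$ matrix, where the spin-$1$ step leaves both labels unchanged, is a difference of products of $q_-$-numbers. It is not manifestly nonzero: classically it is proportional to $j(j+1)+l(l+1)-k(k+1)$, which does vanish for some labels.

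\textbf{Incomplete gauge fixing.} Your prescription does not fix the sign of $\Phi$, because $\alpha_{\Phi\id_1\id_3}=\alpha_{\Phi\Phi\id_1}=1$ holds automatically. You must impose $\alpha_{\Phi\Phi\Phi}=1$ separately. Moreover, $\alpha_{\Phi A_sA_s}$ appears in these sums and is invariant under $A_s\to-A_s$. It therefore has to be derived from crossing (for instance from $\langle\Phi\Phi A_sA_{s+2}\rangle$), not chosen.

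\textbf{The appeal to Runkel.} Runkel's uniqueness result is proved within the bulk--boundary sewing problem. It is not a statement about these mixed bulk equations, so citing it does not close the step.

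\textbf{The paper's route.} The paper avoids all of this by using the whole family $\id_s=(0,0,s)$ rather than $\id_3$ alone. Only the pure second-label sector needs the multi-channel analysis. There, $\langle\id_s\id_3\id_3\id_s\rangle$, whose matrix $\mathsf F^{(\ell)}$ is checked explicitly to have all allowed entries nonzero, gives $(\alpha_{\id_3\id_s\id_{s'}})^2=1$ via Lemma~\ref{lemma:linear_algebra}. After the gauge choice $\alpha_{\id_3\id_3\id_3}=\alpha_{\id_3\id_{s-2}\id_s}=1$, a move through a nonzero corner entry of $\langle\id_{s_1-2}\id_3\id_{s_2}\id_{s_3}\rangle$ yields $\alpha_{\id_{s_1}\id_{s_2}\id_{s_3}}=1$.

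From then on, $A_1\times\id_s=A_s$ is a single channel. Hence $\langle A_1\id_s\id_sA_1\rangle$, $\langle A_{s_1}A_1\id_{s_2}\id_{s_3}\rangle$, $\langle A_1\id_{s_1}B_{s_2}C_{s_3}\rangle$ and $\langle B_1\id_{s_3}C_{s_3}A_1\rangle$ each have one term per side, with all $6j$ symbols equal to one. With the gauge $\alpha_{A_sA_1\id_s}=1$, these give $\alpha_{A_{s_1}A_{s_2}\id_{s_3}}=1$ and then $\alpha_{A_{s_1}B_{s_2}C_{s_3}}=\alpha_{A_1B_{s_3}C_{s_3}}=\alpha_{A_1B_1C_1}$. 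No nonvanishing analysis is needed outside the $(1,s)$ sector. Your $\Phi$-only route can probably be repaired along the lines above, but as written Step 3 does not establish uniqueness.
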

	The next two subsections prove the extension and its uniqueness. The theorem is a statement about exact solution spaces, irrespective of whether a numerical candidate has been certified as a member. The proof uses the complete truncated $SU(2)$ fusion spaces, complex orthogonality, and the specific nonvanishing recoupling entries described below; a multiplicity-free Cartesian spectrum alone would not suffice.

	\subsection{Existence}
	
	Write \(\id_s=V_{(0,0,s)}\), and denote a generic operator $V_{(\ell,\bar\ell,s)}$ by a capital letter, for example $A_s$. Given the
	$s=1$ solution, define
	\begin{equation}\label{eq:lift}
		\alpha_{A_{s_1}B_{s_2}C_{s_3}}
		:=
		\alpha_{A_1B_1C_1}
	\end{equation}
	for every admissible triplet \((A_{s_1},B_{s_2},C_{s_3})\).  
	
	First, \eqref{eq:lift} satisfies the permutation relations \eqref{eq:permutation_alpha} because spin parity does not depend on $s$ (see Eq.~\eqref{eq:spin}):
	\begin{equation}
		\begin{split}
			(-1)^{S_{(\ell,\bar\ell,s)}}=(-1)^{S_{(\ell,\bar\ell,1)}}
		\end{split}
	\end{equation}
	for any $(\ell,\bar{\ell})$ in the $E$-series spectrum.
	
	Second, the lift \eqref{eq:lift} satisfies the bootstrap equations \eqref{eq:reduced-crossing}. To verify this, we first sum over the second Kac index on the left-hand side of \eqref{eq:reduced-crossing}. As shown in Eq.~\eqref{def:model}, the $E$-series spectra are diagonal in the second Kac label, and each allowed odd second label occurs with multiplicity one for every allowed pair of first Kac labels. The sum therefore runs over the complete intermediate fusion space for the second Kac label. Complex orthogonality of the normalized $6j$ symbols then gives
	\begin{equation}
		\begin{split}
			\sum_{\ell^-_5}\sixj{\ell^-_1}{\ell^-_2}{\ell^-_5}
			{\ell^-_3}{\ell^-_4}{\ell^-_6}_{q_-}
			\sixj{\ell^-_1}{\ell^-_2}{\ell^-_5}
			{\ell^-_3}{\ell^-_4}{\bar\ell^-_6}_{q_-}
			=\delta_{\ell^-_6,\bar\ell^-_6}\,.
		\end{split}
	\end{equation}
	For equal left and right second labels in the target channel, this reduces the remaining equation to the $s=1$ bootstrap equation, Eq.~\eqref{eq:reduced-crossing_s=1}, including its zero right-hand side when the target first-label pair is absent from the spectrum. The Kronecker delta also sets all forbidden mixed second-label target channels to zero. This orthogonality relation contains no complex conjugation and requires no positivity assumption, so the argument applies equally to nonunitary models.
	
	Thus an exact seed solution yields a solution of the normalized bulk locality and OPE-associativity equations for every admissible second Kac label.
	
	\subsection{Uniqueness}
	
	Uniqueness of the extension follows from the multiplicity-free Cartesian structure of the $E$-series spectra together with the properties of the truncated $SU(2)$ fusion category. We state the result for these spectra, with the normalization and representative conventions used throughout this paper.
	\begin{lemma}\label{lemma:uniqueness_from_s=1}
		Fix an $E$-series spectrum in Eq.~\eqref{def:model} at physical coprime parameters $p,q>1$, using the odd-$s$ representatives. Its primary fields are $A_s=V_{(r,s;\bar r,s)}$, with one copy of each allowed first-label pair $A=(r,\bar r)$ for every $s=1,3,\ldots,p-2$. Impose the two-point normalization \eqref{eq:normalization} and use the reference factors of Eq.~\eqref{eq:factorization}, finite and nonzero on admissible triplets and normalized by $C^{\mathrm{(ref)}}_{II\id}=1$.
		Then every solution of the reduced locality and crossing equations is related by the operator-sign redefinitions \eqref{eq:gauge} to the lift \eqref{eq:lift} of its $s=1$ restriction. In particular, fixing the seed solution fixes its extension uniquely up to field signs.
	\end{lemma}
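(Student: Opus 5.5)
The aim is to show that any solution $\alpha$ of the reduced equations \eqref{eq:reduced-crossing} is, after field signs, independent of the second labels. The natural tool is the simplest class of crossing equations: those in which the second-label fusion is forced to pass through a single channel, so that the $q_-$ recoupling coefficient is a single nonzero number. Concretely, I use the fields $\id_s=V_{(0,0,s)}$ and the couplings $\alpha_{A_sA_{s'}\id_{s''}}$ to transport every coefficient back to $s=1$. The strategy has three steps: (i) fix the normalization of the ``$s$-current'' couplings by a gauge choice, (ii) derive a transport relation expressing $\alpha_{A_{s_1}B_{s_2}C_{s_3}}$ through $\alpha_{A_1B_1C_1}$ times a product of $\id$-couplings and $q_-$ $6j$ symbols, and (iii) show that the surviving factor equals one using the existence result of the previous subsection.

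\textbf{Step 1 (gauge fixing).} Consider the four-point function $\langle A_s\, A_s\, \id_s\, \id_s\rangle$, or more usefully $\langle A_1\,\id_s\,\id_s\,A_1\rangle$. Take the $s$-channel intermediate field to be $A_1\times\id_s\ni A_s$. For a fixed first-label pair $(r,\bar r)$ the $q_+$ fusion is trivial, since one leg is the identity, and the $q_-$ fusion $1\times s\to s$ is forced. The crossing equation with $t$-channel target $\id_1$ then reads $\alpha_{A_1\id_sA_s}^2\cdot\{6j\}_{q_-}\{6j\}_{q_-}=\alpha_{\id_s\id_s\id_1}\alpha_{A_1A_1\id_1}=\alpha_{\id_s\id_s\id_1}$. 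In that $6j$ symbol two entries are $0$, so it equals an explicit unit quantity $(-1)^{\cdots}/[\,\cdot\,]$. The analogous equation with $A=\id$ fixes $\alpha_{\id_1\id_s\id_s}=1$ by normalization. The upshot is that each $\alpha_{A_1\id_sA_s}^2$ is fixed and nonzero, because the $D$ factors are finite and nonzero and the right-hand side is $1$. I will therefore choose the field sign $\eps_{A_s}$ so that $\alpha_{A_1\id_sA_s}$ equals the value obtained from the lift, namely $\alpha_{A_1\id_1A_1}=1$. This uses up exactly the freedom $\mathcal G_{\mathrm{full}}/\mathcal G_{\mathrm{seed}}$, since $\id_s$ itself is handled by the case $A=\id$.

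\textbf{Step 2 (transport).} Take the quartet $\langle A_{s_1}\,B_{s_2}\,C_{s_3}\,\id_{s_3}\rangle$, or better a sequence of quartets that moves one second label at a time, for example $\langle A_1\,\id_{s_1}\,B_{s_2}\,C_{s_3}\rangle$. In the $s$-channel, $A_1\times\id_{s_1}$ contains only $A_{s_1}$ in first labels $(r,\bar r)$, because the $q_+$ identity leg forces the first labels and the spectrum is multiplicity-free. The sum over $I_5$ therefore collapses to a single term $\alpha_{A_1\id_{s_1}A_{s_1}}\alpha_{B_{s_2}C_{s_3}A_{s_1}}$ times $q_-$ $6j$ symbols. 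On the $t$-channel side I choose the target $I_6=(\ell_C,\bar\ell_C,s_6)$ with a suitable $s_6$, so that the right-hand side is $\alpha_{\id_{s_1}B_{s_2}X}\alpha_{C_{s_3}A_1X}$. Iterating, first reducing $s_1\to1$, then $s_2$, then $s_3$ by the same manoeuvre, gives
\[
\alpha_{A_{s_1}B_{s_2}C_{s_3}}=\Lambda(s_1,s_2,s_3)\,\alpha_{A_1B_1C_1},
\]
where $\Lambda$ is a product of $\id$-couplings (fixed to their lifted values in Step 1) and ratios of $q_-$ $6j$ symbols with at least one zero-spin entry. These $6j$ symbols are nonzero because the truncated $q$-numbers do not vanish in the admissible range. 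Crucially, $\Lambda$ is built only from data that do not depend on $\alpha$.

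\textbf{Step 3 and main obstacle.} Since the lift \eqref{eq:lift} is itself a solution, shown in the Existence subsection, and satisfies the same gauge conditions, the same transport identity holds for it with the same $\Lambda$. This forces $\Lambda=1$, and hence $\alpha=$ lift, without computing $\Lambda$ explicitly. There is one subtlety: if $\alpha_{A_1B_1C_1}=0$ the transport must still propagate the zero. It does so, because the relation is linear with a nonzero coefficient $\Lambda$ and I never divide by it. The step I expect to be delicate is making sure that at each transport step the chosen $t$-channel target truly isolates a single term with nonzero $q_-$ coefficient for \emph{every} admissible $(s_1,s_2,s_3)$, including near the truncation boundary $s\approx p-2$. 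This is where the full truncated $SU(2)$ fusion spaces are used: the fields $\id_s$ exist for all odd $s$, and a chain $1\to s$ of forced single-channel fusions is always available. I would first try routing each step through $\id_{s}$ with the target second label $|s_i-s|$, and check nonvanishing of the resulting $6j$ entry directly from \eqref{def:sixj}.
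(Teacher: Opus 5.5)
There is a genuine gap: your transport produces couplings that nothing in your argument determines. In $\langle A_1\,\id_{s_1}\,B_{s_2}\,C_{s_3}\rangle$ both channels are forced, to $A_{s_1}$ and to $B_{s_3}$. (The $t$-channel field carries the first labels of $B$, not $C$.) Every $6j$ symbol there has a zero entry and equals $1$ in this normalization, not $(-1)^{\cdots}/[\cdot]$. In your gauge the equation reads $\alpha_{A_{s_1}B_{s_2}C_{s_3}}=\alpha_{\id_{s_1}B_{s_2}B_{s_3}}\,\alpha_{A_1B_{s_3}C_{s_3}}$. Your Step~1 fixed only couplings $\alpha_{A_1\id_sA_s}$, which have one label equal to $1$. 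The factor $\alpha_{\id_{s_1}B_{s_2}B_{s_3}}$ is therefore an unknown, and further single-channel moves (e.g.\ $\langle A_{s_1}A_1\id_{s_2}\id_{s_3}\rangle$) reduce it at best to $\alpha_{\id_{s_1}\id_{s_2}\id_{s_3}}$. So $\Lambda$ is not $\alpha$-independent, and comparing with the lift proves nothing.

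No choice of single-channel routes repairs this. For any symmetric $\beta_{s_1s_2s_3}$ with $\beta_{1ss}=1$, the ansatz $\alpha_{A_{s_1}B_{s_2}C_{s_3}}=\beta_{s_1s_2s_3}\,\alpha_{A_1B_1C_1}$ satisfies every crossing equation that has an external $s=1$ field: the $q_-$ symbols are then $1$, and $\beta$ factors out of both sides. Your gauge count is also wrong. $\alpha_{\id_1\id_s\id_s}$ contains $\id_s$ twice, so the signs $\eps_{\id_s}$ survive your Step~1, with $\eps_{A_s}=\eps_{A_1}\eps_{\id_s}$. The lift transformed by $\eps_{\id_3}=-1$ satisfies all your conditions yet has $\alpha_{\id_3\id_3\id_3}=-1$, which contradicts $\Lambda=1$.

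The missing ingredient is solving the diagonal $(1,s)$ sector with genuinely multi-channel crossing; this is the paper's Step~1. For $\langle\id_s\id_3\id_3\id_s\rangle$ with $s=2\ell+1$, the $q_-$ recoupling matrix $\mathsf{F}^{(\ell)}$ is complex orthogonal and all its allowed entries are nonzero. The central entry is proportional to $[2\ell+3]_{q_-}-[2\ell-1]_{q_-}=4\cos\bigl((2\ell+1)\pi q/p\bigr)\cos(\pi q/p)$, which is nonzero since $p$ is odd. Lemma~\ref{lemma:linear_algebra} then forces $(\alpha_{\id_3\id_s\id_{s_5}})^2=\alpha_{\id_3\id_3\id_{s_6}}\alpha_{\id_s\id_s\id_{s_6}}=1$. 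The $\id_s$ signs are spent on $\alpha_{\id_3\id_3\id_3}=1$ and $\alpha_{\id_3\id_{s-2}\id_s}=1$. The correlator $\langle\id_{s_1-2}\id_3\id_{s_2}\id_{s_3}\rangle$, with a nonvanishing $6j$ entry, gives the move $\alpha_{\id_{s_1}\id_{s_2}\id_{s_3}}=\alpha_{\id_{s_1-2}\id_{s_2-2}\id_{s_3}}$, hence $\alpha_{\id_{s_1}\id_{s_2}\id_{s_3}}=1$.

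Only then does $\langle A_{s_1}A_1\id_{s_2}\id_{s_3}\rangle$ yield $\alpha_{A_{s_1}A_{s_2}\id_{s_3}}=1$. After that, your two single-channel transports close exactly as in the paper's Step~3. The real difficulty is thus the nonvanishing of the multi-channel $q_-$ recoupling entries, not isolating single channels near $s\approx p-2$.
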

	
	The multiplicity-free hypothesis is important: when several primaries carry the same left and right Virasoro representations, transformations preserving the two-point form can mix them, and the field-basis freedom need not reduce to independent signs. Such spectra are outside the scope of this lemma.
	
	The numerical evidence for seed uniqueness and its limitations are discussed in Section~\ref{sec:s=1}. The present lemma concerns exact solutions.

	Runkel previously obtained bulk OPE coefficients and uniqueness up to field redefinitions for the $A$- and $D$-series within the bulk--boundary sewing problem \cite{Runkel:1998a,Runkel:1999dz}. Our proof uses bulk equations directly and exploits the truncated $SU(2)$ fusion structure of the diagonal second-label sector. This is a distinction between methods. It does not posit additional boundaryless ADE bulk solutions: under the local Frobenius-algebra, unique-vacuum and ADE-spectrum hypotheses, the converse full-center construction supplies a consistent boundary completion \cite{KongRunkel:2008Cardy}.
	
	The proof has three steps:
	\begin{enumerate}
		\item $\alpha_{\id_{s_1}\id_{s_2}\id_{s_3}}=1$ after fixing the signs of the $\id_s$ operators;
		\item $\alpha_{A_{s_1}A_{s_2}\id_{s_3}}=1$ after fixing the signs of the $A_s$ operators;
		\item $\alpha_{A_{s_1}B_{s_2}C_{s_3}}=\alpha_{A_{1}B_{1}C_{1}}$.
	\end{enumerate}

	\subsubsection{\texorpdfstring{Step 1: the diagonal \((1,s)\) sector}{Step 1: the diagonal (1,s) sector}}
	The $(1,s)$ sector contains only scalar operators, so the corresponding
	$\alpha_{IJK}$ are fully symmetric under permutations. Here, $s$ runs over all allowed odd integers
	\begin{equation}
		\begin{split}
			s=1,3,\ldots,p-2\,.
		\end{split}
	\end{equation}
	For the physical $E$-series minimal models, $p\geqslant5$. 
	
	To simplify the notation, we switch to the quantum-group spins using
	\begin{equation}
		\begin{split}
			s=2\ell+1\,,\quad\ell=0,1,\ldots,\frac{p-3}{2}\,.
		\end{split}
	\end{equation}
	In this step only, $\ell$ denotes the quantum-group spin associated with the second Kac label; elsewhere, it denotes the spin associated with the first Kac label.
	
	We define
	\begin{equation}
		\mathsf{F}^{(\ell)}_{\ell_5,\ell_6}
		:=
		\sixj{\ell}{1}{\ell_5}{1}{\ell}{\ell_6}_{q_-},
		\qquad
		\ell_5\in\{\ell-1,\ell,\ell+1\},
		\quad
		\ell_6\in\{0,1,2\}.
	\end{equation}
	As a matrix with indices $\ell_5$ and $\ell_6$, its entries are
	\begin{equation}
		\begin{split}
			\mathsf{F}^{(\ell)}_{\ell-1,0}
			&=
			\mathcal{N}_{\ell}
			\sqrt{
				[2]_{q_-}[4]_{q_-}
				[2\ell-1]_{q_-}[2\ell]_{q_-}[2\ell+2]_{q_-}
			},
			\\
			\mathsf{F}^{(\ell)}_{\ell-1,1}
			&=
			-\mathcal{N}_{\ell}
			[2]_{q_-}[2\ell+2]_{q_-}
			\sqrt{[3]_{q_-}[2\ell-1]_{q_-}},
			\\
			\mathsf{F}^{(\ell)}_{\ell-1,2}
			&=
			\mathcal{N}_{\ell}
			[2]_{q_-}[2\ell+2]_{q_-}
			\sqrt{[2\ell+3]_{q_-}},
			\\[3pt]
			\mathsf{F}^{(\ell)}_{\ell,0}
			&=
			-\mathcal{N}_{\ell}
			\sqrt{
				[2]_{q_-}[4]_{q_-}
				[2\ell]_{q_-}[2\ell+1]_{q_-}[2\ell+2]_{q_-}
			},
			\\
			\mathsf{F}^{(\ell)}_{\ell,1}
			&=
			\mathcal{N}_{\ell}
			\left(
			[2\ell+3]_{q_-}-[2\ell-1]_{q_-}
			\right)
			\sqrt{[3]_{q_-}[2\ell+1]_{q_-}},
			\\
			\mathsf{F}^{(\ell)}_{\ell,2}
			&=
			\mathcal{N}_{\ell}
			[2]_{q_-}^{\,2}
			\sqrt{
				[2\ell-1]_{q_-}
				[2\ell+1]_{q_-}
				[2\ell+3]_{q_-}
			},
			\\[3pt]
			\mathsf{F}^{(\ell)}_{\ell+1,0}
			&=
			\mathcal{N}_{\ell}
			\sqrt{
				[2]_{q_-}[4]_{q_-}
				[2\ell]_{q_-}[2\ell+2]_{q_-}[2\ell+3]_{q_-}
			},
			\\
			\mathsf{F}^{(\ell)}_{\ell+1,1}
			&=
			\mathcal{N}_{\ell}
			[2]_{q_-}[2\ell]_{q_-}
			\sqrt{[3]_{q_-}[2\ell+3]_{q_-}},
			\\
			\mathsf{F}^{(\ell)}_{\ell+1,2}
			&=
			\mathcal{N}_{\ell}
			[2]_{q_-}[2\ell]_{q_-}
			\sqrt{[2\ell-1]_{q_-}},
		\end{split}
	\end{equation}
	where
	\begin{equation}
		\begin{split}
			\mathcal{N}_{\ell}
			=
			\frac{1}{
				\sqrt{
					[2]_{q_-}[3]_{q_-}[4]_{q_-}
					[2\ell]_{q_-}[2\ell+1]_{q_-}[2\ell+2]_{q_-}
				}
			}\,.
		\end{split}
	\end{equation}
	For $\ell=1,2,\ldots,\tfrac{p-3}{2}$, $\mathcal{N}_\ell$ is finite and nonzero because the arguments of all $q_-$-numbers $[n]_{q_-}$ in its denominator lie between $1$ and $p-1$.\footnote{Given $q_-=e^{i\pi q/p}$, we have $[n]_{q_-}=\tfrac{\sin\left(nq\pi/p\right)}{\sin\left(q\pi/p\right)}$. Since $q$ and $p$ are coprime, $[n]_{q_-}=0$ if and only if $n$ is a multiple of $p$.} We now examine the entries of $\mathsf{F}^{(\ell)}$ case by case:
	\begin{itemize}
		\item When $\ell=0$, $\mathsf{F}^{(\ell)}$ is the $1\times1$ matrix with entry $1$.
		\item When $\ell=1,2,\ldots,\tfrac{p-5}{2}$, all entries of $\mathsf{F}^{(\ell)}$ are nonzero. This range is empty for $p=5$. For $p\geqslant7$, the only entry requiring a separate check
		is the central one, $\mathsf{F}^{(\ell)}_{\ell,1}$. Using $q_-=e^{i\pi q/p}$,
		\begin{equation}
			[2\ell+3]_{q_-}-[2\ell-1]_{q_-}
			=4\cos\left(\tfrac{(2\ell+1)\pi q}{p}\right)
			\cos\left(\tfrac{\pi q}{p}\right)\,.
		\end{equation}
		Since $p$ is odd and $q$ is even, neither cosine can vanish. 
		\item When $\ell=\tfrac{p-3}{2}$, the spin-$(\ell+1)$ representation is absent from the fusion ring $SU(2)_{p-2}$, and $[2\ell+3]_{q_-}=[p]_{q_-}=0$. In this case, $\mathsf{F}^{(\ell)}$ restricts to a $2\times2$ matrix whose allowed entries are all nonzero.
	\end{itemize}
	
	Thus every entry of $\mathsf{F}^{(\ell)}$ on its allowed fusion channels is nonzero.
	
	Consider the bootstrap equation for
	$\braket{\id_{s}\id_{3}\id_{3}\id_{s}}$, where
	$s=2\ell+1$. The corresponding $6j$ symbols in the bootstrap equations are precisely $\mathsf{F}^{(\ell)}$ discussed above. We use the following
	elementary linear-algebra lemma.
	\begin{lemma}\label{lemma:linear_algebra}
		Let $U_{ab}$ ($a,b=1,2,\ldots,N$) be a complex orthogonal matrix acting on
		$\mathbb{C}^N$, i.e.,
		\begin{equation}\label{eq:U_ortho}
			\begin{split}
				\sum_{a=1}^NU_{ab}U_{ac}=\delta_{bc}\,,\quad\sum_{a=1}^NU_{ba}U_{ca}=\delta_{bc}\,.
			\end{split}
		\end{equation}
		Suppose that two vectors $x,y\in\mathbb{C}^N$ satisfy
		\begin{equation}\label{eq:lemma_assumption}
			\begin{split}
				\sum_{a}x_a U_{ab}U_{ac}=y_b\delta_{bc}\,,
			\end{split}
		\end{equation}
		Then $x_a=y_b$ for every pair $(a,b)$ such that $U_{ab}\neq0$.
	\end{lemma}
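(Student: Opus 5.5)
The statement to prove is Lemma~\ref{lemma:linear_algebra}. Since the identity \eqref{eq:lemma_assumption} holds as an equality of matrices indexed by $(b,c)$, I will multiply it by suitable factors of $U$ and use the two orthogonality relations \eqref{eq:U_ortho} to isolate $x$.

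Multiplying \eqref{eq:lemma_assumption} by $U_{a'b}U_{a'c}$ and summing over both $b$ and $c$ would give a quartic expression, so instead I will first turn \eqref{eq:lemma_assumption} into a statement about a conjugated diagonal matrix. Let $X=\mathrm{diag}(x_a)$ and $Y=\mathrm{diag}(y_b)$. Then \eqref{eq:lemma_assumption} reads
\[
U^{T}XU=Y.
\]
The orthogonality relations say $U^TU=UU^T=\id$, so $U$ is invertible with $U^{-1}=U^T$. Left-multiplying by $U$ and right-multiplying by $U^{T}$ gives $X=UYU^{T}$. Multiplying $U^TXU=Y$ on the left by $U$ alone gives the cleaner intertwining relation
\[
XU=UY.
\]
In components this is $x_aU_{ab}=U_{ab}y_b$ for all $a,b$, that is,
\[
(x_a-y_b)\,U_{ab}=0 .
\]
Hence $x_a=y_b$ whenever $U_{ab}\neq0$, which is the claim.

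There is no real obstacle. The only point to check is that both orthogonality relations are available, so that $UU^T=\id$ and $U$ may be applied on the left. Only the relation $UU^T=\id$ is actually used, to pass from $U^TXU=Y$ to $XU=UY$. No conjugation or positivity enters, so the argument holds over $\mathbb C$, as needed for nonunitary models.

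In the intended application, $U=\mathsf F^{(\ell)}$ restricted to its allowed fusion channels. Because every entry of $\mathsf F^{(\ell)}$ is nonzero, the lemma forces all the relevant $x_a$ and $y_b$ to be equal.
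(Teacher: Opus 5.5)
Your proof is correct and is the paper's argument in matrix form. The paper contracts the hypothesis with $U_{dc}$ over $c$, which amounts to right-multiplying $U^{T}XU=Y$ by $U^{T}$, while you left-multiply by $U$; both routes use only $UU^{T}=1$ to reach $(x_a-y_b)\,U_{ab}=0$.
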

	\begin{proof}
		Multiplying both sides of \eqref{eq:lemma_assumption} by $U_{dc}$ and summing
		over $c$, Eq.~\eqref{eq:U_ortho} gives
		\begin{equation}
			\begin{split}
				x_d U_{db}=y_{b}U_{db}\,.
			\end{split}
		\end{equation}
		It follows that $x_d=y_b$ whenever $U_{db}\neq0$.
	\end{proof}
	
	For $\braket{\id_{s}\id_{3}\id_{3}\id_{s}}$,
	identify
	$U_{\ell_5,\ell_6}=\mathsf{F}^{(\ell)}_{\ell_5,\ell_6}$,
	$x_{\ell_5}=(\alpha_{\id_{3}\id_{s}\id_{s_5}})^2$,
	and
	$y_{\ell_6}=\alpha_{\id_{3}\id_{3}\id_{s_6}}
	\alpha_{\id_{s}\id_{s}\id_{s_6}}$.
	For $s=1,3,5,\ldots,p-2$, Lemma~\ref{lemma:linear_algebra} and the nonvanishing
	of all matrix entries give
	\begin{equation}
		\begin{split}
			&(\alpha_{\id_{3}\id_{s}\id_{s_5}})^2=\alpha_{\id_{3}\id_{3}\id_{s_6}}\alpha_{\id_{s}\id_{s}\id_{s_6}}=1\,, \\
			&\forall s_5\in\mathcal{R}_p(s,3)\,,\quad s_6\in\mathcal{R}_p(3,3)\cap\mathcal{R}_p(s,s)\,,
		\end{split}
	\end{equation}
	where $\mathcal{R}_p(s_1,s_2)$ was defined in \eqref{def:Rm}. The generic cases give $3\times3$ matrices, with $s_5=s-2,s,s+2$ and $s_6=1,3,5$, while the endpoint cases $s=1$ and $s=p-2$ give, respectively, $1\times1$ and
	$2\times2$ restrictions of the same fusion matrix. Their allowed entries are
	also nonzero, so the same conclusion holds. We may therefore fix the
	operator-sign gauge by imposing
	\begin{equation}\label{eq:id_gauge}
		\begin{split}
			\alpha_{\id_{3}\id_{3}\id_{3}}&=1\,, \\
			\alpha_{\id_{3}\id_{s-2}\id_{s}}&=1\,,\quad\text{for}\ s=5,7,\ldots,p-2\,.
		\end{split}
	\end{equation}
	
	This exhausts the sign freedom of the $\id_s$ operators. We now show
	that, in this gauge,
	$\alpha_{\id_{s_1}\id_{s_2}\id_{s_3}}=1$ for every
	admissible triplet $(s_1,s_2,s_3)$.
	
	If one of the $s_i$ equals $1$, the result follows from the normalization
	condition. Otherwise, order the indices as
	$s_1\geqslant s_2\geqslant s_3\geqslant3$.
	
	When $s_2\geqslant5$, the triplet $(s_1-2,s_2-2,s_3)$ is also admissible. The
	original admissibility conditions are
	\begin{equation}
		\begin{split}
			\abs{\ell_1-\ell_2}\leqslant\ell_3\leqslant\ell_1+\ell_2\,,\quad \ell_1+\ell_2+\ell_3\leqslant p-2\,,
		\end{split}
	\end{equation}
	while those for $(s_1-2,s_2-2,s_3)$ follow from
	\begin{equation}
		\begin{split}
			&\abs{(\ell_1-1)-(\ell_2-1)}\leqslant\ell_3\,, \\
			&\ell_3\leqslant\ell_2\leqslant \ell_1+\ell_2-2=(\ell_1-1)+(\ell_2-1)\,, \\
			&(\ell_1-1)+(\ell_2-1)+\ell_3\leqslant p-2-2\leqslant p-2\,.
		\end{split}
	\end{equation}
	
	Now consider the bootstrap equation for
	\begin{equation}
		\begin{split}
			\braket{\id_{s_1-2}\id_{3}\id_{s_2}\id_{s_3}}\,.
		\end{split}
	\end{equation}
	Set
	\begin{equation}
		a:=\ell_1+\ell_2-\ell_3,\qquad
		b:=\ell_1+\ell_2+\ell_3.
	\end{equation}
	\begin{equation}
		\begin{aligned}
			\sixj{\ell_1-1}{1}{\ell_1}{\ell_2}{\ell_3}{\ell_2-1}_{q_-}
			&=\left[
			\frac{
				[a-1]_{q_-}[a]_{q_-}[b]_{q_-}[b+1]_{q_-}
			}{
				[2\ell_1-1]_{q_-}
				[2\ell_1]_{q_-}
				[2\ell_2]_{q_-}
				[2\ell_2+1]_{q_-}
			}
			\right]^{1/2}.
		\end{aligned}
	\end{equation}
	Recall that $[n]_{q_-}=0$ if and only if $n$ is a multiple of $p$. The
	displayed $6j$ symbol is nonzero because
	\begin{itemize}
		\item we have $\ell_1+\ell_2-\ell_3-1\geqslant2+(\ell_2-\ell_3)-1\geqslant1$ and $\ell_1+\ell_2-\ell_3-1\leqslant\ell_1+\ell_2-1\leqslant p-3$, which imply that $[a-1]_{q_-}$ and $[a]_{q_-}$ are nonzero;
		\item we have $1\leqslant \ell_1+\ell_2+\ell_3\leqslant p-2$, which implies that $[b]_{q_-}$ and $[b+1]_{q_-}$ are nonzero;
		\item $2\leqslant\ell_1,\ell_2\leqslant\tfrac{p-3}{2}$, which implies that all the factors in the denominator are finite and nonzero.
	\end{itemize} 
	Then applying Lemma~\ref{lemma:linear_algebra} to this $6j$-symbol matrix gives
	\begin{equation}\label{eq:id_sss}
		\begin{split}
			\alpha_{\id_{s_1-2}\id_{3}\id_{s_1}}\alpha_{\id_{s_2}\id_{s_3}\id_{s_1}}=\alpha_{\id_{3}\id_{s_2}\id_{s_2-2}}\alpha_{\id_{s_3}\id_{s_1-2}\id_{s_2-2}}\,.
		\end{split}
	\end{equation}
	Together with \eqref{eq:id_gauge}, this gives
	\begin{equation}\label{eq:elementary_move_id}
		\begin{split}
			\alpha_{\id_{s_1}\id_{s_2}\id_{s_3}}=\alpha_{\id_{s_1-2}\id_{s_2-2}\id_{s_3}}
		\end{split}
	\end{equation}
	for all triples $(s_1,s_2,s_3)$ satisfying $s_1\geqslant s_2\geqslant s_3\geqslant3$ and $s_2\geqslant5$.
	
	Equation~\eqref{eq:elementary_move_id}, together with permutation symmetry, defines an elementary move. Whenever two
	indices are at least $5$, the two largest may be decreased by two without
	changing $\alpha$. Repeating the move finitely many times reaches either
	$(3,3,3)$ or $(5,3,3)$, where $\alpha=1$ by \eqref{eq:id_gauge}. Therefore,
	\begin{equation}\label{eq:ids}
		\begin{split}
			\alpha_{\id_{s_1}\id_{s_2}\id_{s_3}}=1
		\end{split}
	\end{equation}
	for every admissible triplet.

	\subsubsection{Step 2: two equal non-identity labels}
	
	For the second step, let $A\neq\id$ and consider the bootstrap equation for
	$\braket{A_1\id_s\id_sA_1}$. Each channel contains a single operator, and all $6j$ symbols equal one. The equation therefore gives
	\begin{equation}
		\begin{split}
			\alpha_{A_1\id_s A_s}\,\alpha_{\id_s A_1 A_s}=\alpha_{\id_s\id_s\id_1}\,\alpha_{A_1 A_1\id_1}\,.
		\end{split}
	\end{equation}
	Using $(-1)^{S_{\id_s}}=1$ and $(-1)^{S_{A_s}}=(-1)^{S_{A_1}}$ for the $E$-series, the permutation relations give
	\begin{equation}
		\begin{split}
			\left(\alpha_{A_sA_1\id_s}\right)^2=1
		\end{split}
	\end{equation}
	We fix the operator-sign ambiguity of the $A_s$
	operators by imposing
	\begin{equation}\label{eq:gauge_As}
		\begin{split}
			\alpha_{A_sA_1\id_s}=1\,,\forall s\,,
		\end{split}
	\end{equation}
	where the $s=1$ case follows from the normalization. These choices fix the signs of $A_s$ relative to the chosen sign of $A_1$.
	
	Next consider the bootstrap equation for
	$\langle A_{s_1}A_1\id_{s_2}\id_{s_3}\rangle$. Only $\id_{s_1}$ appears in the $s$-channel, and only $A_{s_2}$ appears in the $t$-channel. All $6j$ symbols equal one, so the bootstrap equation becomes
	\begin{equation}
		\alpha_{A_{s_1}A_1\id_{s_1}}\,\alpha_{\id_{s_2}\id_{s_3}\id_{s_1}}=\alpha_{A_{1}\id_{s_2}A_{s_2}}\,\alpha_{\id_{s_3}A_{s_1}A_{s_2}}\,.
	\end{equation}
	By \eqref{eq:ids} and \eqref{eq:gauge_As}, we conclude that
	\begin{equation}\label{eq:two-A}
		\begin{split}
			\alpha_{A_{s_1}A_{s_2}\id_{s_3}}=1
		\end{split}
	\end{equation}
	for every admissible triplet $(s_1,s_2,s_3)$.

	\subsubsection{Step 3: the general OPE coefficient}
	
	All operator-sign freedom outside the chosen $s=1$ gauge has now been fixed, consistently with
	\eqref{eq:lift}. It remains to prove that every other coefficient also equals
	its lifted value. For a generic admissible triplet
	$(A_{s_1},B_{s_2},C_{s_3})$, we consider the bootstrap equations for the following two four-point functions $$\braket{A_1\id_{s_1}B_{s_2}C_{s_3}}\,,\quad\braket{B_1\id_{s_3}C_{s_3}A_1}.$$
	In both cases, each channel contains a single operator and all $6j$ symbols equal one, giving
	\begin{equation}
		\begin{split}
			\alpha_{A_1\id_{s_1}A_{s_1}}\,\alpha_{A_{s_1}B_{s_2}C_{s_3}}&=\alpha_{\id_{s_1}B_{s_2}B_{s_3}}\,\alpha_{A_1 B_{s_3}C_{s_3}}\,, \\
			\alpha_{B_1\id_{s_3}B_{s_3}}\,\alpha_{C_{s_3}A_1B_{s_3}}&=\alpha_{\id_{s_3}C_{s_3}C_1}\,\alpha_{A_1B_1C_1}\,.
		\end{split}
	\end{equation}
	By \eqref{eq:two-A}, the above equations imply
	\begin{equation}
		\begin{split}
			\alpha_{A_{s_1}B_{s_2}C_{s_3}}=\alpha_{A_1 B_{s_3}C_{s_3}}=\alpha_{A_1B_1C_1}\,.
		\end{split}
	\end{equation}
	This proves that the extension of fixed seed data is unique up to the operator-sign redefinitions \eqref{eq:gauge}. Combined with the existence argument above, it proves the claimed conditional result for all Kac labels.

	\section{Discussion}\label{sec:conclusion}
	We have obtained explicit bulk OPE coefficients for all $E$-series minimal models. In the normalization convention \eqref{eq:normalization}, the general expression is given in Eqs.~\eqref{eq:factorization}--\eqref{def:phi}. A uniform treatment of the three families is possible because the spectrum is diagonal in the second Kac index $s$. After the universal $s$ dependence is factored out, the exceptional bootstrap problem depends only on the first Kac indices; the resulting coefficients are listed in Tables~\ref{tab:E6-I1-results}--\ref{tab:E8-I13-results}.

	The finite seed computations support the following conjecture, whose all-label conclusion follows from exact seed existence and uniqueness by Section~\ref{sec:global_solution}.
	\begin{conjecture}\label{thm:main}
		For every admissible coprime pair $(p,q)$ and each $E_6$-, $E_7$-, or $E_8$-series spectrum, the coefficients reconstructed from Eqs.~\eqref{eq:factorization}--\eqref{def:phi}, Tables~\ref{tab:E6-I1-results}--\ref{tab:E8-I13-results}, and the operator-sign conventions of Appendix~\ref{app:simple_current_conventions} satisfy bulk locality and OPE associativity with the two-point normalization \eqref{eq:normalization}. Every solution with this spectrum and normalization is related to these coefficients by the independent primary-field sign redefinitions \eqref{eq:gauge}.
	\end{conjecture}
	
	For the six unitary models, existence and uniqueness within the conformal-net framework were established in Ref.~\cite{Kawahigashi:2003gi}. Conjecture \ref{thm:main} is formulated directly for bulk locality and crossing, without assuming reflection positivity.

	\subsection{Unitary case}\label{subsec:unitary}
	In a unitary CFT, the OPE coefficients obey an additional conjugation condition. With our normalization \eqref{eq:normalization} and the reflection convention specified below, unitarity implies
	\begin{equation}\label{eq:unitarity}
		\begin{split}
			C_{IJK}^*=(-1)^{S_I+S_J+S_K}C_{IJK}\,.
		\end{split}
	\end{equation}
	
	Let us briefly review how \eqref{eq:unitarity} is obtained. Choose a Euclidean time coordinate $\tau$ and set
	\begin{equation}
		\begin{split}
			z=x+i\tau\,,\quad\bar{z}=x- i\tau\,.
		\end{split}
	\end{equation}
	In Euclidean QFT, unitarity is expressed by \emph{reflection positivity}. In the normalization \eqref{eq:normalization}, the reflection conjugates the primary fields according to
	\begin{equation}
		\begin{split}
			\left[V_I(z,\bar{z})\right]^\dagger=(-1)^{S_I}\,V_I(\bar{z},z)\,.
		\end{split}
	\end{equation}
	The factor $(-1)^{S_I}$ compensates the coordinate phase and makes the reflected two-point function $\braket{\left[V_I(z,\bar{z})\right]^\dagger V_I(z,\bar{z})}$ positive. Then the three-point functions must satisfy
	\begin{equation}
		\begin{split}
			\braket{V_I(z_1,\bar{z}_1)V_J(z_2,\bar{z}_2)V_K(z_3,\bar{z}_3)}^*=\braket{\left[V_I(z_1,\bar{z}_1)\right]^\dagger \left[V_J(z_2,\bar{z}_2)\right]^\dagger \left[V_K(z_3,\bar{z}_3)\right]^\dagger }\,.
		\end{split}
	\end{equation}
	Substituting \eqref{eq:ope_convention} gives \eqref{eq:unitarity}.\footnote{The alternative convention $z=\tau+ix$, $\bar z=\tau-ix$ gives the same condition \eqref{eq:unitarity}. With field components normalized as in \eqref{eq:normalization}, the reflection relation becomes $[V_I(z,\bar z)]^\dagger=V_I(-\bar z,-z)$. The factor $(-1)^{S_I+S_J+S_K}$ then arises from the reflected OPE coordinate factors.}
	 
	For Virasoro minimal models $\mathcal{M}(p,q)$, unitarity only occurs when $\abs{p-q}=1$ \cite{Friedan:1983xq}. For $E$-series minimal models, it means
	\begin{equation}\label{E_unitary}
		\begin{split}
			p=\begin{cases}
				11,13 & \text{for }E_6\,, \\
				17,19 & \text{for }E_7\,, \\
				29,31 & \text{for }E_8\,. \\
			\end{cases}
		\end{split}
	\end{equation}
	The $p=q-1$ and $p=q+1$ members have realizations in the critical dense and dilute ADE lattice families, respectively \cite{Pasquier:1986jc,Pasquier:1987xj,Warnaar:1992gj,Roche:1992ww,Warnaar:1993zn,OBrien:1995jsl}. The unitary interpretation of these models should be distinguished from checking the conjugation convention of the reconstructed coefficients.

	At the unitary central charges, the reconstructed relative coefficients obey
	\begin{equation}
		\alpha_{IJK}^*=(-1)^{S_I+S_J+S_K}\alpha_{IJK}\,.
	\end{equation}
	To infer \eqref{eq:unitarity} from this observation and \eqref{eq:factorization}, the additional statement needed is
	\begin{equation}\label{eq:reference-reality-needed}
		\bigl(C^{\mathrm{(ref)}}_{IJK}\bigr)^*=C^{\mathrm{(ref)}}_{IJK}
	\end{equation}
	for physical triplets, with the correlated branch prescription of Appendix~\ref{app:fusion_kernel}. 
	
	Lemma~\ref{lem:unitary-reference-reality}
	proves the required reality \eqref{eq:reference-reality-needed} for every
	admissible physical triplet at $p=q\pm1$. Together with the conjugation
	property of the reduced coefficients, this establishes
	Eq.~\eqref{eq:unitarity} for the displayed candidate OPE data.
	
	\begin{remark}
		Individual chiral factors need not be real. For example, at $(p,q)=(11,12)$ the formula in Appendix~\ref{app:fusion_kernel} gives
		\begin{equation}\label{eq:chiral-reality-counterexample}
			D_{(1,3),(4,1),(4,3)}^{\,2}=-\frac{13}{32}\,.
		\end{equation}
		This coefficient is purely imaginary, whereas the corresponding diagonal
		left--right product is real.
	\end{remark}

	\subsection{Parity symmetry}\label{sec:parity}
	
	According to the torus partition functions \eqref{def:model}, the spectra of all the $E$-series minimal models are invariant
	under the parity map\footnote{More generally, the parity transformation maps the Kac labels
		$(r,s;\bar r,\bar s)$ to $(\bar r,\bar s;r,s)$.}
	\begin{equation}
		\begin{split}
			P(I)=(\bar\ell,\ell,s)\,,
			\qquad
			I=(\ell,\bar\ell,s)\,.
		\end{split}
	\end{equation}
	This map sends any solution of the bootstrap equations
	\eqref{eq:reduced-crossing} to another solution,
	\begin{equation}
		\begin{split}
			C_{IJK}\longrightarrow
			\widetilde C_{IJK}
			:=C_{P(I)P(J)P(K)}\,.
		\end{split}
	\end{equation}
	The permutation properties \eqref{eq:permutation} are manifestly preserved, since parity
	maps the spin $S_I$ to $-S_I$, while $(-1)^{S_I}$ remains unchanged.
	
	Assuming the uniqueness statement of Conjecture~\ref{thm:main}, the parity-transformed solution is related to the original one by globally defined field signs:
	\begin{equation}
		\widetilde C_{IJK}=\epsilon_I\epsilon_J\epsilon_K C_{IJK},
		\qquad \epsilon_I\in\{\pm1\}\,.
	\end{equation}
	Consequently,
	\begin{equation}\label{def:parity_on_op}
		(P\cdot V_I)(z,\bar z):=\epsilon_I V_{P(I)}(\bar z,z)
	\end{equation}
	defines the parity transformation of the operators. Any correlator is invariant when all the operators transform under \eqref{def:parity_on_op}:
	\begin{equation}
		\begin{split}
			\braket{V_{I_1}(x_1)\,V_{I_2}(x_2)\,\ldots\,V_{I_n}(x_n)}=\braket{(P\cdot V_{I_1})(x_1)\,(P\cdot V_{I_2})(x_2)\,\ldots\,(P\cdot V_{I_n})(x_n)}\,.
		\end{split}
	\end{equation}
	In this sense, we say that the $E$-series minimal models are parity symmetric.

	A sufficient, stronger condition is invariance under the bare exchange of left- and right-moving labels,
	\begin{equation}\label{parity_assumption}
		C_{IJK}=C_{P(I)P(J)P(K)}\,.
	\end{equation}
	This cannot always be achieved in a parity-symmetric CFT, because the chosen primary operator basis may contain a parity-odd scalar, which transforms under parity as $(P\cdot V_I)(z,\bar{z})=-V_I(\bar{z},z)$. Then consider a pair of distinct labels related by parity: $J$ and $P(J)$. The parity symmetry implies that
	\begin{equation}
		\begin{split}
			C_{IJJ}=-C_{IP(J)P(J)}\,.
		\end{split}
	\end{equation}
	If $C_{IJJ}\neq0$, it is an obstruction to \eqref{parity_assumption} because it cannot be fixed by operator-sign redefinition. Appendix \ref{app:parity_counter_example} gives a concrete example using three decoupled free bosons.
	
	If a CFT has an operator-sign gauge satisfying \eqref{parity_assumption}, then the choice $\epsilon_I=1$ for \eqref{def:parity_on_op} makes the parity symmetry manifest. Every scalar primary is then even under this chosen parity. To establish this statement for $E$-series minimal models, one must verify the identity for the complete seed data in a specified gauge, including coefficients reconstructed by the simple-current relations. The left--right-symmetric reference factor, the symmetric phase \eqref{def:phi}, complex conjugation, and the all-label lifting argument then propagate the identity to the other cases.
	
	Now the concrete problem is to find a suitable collection of $\eta_I\in\{\pm1\}$, with $\eta_{(0,0)}=1$, such that
	\begin{equation}\label{parity:to_check}
		\begin{split}
			\alpha'_{IJK}
			=\eta_I\eta_J\eta_K\,\alpha_{IJK}\, ,
			\qquad
			\alpha'_{IJK}
			=\alpha'_{P(I)P(J)P(K)}\,.
		\end{split}
	\end{equation}
	This equality was verified exactly for every ordered triple of fields
	using the analytic OPE data. In the conventions of Tables \ref{tab:E6-I1-results}--\ref{tab:E8-I13-results}, we summarize the parity checks as follows.
	\begin{itemize}
		\item For $E_6$ at $(p,q)=(1,12)$, one can assign
		$\eta_I=-1$ to $I=(3,0),(\tfrac72,\tfrac32),(5,2)$; Eq.~\eqref{parity:to_check} then holds;
		at $(p,q)=(5,12)$, it suffices to assign
		$\eta_I=-1$ to
		$I=(\frac72,\frac32)$.
		All other operator signs may be kept positive.
		\item All
		$E_7$ and $E_8$ reference cases already satisfy \eqref{parity:to_check} with all $\eta_I=1$.
	\end{itemize}
	 
    The exact check covers every ordered triple in each of the nine reference seeds. This justifies the stronger parity assumption \eqref{parity_assumption} in all the $E$-series minimal models, which was assumed in \cite{Nivesvivat:2025odb}.

	\subsection{Additional global symmetries}\label{sec:symmetries}
	
	The analytic bootstrap equations use Virasoro symmetry, permutation symmetry (bulk locality), and crossing symmetry (OPE associativity). The candidate coefficients also exhibit additional sector selection rules. Their interpretation should be distinguished from the numerical evidence for uniqueness of the unrestricted bootstrap solution.

	Virasoro symmetry and null-state decoupling imply the chiral fusion rules \eqref{chiral_fusion_rule}, while permutation symmetry constrains the OPE coefficients. In particular,
	\begin{equation}
		\alpha_{AAB}=(-1)^{S_B}\alpha_{AAB},
	\end{equation}
	so that $\alpha_{AAB}=0$ whenever $B$ has odd integer spin. Further selection rules are discussed in \cite{Nivesvivat:2025odb}.

	\paragraph{Parity symmetry.} As discussed in Section~\ref{sec:parity}, the $E$-series minimal models are invariant under parity. In a suitable operator-sign gauge, their OPE coefficients satisfy the stronger parity condition \eqref{parity_assumption}.

	\paragraph{Extended chiral algebra.}
	
	The $E_6$ and $E_8$ partition functions in \eqref{def:model} are sums of absolute squares of extended characters, reflecting extensions of the Virasoro chiral algebra. For $E_7$, the chiral extension is generated by the $\mathbb{Z}_2$ simple current, whose effects were already discussed in Section~\ref{sec:triplet_orbit}. We therefore focus below on the $E_6$ and $E_8$ cases.

	For $E_6$, the three exceptional sectors may be labelled $\id$, $\sigma$, and $\epsilon$. Their sector selection rules have the Ising fusion-ring form
	\begin{equation}\label{fusion:Ising}
		\begin{split}
			&\id\times\id=\id\,,\quad \id\times\sigma=\sigma\,,\quad \id\times\epsilon=\epsilon\,, \\
			&\sigma\times\sigma=\id+\epsilon\,,\quad\sigma\times\epsilon=\sigma\,,\quad\epsilon\times\epsilon=\id\,.
		\end{split}
	\end{equation} 
	These are rules for the exceptional sector labels; the second Kac labels must independently satisfy their fusion conditions. They explain some zeros allowed by the separate Virasoro fusion rules. Their interpretation as extended-algebra fusion rules uses the chiral extension, rather than the character decomposition alone.
	
	For $E_8$, the two exceptional sectors have Fibonacci-type fusion rules,
	\begin{equation}\label{fusion:LY}
		\id\times\id=\id\,,\quad \id\times\tau=\tau\,,\quad \tau\times\tau=\id+\tau .
	\end{equation}
	The last rule permits both sectors in the $\tau\times\tau$ channel, while the first two exclude one sector. Sharing this fusion ring does not identify the extended theory with the Lee--Yang CFT: conformal weights, central charge, and associativity and braiding data contain further information.
	
	The corresponding WZW extensions are associated with $\widehat{\mathfrak{su}}(2)_{10}\subset\widehat{\mathfrak{so}}(5)_1$ and $\widehat{\mathfrak{su}}(2)_{28}\subset(\widehat{\mathfrak g}_2)_1$ \cite{Christe:1987hj}. Their extended sector fusion rings have the forms \eqref{fusion:Ising} and \eqref{fusion:LY}; see Appendix~C of \cite{Petkova:1994zs}. This is a comparison of exceptional sector data, not an identification of the WZW and minimal-model CFTs.

	\paragraph{Pasquier algebra and Ocneanu quantum symmetries.}
	One $\mathbb Z_2\times\mathbb Z_2$ orbit of $E_8$ zeros is not explained by the preceding sector and permutation rules \cite{Nivesvivat:2025odb}. A representative is
	\begin{equation}\label{E8_extra_zero}
		\alpha_{(8,8)(8,8)(8,8)}=0\,.
	\end{equation}
	For the unitary $s=1$ sector, this zero follows from the relation between scalar relative OPE constants and the Pasquier algebra \cite{Pasquier:1987xj,Petkova:1994zs}. Let $G$ be the $E_8$ adjacency matrix, and choose real orthonormal eigenvectors $\psi^a$ satisfying
	\begin{equation}
		G\psi^a=2\cos\!\left(\frac{\pi a}{30}\right)\psi^a,
		\qquad \sum_v\psi_v^a\psi_v^b=\delta_{ab},
		\qquad \psi_v^1>0\,.
	\end{equation}
	Here $v$ labels graph vertices, and $a$ runs over the exponents $1,7,11,13,17,19,23,29$. The label $\ell=8$ corresponds to $a=17$. Its Pasquier coefficient is
	\begin{equation}\label{eq:pasquier-exceptional-zero}
		M_{17,17}^{17}=\sum_v\frac{(\psi_v^{17})^3}{\psi_v^1}=0\,.
	\end{equation}
	The zero is explicit in Appendix~A of \cite{Petkova:1994zs}: the bipartite eigenvector relation implies $M_{abc}=M_{30-a,30-b,c}$ and full index symmetry gives $M_{17,17,17}=M_{13,13,17}$, the vanishing $(13,17)$ entry of the displayed matrix $M_{13}$.

	For the unitary $E_6$ and $E_8$ seed sectors, Petkova and Zuber give the stronger factorization
	\begin{equation}\label{E_factorization}
		\left|d_{(a,\bar a)(b,\bar b)}^{(c,\bar c)}\right|^2
		=M_{ab}^{c}M_{\bar a\bar b}^{\bar c}\,.
	\end{equation}
	Here $d$ denotes their relative OPE coefficient and differs from our $\alpha$ at most by a phase factor. To specify the convention map, write their fields as $\Phi_A$, with $\langle\Phi_A(1)\Phi_A(0)\rangle=(-1)^{S_I}$, and set $V_I=\rho_I\Phi_A$, where $A=(2\ell_I+1,2\bar\ell_I+1)$ and $\rho_I^2=(-1)^{S_I}$. If $\mathcal C_{ab}^{c}$ denotes their positive diagonal $A$-series coefficient, their Eq.~(2.4) gives
	\begin{equation}\label{eq:petkova-zuber-normalization-map}
		\alpha_{IJK}=
		\frac{\rho_I\rho_J}{\rho_K}\,
		d_{BA}^{C}\,
		\frac{\sqrt{\mathcal C_{ab}^{c}\mathcal C_{\bar a\bar b}^{\bar c}}}
		{C^{\mathrm{(ref)}}_{IJK}}\,.
	\end{equation}
	The reversed order $BA$ accounts for their use of $z_1-z_2$ in the OPE, whereas \eqref{eq:ope_convention} places $I$ at the origin and $J$ at $z$. This relation also requires consistent root and operator-sign choices when comparing phases. The vanishing statement is independent of these nonzero normalization factors. Extending such zeros to a nonunitary algebraic solution uses an extended Galois automorphism, as explained in Section~\ref{subsec:galois}; it does not extend the unitary modulus-square formula verbatim.

	The Pasquier algebra is labelled by graph exponents and is distinct from both the vertex-labelled graph fusion algebra and the full algebra of topological defects. The generalized-twist and Ocneanu-cell constructions provide the broader defect and junction framework \cite{Petkova:2000ip,PetkovaZuber:2001Ocneanu}, described in modern language as noninvertible symmetry \cite{Chang:2018iay}. In the unitary cases discussed above, the selection rule \eqref{E8_extra_zero} and factorization \eqref{E_factorization} follow from consistency of four-point functions with topological-defect insertions; see Ref.~\cite[Section~7.4 and Appendix~B]{PetkovaZuber:2001Ocneanu} for more details.

	\subsection{Galois conjugation}\label{subsec:galois}
	
	For each $E$-series family, the reduced bootstrap equations
	\eqref{eq:reduced-crossing_s=1} depend on a single parameter $q_+$,
	which is a primitive $N$-th root of unity, with
	\begin{equation}
		N=
		\begin{cases}
			24\,, & \text{for }E_6\,,\\
			36\,, & \text{for }E_7\,,\\
			60\,, & \text{for }E_8\,.
		\end{cases}
	\end{equation}
	Galois conjugation relates unitary and nonunitary realizations of
	rational-CFT and fusion-category data with the same fusion rules
	\cite{Ardonne:2011wxx,Freedman:2011dp,Lootens:2019xjv,Harvey:2019qzs}.
	Here we describe its action on the reduced bulk OPE coefficients.
	
	It is convenient to introduce
	\begin{equation}\label{eq:galois-rescaled-alpha}
		\begin{split}
			\widetilde{\alpha}_{IJK}
			&:=d_{IJK}\alpha_{IJK}\,, \\
			d_{IJK}&:=\Delta(\ell_I,\ell_J,\ell_K)\,
			\Delta(\bar\ell_I,\bar\ell_J,\bar\ell_K)
			\\
			&\quad\times
			\left([2\ell_I+1]_{q_+}[2\bar\ell_I+1]_{q_+}
			[2\ell_J+1]_{q_+}[2\bar\ell_J+1]_{q_+}
			[2\ell_K+1]_{q_+}[2\bar\ell_K+1]_{q_+}\right)^{1/4}\,.
		\end{split}
	\end{equation}
	All fractional powers follow the logarithmic prescription
	\eqref{eq:qnumber_log_interior}. The crossing equations then take the schematic form
	\begin{equation}\label{eq:reduced_crossing_rewriting}
		\widetilde{\alpha}\,\widetilde{\alpha}\,
		\widetilde{F}\,\widetilde{F}
		=\widetilde{\alpha}\,\widetilde{\alpha}\,,
	\end{equation}
	where $F$ and $\tilde F$ are the original and the rescaled $6j$ symbols
	\begin{equation}
		\widetilde{F}
		=F\,
		\frac{\Delta(\ell_2,\ell_3,\ell_6)
			\Delta(\ell_1,\ell_4,\ell_6)}
		{\Delta(\ell_1,\ell_2,\ell_5)
			\Delta(\ell_3,\ell_4,\ell_5)}
		\left(\frac{[2\ell_6+1]_{q_+}}
		{[2\ell_5+1]_{q_+}}\right)^{1/2}.
	\end{equation}
	Substituting the Racah formula \eqref{def:sixj}, the square-root factors cancel or
	combine into integer powers. Thus $\widetilde F$ is a rational
	function of $q_+$-numbers with coefficients in $\mathbb Q$.
	
	Let $K=\mathbb Q(q_+)$ and consider the Galois conjugation
	\begin{equation}
		\sigma_a:K\longrightarrow K,
		\qquad \sigma_a(q_+)=q_+'=q_+^a,
		\qquad \gcd(a,N)=1.
	\end{equation}
	This automorphism maps the rationalized bootstrap equations to those
	at $q_+'$. The rescaled OPE coefficients need not lie in $K$, so $\sigma_a$ must be extended to a field containing them. 
	
	Take the normal
	closure over $\mathbb Q$ of the field generated by $K$ and the rescaled OPE coefficients $\tilde{\alpha}$. Denote this normal field by $L$. 
	\begin{equation}
		\begin{split}
			E=K\left(\{\tilde\alpha_{IJK}\}\right)\subset L\,.
		\end{split}
	\end{equation}
	Since $L/\mathbb Q$ is normal and, in characteristic zero, separable, it is Galois. Applying \cite[Proposition~7.4]{milne2022} to $\mathbb Q\subset K\subset L$, we obtain an automorphism $\widehat{\sigma}_a$ of $L$ whose restriction to $K$ is $\sigma_a$.\footnote{In Milne's notation, $(\Omega,F,E)=(L,\mathbb Q,K)$, and the $F$-homomorphism $E\to\Omega$ is $\sigma_a:K\to K\subset L$. Milne's $E$ denotes our cyclotomic field $K$, not the coefficient field $E=K(\{\widetilde{\alpha}_{IJK}\})$ introduced above.} The extension need not preserve the smaller field $E$ itself.
	
	The transported coefficients
	\begin{equation}\label{eq:galois-full-transport}
		\widetilde\alpha^{\sigma_a}_{IJK}
		:=\widehat\sigma_a\!\left(\widetilde\alpha_{IJK}\right)
	\end{equation}
	therefore satisfy the bootstrap equations with quantum group parameter $q_+'=q_+^a$. Since
	$\widehat\sigma_a$ preserves sums and products, it transports the
	full solution, including its relative signs; no separate
	choice of roots for individual coefficients is required. It also
	preserves the zero/nonzero pattern.
	
	For the algebraic candidates tabulated above, explicit normal fields
	can be given at the formal reference point $p=1$. Set
	$\zeta_N=e^{2\pi i/N}$, $K_N=\mathbb Q(\zeta_N)$, and
	$[n]=[n]_{\zeta_N}$. With positive source radicals, one may take
	\begin{equation}\label{eq:galois-normal-fields}
		\begin{aligned}
			L_6&=K_{24}\left(3^{1/4}\right),\\
			L_7&=K_{36}\left(\sqrt2,\sqrt{[3]},\sqrt{[5]}\right),\\
			L_8&=K_{60}\left(\sqrt{[7]},\sqrt{[7]+1},
			\sqrt{[11]+1},\sqrt{[3]-2}\right).
		\end{aligned}
	\end{equation}
	Their degrees over the respective cyclotomic fields are $2$, $8$,
	and $16$. In the $E_8$ case, the source coefficients $\tilde{\alpha}$ generate only
	$K_{60}(\sqrt{[7]},\sqrt{[7]+1},\sqrt{[11]+1})$,
	which has degree $8$ and is not normal over $\mathbb Q$; the last
	radical supplies its missing conjugates.
	For example, an $E_6$ lift from $p=1$ to $p=5$ is
	\begin{equation}
		\widehat\sigma_5(\zeta_{24})=\zeta_{24}^{5},
		\qquad
		\widehat\sigma_5(3^{1/4})=i\,3^{1/4},\quad \text{compatible with } \sigma_5(\sqrt3)=-\sqrt3\,.
	\end{equation}
	
	In the original normalization, define the transported coefficients by
	\begin{equation}\label{eq:galois-original-normalization}
    \alpha^{\mathrm{tr}}_{IJK}(q_+')
    :=\frac{\widehat\sigma_a
      \!\left(\widetilde\alpha_{IJK}(q_+)\right)}
      {d_{IJK}(q_+')},
    \qquad q_+'=q_+^a.
    \end{equation}
	Here the denominator is evaluated at the target parameter using the prescribed lifted logarithms. This is not generally the direct Galois image of
	$\alpha_{IJK}$: the small fields in
	\eqref{eq:galois-normal-fields} need not contain $\alpha_{IJK}$
	and $d_{IJK}$ separately.
	
	For each $E$-series, we have independently checked that applying the
	extended Galois action to the $p=1$ seed and restoring the original
	normalization reproduces the OPE coefficients of all other reference
	cases considered in this work, up to primary-field sign redefinitions,
	$$\alpha^{\mathrm{ours}}_{IJK}(q_+') =\epsilon_I\epsilon_J\epsilon_K \alpha^{\mathrm{tr}}_{IJK}(q_+')\,.$$
	This agreement provides an additional check of the numerical seed
	results.\footnote{The explicit Galois extensions and the code
		implementing their action were generated by OpenAI's GPT-6 Astra.
		The authors independently verified the resulting coefficients against
		their bootstrap results. The explicit generator images, field calculations, and field-sign comparisons are included in the ancillary files.}
	
	Although Galois conjugation does not preserve absolute values in general, the present candidates have a stronger arithmetic property. For each admissible triplet, their $p=1$ values satisfy $c_{IJK}:=(\alpha^{(1)}_{IJK})^4\in\mathbb Q_{\geq0}$. The normalization above gives $R_{IJK}:=d_{IJK}(\zeta_N)^4\in K_N$ and $d_{IJK}(\zeta_N^a)^4=\sigma_a(R_{IJK})$. Hence
    \begin{equation}\label{eq:galois-fourth-powers}
    \bigl(\alpha^{\mathrm{tr}}_{IJK}(\zeta_N^a)\bigr)^4
    =\frac{\widehat\sigma_a
      \!\left((\widetilde\alpha^{(1)}_{IJK})^4\right)}
      {d_{IJK}(\zeta_N^a)^4}
    =\frac{\sigma_a(R_{IJK})c_{IJK}}
      {\sigma_a(R_{IJK})}
    =c_{IJK}.
    \end{equation}
    Thus the magnitudes are invariant along these Galois-related candidates, independently of the primary-field sign gauge.
	
	More generally, extending $\sigma_a$ to $\overline{\mathbb Q}$
	gives a bijection between the normalized algebraic solution spaces
	at $q_+$ and $q_+'$. This bijection respects primary-field signs.
	Consequently, an exact existence and uniqueness result for one seed
	in each family would extend to every admissible $p$.
	The field calculations above concern the displayed algebraic
	candidates and do not replace an exact verification of seed
	crossing or uniqueness.
	These Galois relations organize the dependence on $p$, while the
	reconstruction of the full Kac-label
	dependence is established separately in
	Section~\ref{sec:global_solution}.
	
	\subsection{Outlook}\label{subsec:outlook}
	
	The bulk OPE data obtained here provide a starting point for exploring
	both the structure of the exceptional theories and their dynamics away
	from criticality. We conclude with several directions for further study.

    \paragraph{Exact verification.} An important next step is the exact verification of the seed solutions. Our seed coefficients are algebraic expressions reconstructed from numerical solutions and tested through exhaustive high-precision crossing checks. Evaluating the reduced fusion matrices in suitable algebraic number fields, with consistent choices of algebraic branches, would allow every crossing equation to be verified as an exact identity. This would establish the seed solutions algebraically and provide exact input for our extension arguments to arbitrary Kac labels and allowed central charges.
    
    \paragraph{Arithmetic structure.} In our normalization, the tabulated reduced coefficients satisfy $\alpha_{IJK}^{4}\in\mathbb Q$. Together with the Galois transport of Section~\ref{subsec:galois}, this implies that their magnitudes are independent of $p$. It would be interesting to derive this restricted arithmetic form directly from the bootstrap equations and understand its origin.
    
	\paragraph{Sewing consistency.}
	Our analysis has concentrated on sphere correlators. Matching our coefficients to the TFT construction of Fuchs, Runkel, and Schweigert \cite{Fuchs:2002cm,Fuchs:2004xi,Fjelstad:2005} would relate these explicit data to the sewing constraints for full correlators, with chiral consistency governed by the Moore-Seiberg relations \cite{Moore:1988qv}. Such a comparison would provide an independent check, while the tables offer directly usable input for physics applications.
	
	\paragraph{Boundary CFTs.}
	The bulk data can also serve as a starting point for exploring the
	boundary physics of the exceptional models. A natural goal is to
	determine boundary OPE coefficients and bulk--boundary couplings for
	all elementary Virasoro-preserving boundary conditions. Using
	the known ADE boundary states and the cell/TFT constructions
	\cite{Behrend:1999bn,PetkovaZuber:2001Ocneanu,Fuchs:2004xi},
	one could derive uniform formulas compatible with our bulk conventions. These would allow
	systematic studies of boundary correlators and boundary RG flows
	throughout the unitary and nonunitary $E$-series.

    \paragraph{Free-field realizations.} Free-field methods have been used to determine OPE coefficients of the unitary $(A_{10},E_6)$ minimal model and exceptional $SU(2)$ WZW theories \cite{Furlan:1989ra,Fuchs:1989kz,Fuchs:1989zp}; the $E_6$ WZW theory also admits a free-fermion description \cite{Goddard:1985jp,Bouwknegt:1986gc}. A complementary goal is a uniform realization of the full $E$-series minimal-model operator algebras in screened free fields, including the nonunitary cases. The diagonal pairing construction of Ref.~\cite{Gabai:2024qum} and generalized orbifolds \cite{Frohlich:2009gb} suggest possible starting points. Such a construction could derive the OPE coefficients directly and explain the exceptional left--right pairings and selection rules at the operator level.

    \paragraph{Minimal strings.} An  application of the $E$-series minimal models is to minimal string theory, whose continuum description combines a minimal model with Liouville theory and conformal ghosts \cite{Seiberg:2004at}. ADE models coupled to gravity have also been studied through random-surface and matrix constructions \cite{Kostov:1989eg,Kostov:1995xw}. More recently, Ref.~\cite{Rodriguez:2025rte} computed continuum string amplitudes, with its exceptional calculations mostly restricted to $E_6$ by the available explicit OPE data. Our results supply the matter-sector input needed to extend these calculations to $E_7$ and $E_8$, including nonunitary models. Computing sphere four-point and torus one-point amplitudes would provide concrete observables for comparison with proposed matrix descriptions of exceptional minimal strings.
	
	\paragraph{RG flows and Hamiltonian truncation.}
	An important motivation for computing OPE coefficients is to use the
	conformal fixed point as a starting point for studying interacting
	theories away from criticality. Our results provide input for conformal
	perturbation theory and Hamiltonian truncation
	\cite{Yurov:1989yu,Hogervorst:2014rta,Rychkov:2014eea,
		James:2017cpc,Anand:2020gnn}.
	They could enable quantitative tests of proposed exceptional RG flows
	\cite{Ravanini:1991pu,Benedetti:2024utz} and the truncation applications
	suggested in Refs.~\cite{Nakayama:2022svf,Katsevich:2024sov},
	as well as explorations of more general perturbations and their
	infrared phases.

	\section*{Acknowledgments}
	
	We thank Mikhail Kapranov, Yu Nakayama, and Mengyang Zhang for useful discussions. JQ also thanks Barak Gabai, Victor Gorbenko, Bernardo Zan, and Aleksandr Zhabin for earlier discussions during his time in Lausanne. The work of JQ and MY was supported by World Premier
	International Research Center Initiative (WPI), MEXT, Japan, and by the Center
	for Data-Driven Discovery, Kavli IPMU (WPI). MY was also supported in part by the JSPS KAKENHI Grant No.~23K25865; by JST, Japan (CREST Grant No.~JPMJCR26XA, Moonshot R\&D Grant No.~JPMJMS256E); and by the IBM-UTokyo-sponsored research.
	The mathematical derivations and the writing of the manuscript were carried out primarily by the authors. OpenAI’s ChatGPT, including GPT-6 Astra, GPT-5.6 Sol and other models, was used under their direction to assist with Mathematica and Python code development, simplification of mathematical arguments, language editing, and literature searches.

	\appendix
	
	\section{\texorpdfstring{Fusion matrix, \(6j\) symbol, and chiral minimal-model OPE coefficients}{Fusion matrix, 6j symbol, and chiral minimal-model OPE coefficients}}\label{app:fusion_kernel}
	
	The main purpose of this appendix is to fix our notation and
	conventions for Virasoro fusion matrices, quantum-group $6j$ symbols,
	and chiral OPE coefficients. For detailed treatments of minimal-model
	fusion matrices and their quantum-group description, see
	Refs.~\cite{Felder:1989wv,Furlan:1989ra,Hou:1990gq,Teschner:1995}.
	Subtleties of rational specialization and fusion-rule truncation
	are discussed in
	Refs.~\cite{dotsenko1988lectures,GanchevPetkova:1989}.
	
	Related formulas for degenerate Virasoro fields at generic central
	charge, including normalization-dependent chiral coupling factors,
	were derived and further developed in the Liouville approach
	\cite{Cremmer:1994kc,Cremmer:1994}.
	See also Ref.~\cite{Gabai:2024qum} for a recent Coulomb-gas treatment
	and Ref.~\cite{Koshida:2021eta} for a rigorous treatment of the
	first-row Virasoro subcategory at generic central charge.
	
	Consider the general four-point function
	\begin{equation}
		\begin{split}
			\braket{V_{(r_1,s_1),(\bar{r}_1,\bar{s}_1)}(z_1,\bar{z}_1)V_{(r_2,s_2),(\bar{r}_2,\bar{s}_2)}(z_2,\bar{z}_2)V_{(r_3,s_3),(\bar{r}_3,\bar{s}_3)}(z_3,\bar{z}_3)V_{(r_4,s_4),(\bar{r}_4,\bar{s}_4)}(z_4,\bar{z}_4)}\,,
		\end{split}
	\end{equation}
	whose crossing equation is
	\begin{equation}\label{eq:crossing_app}
		\begin{split}
			&\sum_{5}C_{125}C_{345}\mathcal{F}_{(r_1,s_1),(r_2,s_2),(r_3,s_3),(r_4,s_4)}^{(s),(r_5,s_5)}(z)\mathcal{F}_{(\bar{r}_1,\bar{s}_1),(\bar{r}_2,\bar{s}_2),(\bar{r}_3,\bar{s}_3),(\bar{r}_4,\bar{s}_4)}^{(s),(\bar{r}_5,\bar{s}_5)}(\bar{z}) \\
			&=\sum_{6}C_{236}C_{416}\mathcal{F}_{(r_1,s_1),(r_2,s_2),(r_3,s_3),(r_4,s_4)}^{(t),(r_6,s_6)}(z)\mathcal{F}_{(\bar{r}_1,\bar{s}_1),(\bar{r}_2,\bar{s}_2),(\bar{r}_3,\bar{s}_3),(\bar{r}_4,\bar{s}_4)}^{(t),(\bar{r}_6,\bar{s}_6)}(\bar{z})\,.
		\end{split}
	\end{equation}
	
	To solve Eq.~\eqref{eq:crossing_app}, recall the fusion
	kernel of the chiral Virasoro blocks
	\cite{Moore:1988qv,Felder:1989wv,Hou:1990gq,Furlan:1989ra,Koshida:2021eta}:
	\begin{equation}
		\begin{split}
			\mathcal{F}_{(r_1,s_1),(r_2,s_2),(r_3,s_3),(r_4,s_4)}^{(s),(r_5,s_5)}(z)=\sum_{6}\mathcal{K}_{(r_5,s_5),(r_6,s_6)}\,\mathcal{F}_{(r_1,s_1),(r_2,s_2),(r_3,s_3),(r_4,s_4)}^{(t),(r_6,s_6)}(z)\,.
		\end{split}
	\end{equation}
	Substitution into \eqref{eq:crossing_app} gives
	\begin{equation}\label{eq:crossing2}
		\begin{split}
			&\sum_{5}C_{125}C_{345}\mathcal{K}_{(r_5,s_5),(r_6,s_6)}\mathcal{K}_{(\bar{r}_5,\bar{s}_5),(\bar{r}_6,\bar{s}_6)} \\
			&=\begin{cases}
				C_{236}C_{416}\,, & \text{if}\ (r_6,s_6;\bar{r}_6,\bar{s}_6)\ \text{is in the spectrum}, \\
				0\,, & \text{otherwise}.
			\end{cases}
		\end{split}
	\end{equation}
	
	The fusion matrix is related to the $6j$ symbols of the quantum group by
	\begin{equation}
		\begin{aligned}
			\mathcal{K}_{5,6}
			\equiv\mathcal{K}_{(r_5,s_5),(r_6,s_6)}
			&=\frac{D_{(r_2,s_2)(r_3,s_3)(r_6,s_6)}
				D_{(r_4,s_4)(r_1,s_1)(r_6,s_6)}}
			{D_{(r_1,s_1)(r_2,s_2)(r_5,s_5)}
				D_{(r_3,s_3)(r_4,s_4)(r_5,s_5)}}\\
			&\quad\times
			\sixj{\ell^+_1}{\ell^+_2}{\ell^+_5}
			{\ell^+_3}{\ell^+_4}{\ell^+_6}_{q_+}
			\sixj{\ell^-_1}{\ell^-_2}{\ell^-_5}
			{\ell^-_3}{\ell^-_4}{\ell^-_6}_{q_-}\,.
		\end{aligned}
	\end{equation}
	Here $\ell^{\pm}$ are related to the Kac indices $(r,s)$ by
	\begin{equation}
		\begin{split}
			r=2\ell^++1\,,\quad s=2\ell^-+1\,.
		\end{split}
	\end{equation}
	For a Virasoro minimal model $\mathcal{M}(p,q)$, the quantum-group parameters
	$q_\pm$ are
	\begin{equation}
		\begin{split}
			q_+=e^{i\pi p/q}\,,\quad q_-=e^{i\pi q/p}\,.
		\end{split}
	\end{equation}
	We now define the remaining quantities.
	
	\paragraph{$6j$ symbol.} 
	
	To make Eq.~\eqref{def:sixj} unambiguous, we specify the logarithms used
	in every fractional power. The two quantum-group parameters are coupled
	through a single complex variable,
	\begin{equation}\label{eq:coupled_continuation}
		t:=\alpha_+^2,\qquad q_+=e^{i\pi t},\qquad
		q_-=e^{i\pi/t},\qquad \operatorname{Im}t>0.
	\end{equation}
	Thus $|q_+|<1$ and $|q_-|>1$. Physical minimal-model values are the
	boundary values $t\to p/q+i0$. In particular, there is no restriction on
	the sign of $\operatorname{Im}q_-$.

	For positive real $q$, every $q$-number $[n]_q$ with $n\geq1$ is positive.
	We fix the initial germ by its real logarithm on $0<q<1$ with
	$\arg q=0$, so that
	\begin{equation}\label{convention:qnumber}
		\left([n]_q\right)^\gamma>0,\qquad
		0<q<1,\quad \arg q=0,\quad
		n\in\mathbb{Z}_+,\quad\gamma\in\mathbb{R}.
	\end{equation}
	On the universal cover of the punctured open unit disk, write
	$q=e^{i\theta}$ with $\operatorname{Im}\theta>0$, and define
	\begin{equation}\label{eq:qnumber_log_interior}
		\begin{split}
			L_n^{\mathrm{in}}(\theta)
			&=-i(n-1)\theta+\operatorname{Log}(1-e^{2in\theta})
			-\operatorname{Log}(1-e^{2i\theta}),\\
			([n]_q)^\gamma&=\exp\bigl(\gamma L_n^{\mathrm{in}}(\theta)\bigr).
		\end{split}
	\end{equation}
	Here $\operatorname{Log}(1-z)$ is the analytic logarithm on $|z|<1$
	that vanishes at $z=0$. The variable $\theta$ is a lifted logarithm of
	$q$; it is not reduced modulo $2\pi$. Likewise,
	\begin{equation}
		\left([n]_q!\right)^\gamma
		=\exp\left(\gamma\sum_{k=1}^nL_k^{\mathrm{in}}(\theta)\right)
		=\frac{e^{-i\gamma n(n-1)\theta/2}}{(1-q^2)^{n\gamma}}
		\prod_{k=1}^n(1-q^{2k})^\gamma.
	\end{equation}
	These expressions are single-valued on the chosen universal cover.
	All products and ratios under fractional powers in Eq.~\eqref{def:sixj}
	are interpreted by adding and subtracting these logarithms. Boundary
	values are taken only after combining the factors in the admissible
	root-of-unity expression.

	For $q_+$, we use $L_n^{\mathrm{in}}(\pi t)$. To define the $q_-$
	factors, continue the same germ through the lift $\theta=\pi$ of $q=-1$.
	An explicit logarithm on the exterior cover is
	\begin{equation}\label{eq:qnumber_log_exterior}
		\begin{split}
			L_n^{\mathrm{out}}(\theta)
			&=i(n-1)(\theta-2\pi)+\operatorname{Log}(1-e^{-2in\theta})-\operatorname{Log}(1-e^{-2i\theta}),
			\qquad\operatorname{Im}\theta<0.
		\end{split}
	\end{equation}
	The matching follows from
	\begin{equation}
		\lim_{\epsilon\searrow\,0}L_n^{\mathrm{in}}(\pi+i\epsilon)
		=\lim_{\epsilon\searrow\,0}L_n^{\mathrm{out}}(\pi-i\epsilon)
		=\log n-i\pi(n-1).
	\end{equation}
	Since $[n]_{-1}=(-1)^{n-1}n\neq0$, these are restrictions of the same
	local analytic logarithm near this lift of $q=-1$. We use
	$L_n^{\mathrm{out}}(\pi/t)$ for $q_-$, with the same additive prescription
	for factorials and other fractional powers. This specifies both sets
	of $6j$ symbols along the coupled continuation
	\eqref{eq:coupled_continuation}. As $t$ approaches its physical value,
	$q_+$ and $q_-$ approach the unit circle from the interior and exterior,
	respectively; see Fig.~\ref{fig:q-number-continuation}.

	\begin{figure}[ht]
		\centering
		\begin{tikzpicture}[
			x=2.65cm, y=2.65cm,
			line cap=round, line join=round,
			>=latex,
			axis/.style={black!70, line width=0.45pt, ->},
			pathplus/.style={orange!85!black, line width=0.9pt, ->},
			pathminus/.style={green!45!black, line width=0.9pt, ->}
			]
			
			\fill[orange!18, even odd rule]
			(0,0) circle[radius=1]
			(0,0) circle[radius=0.96];
			\fill[green!14]
			(0:1.04) arc (0:180:1.04)
			-- (180:1) arc (180:0:1) -- cycle;
			
			\draw[black!80, line width=0.5pt]
			(0,0) circle[radius=1];
			
			\draw[axis] (0,0) -- (1.52,0)
			node[right=3pt, black] {$\mathrm{Re}\,q$};
			\draw[axis] (0,-1.15) -- (0,1.28)
			node[above=3pt, black] {$\mathrm{Im}\,q$};
			
			\draw[red!70!black, line width=0.45pt, line join=miter]
			(-1.60,0)
			\foreach \k in {0,...,19} {
				-- ({-1.58+0.08*\k}, 0.015)
				-- ({-1.54+0.08*\k},-0.015)
			}
			-- (0,0);
			
			\draw[line width=1.15pt] (0,0) -- (1,0);
			
			\draw[pathplus]
			(0.61,0)
			.. controls (0.55,0.38) and (0.13,0.64)
			.. (-0.15,0.49)
			.. controls (-0.40,0.355) and (-0.54,-0.09)
			.. (-0.36,-0.28)
			.. controls (-0.18,-0.47) and (0.27,-0.49)
			.. (0.43,-0.77);
			\node[orange!85!black] at (0.09,-0.65) {$q_+$};
			
			\draw[pathminus]
			(0.79,0)
			.. controls (0.68,0.51) and (0.19,0.80)
			.. (-0.18,0.67)
			.. controls (-0.61,0.52) and (-0.64,0)
			.. (-1,0)
			.. controls (-1.30,0) and (-1.39,0.55)
			.. (-1.17,0.88)
			.. controls (-0.95,1.21) and (-0.64,1.29)
			.. (-0.42,1.01);
			\node[green!45!black, above left=1pt]
			at (-1.17,0.96) {$q_-$};
			
			\fill (0,0) circle[radius=0.9pt];
			\fill (1,0) circle[radius=1.0pt];
			
			\node[below left=3pt] at (-1,0) {$-1$};
			\node[below right=3pt] at (1,0) {$1$};
			\node[anchor=north, inner sep=2pt] at (0.54,-0.07)
			{$[n]_q>0$};
			
		\end{tikzpicture}
		\caption{Analytic continuation of the $q$-numbers
			for $q_+$ (orange) and $q_-$ (green). For $n>1$, we have $\log([n]_q)>0$ when $q>0$ and $\arg(q)=0$. The red zigzag line is the branch cut.}
		\label{fig:q-number-continuation}
	\end{figure}

	\paragraph{OPE coefficients.} Let
	$D_{ijk}\equiv D_{(r_i,s_i),(r_j,s_j),(r_k,s_k)}$ denote the OPE coefficient
	of the chiral generalized minimal model. In our conventions, its explicit form
	is \cite{Gabai:2024qum}
	\begin{equation}\label{ope_chiral_general:final}
		\begin{split}
			&D_{(r_1,s_1)(r_2,s_2)(r_3,s_3)} 
			=D_{(r_1,1)(r_2,1)(r_3,1)}\,D_{(1,s_1)(1,s_2)(1,s_3)} \sqrt{\frac{Y(0,2\Bell_1)\,Y(0,2\Bell_2)\,Y(0,2\Bell_3)}
				{Y(1,2\Bell_1)\,Y(1,2\Bell_2)\,Y(1,2\Bell_3)}} \\
			&\times\frac{e^{i\pi(\ell_1^+\ell_1^-+\ell_2^+\ell_2^--\ell_3^+\ell_3^- -2\ell_1^+\ell_2^-)}\,(-1)^{(\ell_1^++\ell_3^+-\ell_2^+)(\ell_2^-+\ell_3^--\ell_1^-)}\,Y(1,2\Bell_1)\,Y(1,2\Bell_2)\,Y(1,2\Bell_3)}
			{Y(0,\Bell_1-\Bell_2+\Bell_3)\,
				Y(0,\Bell_2-\Bell_1+\Bell_3)\,Y(0,\Bell_1+\Bell_2-\Bell_3)\,
				Y(1,\Bell_1+\Bell_2+\Bell_3)}\,,
		\end{split}
	\end{equation}
	where
	$\Bell_i=(\ell_i^+,\ell_i^-)\equiv\left(\tfrac{r_i-1}{2},\tfrac{s_i-1}{2}\right)$,
	and the function $Y$ is defined by
	\begin{equation}\label{def:Y}
		\begin{split}
			Y(n,\Bell)&:=\prod_{k^+=1+n}^{\ell^++n}\,\prod_{k^-=1+n}^{\ell^-+n} \left(k^+\alpha_++k^-\alpha_-\right)\,, \\
			\alpha_+&=-\sqrt{\frac{p}{q}},\quad \alpha_-=\sqrt{\frac{q}{p}}. \\
		\end{split}
	\end{equation}
	
	$D_{(r_1,1)(r_2,1)(r_3,1)}$ and $D_{(1,s_1)(1,s_2)(1,s_3)}$ are the chiral
	minimal-model OPE coefficients in the $s=1$ and $r=1$ sectors, respectively:
	\begin{equation}\label{def:D_subsector}
		\begin{split}
			D_{(r_1,1)(r_2,1)(r_3,1)}&=\frac{1}{[r_3-1]_{q_+}!}\,\sqrt{\frac{[\frac{r_1+r_3-r_2-1}{2}]_{q_+}!\,[\frac{r_2+r_3-r_1-1}{2}]_{q_+}!\,[\frac{r_1+r_2+r_3-1}{2}]_{q_+}!}{\left([r_1]_{q_+}\,[r_2]_{q_+}\,[r_3]_{q_+}\right)^{1/2}\,[\frac{r_1+r_2-r_3-1}{2}]_{q_+}!}} \\
			&\times\prod_{k=1}^{\frac{r_1+r_2-r_3-1}{2}} 
			\frac{\Gamma\left(\left(\frac{r_1+r_2+r_3+1}{2}-k\right)\alpha_+^2 - 1\right)}{\Gamma\left((r_1-k)\alpha_+^2\right) 
				\Gamma\left((r_2-k)\alpha_+^2\right) 
				\Gamma(1-k\alpha_+^2)} \\
			&\times\left[
			\begin{aligned}
				&\left(\prod_{k=1}^{r_1-1}
				\frac{\Gamma\left(k\alpha_+^2\right)^2
					\Gamma(1-k\alpha_+^2)}{\Gamma\left((k+1)\alpha_+^2-1\right)}\right) \\
				&\quad\times\left(\prod_{k=1}^{r_2-1}
				\frac{\Gamma\left(k\alpha_+^2\right)^2
					\Gamma(1-k\alpha_+^2)}{\Gamma\left((k+1)\alpha_+^2-1\right)}\right) \\
				&\quad\times\left(\prod_{k=1}^{r_3-1}
				\frac{\Gamma\left((k+1)\alpha_+^2-1\right)}
				{\Gamma\left(k\alpha_+^2\right)^2\Gamma(1-k\alpha_+^2)}\right)
			\end{aligned}
			\right]^{1/2}\,, \\
			D_{(1,s_1),(1,s_2),(1,s_3)}
			&=D_{(r_1,1)(r_2,1)(r_3,1)}
			\Big{|}_{\substack{(r_1,r_2,r_3,\alpha_+,q_+)\rightarrow(s_1,s_2,s_3,\alpha_-,q_-)}} \\
		\end{split}
	\end{equation}
	The square roots in the first line of
	Eq.~\eqref{ope_chiral_general:final} require a correlated normalization.
	Set $\alpha_+=-\sqrt{t}$ and $\alpha_-=1/\sqrt{t}$, using the square root
	that is positive at $t=1$. Use the lifted quantum-number logarithms
	\eqref{eq:qnumber_log_interior} and \eqref{eq:qnumber_log_exterior},
	and continue the Gamma and $Y$ factors in the same variable $t$.
	For fixed admissible labels, cancel the removable singularities at
	$t=1$ in the complete products before taking this normalization limit.
	The squared subsector coefficients then have finite positive limits.
	Indeed, for $m=(r_1+r_2-r_3-1)/2$, the powers of $t-1$ cancel
	between the Gamma products, and their combined sign with the lifted
	quantum-number factors is $(-1)^{m(m+1)}=1$. The ratios $Y(0,2\Bell_i)/Y(1,2\Bell_i)$ also have finite positive limits.
	For the latter statement, the zero factors in the numerator and
	denominator at $t=1$ occur in equal numbers, and the remaining factors
	have matching signs. We choose the branches for which the first line
	of Eq.~\eqref{ope_chiral_general:final} has a positive limit at $t=1$.
	This is a normalization of the combined expression, not an instruction
	to approach $q_+$ and $q_-$ independently from inside the unit disk.

	Starting with this germ, we analytically continue to
	\begin{equation}
		\operatorname{Im}t>0
	\end{equation}
	and take $t\to p/q+i0$ for a physical minimal model. Gamma-function
	poles and zeros of the individual $Y$ factors occur only on the real
	$t$ axis, so the remaining square roots can be continued consistently
	in the upper half-plane. The second line of
	Eq.~\eqref{ope_chiral_general:final} has no additional fractional-power
	ambiguity once $\alpha_\pm$ and the displayed phases are fixed.
	Figure~\ref{fig:q-number-continuation} illustrates the corresponding
	interior/exterior prescription for $q_\pm$. Evaluating separate
	principal square roots at the physical endpoint is not equivalent to
	this continuation.

	For completeness, the reference coefficient $D_{ijk}$ is finite and
	nonzero on every chiral triplet allowed by the truncated fusion rules.
	For example, in the $r$ sector these rules bound every factorial index
	in Eq.~\eqref{def:D_subsector} by $q-1$. Coprimality of $p$ and $q$
	therefore gives $[k]_{q_+}\neq0$ for every factor with $1\leq k<q$.
	The possibly singular Gamma factors have arguments $kt$ or $kt-1$
	with $1\leq k<q$, or $1-kt$ with $1\leq k<q$; at $t=p/q$, none is a
	nonpositive integer. Gamma functions have no zeros. The same argument
	applies to the $s$ sector after interchanging $p$ and $q$.
	Finally, a zero of a $Y$ factor would require $p k^+=q k^-$, whereas
	its truncated product ranges obey $1\leq k^+<q$ and $1\leq k^-<p$.
	Thus no such zero occurs. Empty products are understood to equal one.
	This justifies dividing by the reference factors in the fusion matrix
	and in the definition of the reduced coefficients below; it does not
	assert that every physical bulk coefficient is nonzero.

	The preceding discussion applies to general Virasoro minimal models. We now
	specialize to the $E$ series, whose spectra are diagonal in the second Kac
	indices, $s=\bar{s}$. We therefore label a Virasoro primary by
	$(r,\bar r,s)$ and define the OPE-coefficient ratio
	\begin{equation}
		\begin{split}
			\alpha_{ijk}\equiv\alpha_{(r_i,\bar{r}_i,s_i),(r_j,\bar{r}_j,s_j),(r_k,\bar{r}_k,s_k)}:=\frac{C_{(r_i,\bar{r}_i,s_i),(r_j,\bar{r}_j,s_j),(r_k,\bar{r}_k,s_k)}}{D_{(r_i,s_i),(r_j,s_j),(r_k,s_k)}D_{(\bar{r}_i,s_i),(\bar{r}_j,s_j),(\bar{r}_k,s_k)}}\,.
		\end{split}
	\end{equation}
	Equation~\eqref{eq:crossing2} then becomes
	\begin{equation}\label{eq:crossing3}
		\begin{split}
			&\sum_{5}\alpha_{125}\alpha_{345}\sixj{\ell^+_1}{\ell^+_2}{\ell^+_5}{\ell^+_3}{\ell^+_4}{\ell^+_6}_{q_+}
			\sixj{\ell^-_1}{\ell^-_2}{\ell^-_5}
			{\ell^-_3}{\ell^-_4}{\ell^-_6}_{q_-}  \sixj{\bar\ell^+_1}{\bar\ell^+_2}{\bar\ell^+_5}{\bar\ell^+_3}{\bar\ell^+_4}{\bar\ell^+_6}_{q_+}
			\sixj{\bar\ell^-_1}{\bar\ell^-_2}{\bar\ell^-_5}
			{\bar\ell^-_3}{\bar\ell^-_4}{\bar\ell^-_6}_{q_-}\\
			&=\begin{cases}
				\alpha_{236}\alpha_{416}\,, & \text{if}\ (r_6,s_6;\bar{r}_6,\bar{s}_6)\ \text{is in the spectrum}, \\
				0\,, & \text{otherwise}.
			\end{cases}
		\end{split}
	\end{equation}
	Although physical external and \(s\)-channel fields obey
	\(\ell_i^-=\bar\ell_i^-\), the holomorphic and antiholomorphic fusion
	transformations introduce independent trial labels \(\ell_6^-\) and
	\(\bar\ell_6^-\).  The two \(q_-\) symbols therefore cannot in general be
	replaced by a square.  For the $s$-independent lift of Section \ref{sec:global_solution}, the reduced OPE coefficients factor out of the sum over the second Kac label, and orthogonality projects onto \(s_6=\bar s_6\).
	
	We begin by solving the $s=1$ sector, for which
	$\ell^-=\bar\ell^-=0$. The $q_-$ $6j$ symbols then equal unity, and
	Eq.~\eqref{eq:crossing3} reduces to
	\begin{equation}\label{eq:crossing4}
		\begin{split}
			&\sum_{\ell^+_5,\bar{\ell}^+_5}\alpha_{125}\alpha_{345}\sixj{\ell^{+}_1}{\ell^{+}_2}{ \ell^{+}_5}{\ell^{+}_3}{\ell^{+}_4}{ \ell^{+}_6}_{q_+}\sixj{\bar{\ell}^{+}_1}{ \bar{\ell}^{+}_2}{ \bar{\ell}^{+}_5}{ \bar{\ell}^{+}_3}{ \bar{\ell}^{+}_4}{ \bar{\ell}^{+}_6}_{q_+} \\
			&=\begin{cases}
				\alpha_{236}\alpha_{416}\,, & \text{if}\ (\ell^+_6,\bar{\ell}^+_6)\ \text{is in the spectrum}, \\
				0\,, & \text{otherwise}.
			\end{cases}
		\end{split}
	\end{equation}
	This is the reduced bootstrap equation used in the $s=1$ sector.
	
	\subsection{Reference OPE coefficient in the unitary case}
	
	The following lemma establishes the reality property needed in Section~\ref{subsec:unitary}.
	
	\begin{lemma}[Reality of the unitary reference OPE coefficients]\label{lem:unitary-reference-reality}
		Let $p=q\pm1$ and let $(r_i,s_i)$, $i=1,2,3$, be an admissible
		chiral triplet with odd $s_i$. Explicitly, $1\leq r_i<q$,
		$1\leq s_i<p$, the sums of the three labels in each sector are odd,
		and
		\[
		r_i+r_j-r_k\geq1,\quad r_1+r_2+r_3\leq2q-1,
		\qquad
		s_i+s_j-s_k\geq1,\quad s_1+s_2+s_3\leq2p-1
		\]
		for every permutation $(i,j,k)$. With the correlated continuation
		specified above, $D_{123}$ is finite and nonzero and
		\begin{equation}\label{eq:unitary-chiral-square-sign}
			\operatorname{sgn}(D_{123}^{\,2})
			=(-1)^{\sum_{i=1}^3(r_i-1)(s_i-1)/2}.
		\end{equation}
		Consequently, for a physical exceptional triplet whose left and right
		chiral triplets are both admissible,
		\begin{equation}\label{eq:unitary-reference-real}
			\bigl(C^{\mathrm{(ref)}}_{IJK}\bigr)^2>0,
			\qquad
			\bigl(C^{\mathrm{(ref)}}_{IJK}\bigr)^*
			=C^{\mathrm{(ref)}}_{IJK}.
		\end{equation}
		These assertions concern the nonzero reference factors, including when
		the corresponding exceptional reduced coefficient vanishes.
	\end{lemma}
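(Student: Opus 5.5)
The plan is to use the factorization \eqref{ope_chiral_general:final} of $D_{123}$ into the $r$-sector factor $D_{(r_1,1)(r_2,1)(r_3,1)}$, the $s$-sector factor $D_{(1,s_1)(1,s_2)(1,s_3)}$, and an explicit mixed factor built from $Y$ functions and phases. I will determine the sign of the square of each piece at $t=p/q$ with $p=q\pm1$, using the correlated continuation $t\to p/q+i0$ from the germ normalized positive at $t=1$. Finiteness and nonvanishing are already established in the appendix discussion: the truncated factorial and Gamma ranges avoid zeros and poles, and the $Y$ factors cannot vanish because $pk^+=qk^-$ is impossible in the truncated ranges. Only the sign remains.

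\textbf{Subsector factors.} I will show that $D_{(r_1,1)(r_2,1)(r_3,1)}^2>0$ at $t=p/q$, and likewise for the $s$ sector after $p\leftrightarrow q$. The argument is to track the phase of $D^2$ along the path from $t=1$ to $t=p/q$. For $p=q\pm1$ the parameter $t=1\pm 1/q$ lies close to $1$. On the real segment between $1$ and $p/q$, the $q_+$-numbers $[k]_{q_+}$ with $1\le k<q$ satisfy $[k]=\sin(k\pi t)/\sin(\pi t)$, and I expect their signs to be explicitly $(-1)^{k-1}$ times positive. The Gamma arguments $kt$, $kt-1$, $1-kt$ cross integers only at points controlled by $k(t-1)$. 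Since $|k(t-1)|<1$ for $k<q$, any sign changes can be counted exactly relative to $t=1$. The $i0$ prescription, through the interior/exterior logarithms, fixes each crossing phase as a quarter or half turn. The target is that the total phase of $D^2$ is trivial, consistent with the known positivity of unitary $A$-series constants. This is a parity computation of the same type as the $(-1)^{m(m+1)}=1$ cancellation used in the appendix.

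\textbf{Mixed factor.} The phase $e^{i\pi(\ell_1^+\ell_1^-+\ell_2^+\ell_2^--\ell_3^+\ell_3^--2\ell_1^+\ell_2^-)}$ squares to $e^{2\pi i(\cdots)}$. Modulo $2$ in the exponent this reduces to $\sum_i 2\ell_i^+\ell_i^-$ (up to even terms), i.e. $(-1)^{\sum(r_i-1)(s_i-1)/2}$, noting that $2\ell_1^+\ell_2^-\cdot 2$ is even. The sign $(-1)^{(\cdots)(\cdots)}$ squares to $1$. What remains is the $Y$ ratio. Each factor $k^+\alpha_++k^-\alpha_-=(k^- - k^+t)/\sqrt t$ is real, with sign determined by $k^-$ versus $k^+t$. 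For $t=1\pm1/q$ and $k^\pm$ in the truncated ranges, this sign coincides with the sign at $t=1$ except in the ties $k^+=k^-$, where it is fixed by $\pm$. I will count these tie factors in $Y(0,2\Bell_i)$, $Y(1,2\Bell_i)$ and in the denominator products. The square-root ratio $Y(0)/Y(1)$ is positive by the appendix remark, so only the unsquared $Y$ products in the second line contribute. Their squares are positive, since they are real and nonzero.

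\textbf{Conclusion and main obstacle.} Combining the three pieces gives \eqref{eq:unitary-chiral-square-sign}. For $C^{\mathrm{(ref)}}_{IJK}$, the left and right exponents $\sum(r_i-1)(s_i-1)/2$ and $\sum(\bar r_i-1)(s_i-1)/2$ add to $\sum(r_i+\bar r_i-2)(s_i-1)/2$. Since $r_i+\bar r_i$ is even and $s_i$ is odd, each term is even, so the sum is even and $(C^{\mathrm{(ref)}})^2>0$. Reality of $C^{\mathrm{(ref)}}$ then follows because it is a real number up to sign. The main obstacle will be the subsector sign: carefully tracking the branch of the Gamma and $q$-factorial square roots through the crossings near $t=1$. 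I would first try verifying it on the $\ell$-by-$\ell$ recursion (increasing $r_3$ by $2$) to reduce it to a single-ratio sign check.
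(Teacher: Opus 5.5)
\textbf{Gap: the subsector positivity is never proved.} Your overall structure matches the paper's proof, and several steps are correct as written: squaring the explicit phase to get $(-1)^{\sum(r_i-1)(s_i-1)/2}$, the tie count for the ratios $Y(0,\cdot)/Y(1,\cdot)$, the observation that the remaining $Y$ products are real and nonzero, and the left--right parity argument for $C^{\mathrm{(ref)}}$.

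The subsector step is the one left open. You state as a ``target'' that the total phase of $D^{2}_{(r_1,1)(r_2,1)(r_3,1)}$ is trivial. You then plan to assign quarter- or half-turn phases to crossings along the path. You also appeal to positivity of unitary $A$-series constants. Those are full diagonal structure constants, however, and they do not fix the sign of this chiral factor at $t=p/q$. As written, the positivity of the two subsector squares, on which the entire sign formula rests, is unproved. You also frame the branch tracking as the main obstacle, but that bookkeeping is unnecessary.

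\textbf{The missing observation: there are no crossings to track.} In the $r$-sector formula, every zero or pole comes from the Gamma factors, the $q_+$-factorials, or $[r_1][r_2][r_3]$. Each such point lies at $t=v/k$ with $1\le k<q$, because the truncation $r_1+r_2+r_3\le 2q-1$ bounds all factorial indices and all coefficients of $t$ in the Gamma arguments by $q-1$. If $v\neq k$, then $|v/k-1|\ge 1/k>1/q=|p/q-1|$. Your own remark that $|k(t-1)|<1$ for $k<q$ says exactly this.

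It follows that the only singular point on the closed segment from $1$ to $p/q$ is $t=1$ itself. There the powers of $t-1$ cancel, and the limit is the explicitly positive expression, with combined sign $(-1)^{m(m+1)}=1$. The radicand $[r_1][r_2][r_3]$ is also positive on the whole segment: its limit at $t=1$ is $r_1r_2r_3$ because $r_1+r_2+r_3$ is odd, and none of its factors vanishes in between. So its lifted square root stays real.

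Hence $D^{2}_{(r_1,1)(r_2,1)(r_3,1)}$ is real, continuous and nonvanishing on the segment. The $+i0$ boundary value coincides with this real continuation, so positivity passes from $t=1$ to $t=p/q$ with no phase bookkeeping. The $s$ sector follows identically using $t^{-1}=q/p=1\mp 1/p$ and $k<p$. With this observation your proof closes, and it becomes the paper's argument.
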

	\begin{proof}
		Write $t=p/q$. First consider the $s=1$ factor in
		Eq.~\eqref{def:D_subsector}. Put
		$a=r_1-1$, $b=r_2-1$, $c=r_3-1$ and
		\[
		m=\frac{a+b-c}{2},\qquad n=\frac{a+c-b}{2},\qquad
		d=\frac{b+c-a}{2},\qquad H=m+n+d.
		\]
		These are nonnegative integers. Cancellation of the Gamma poles at
		$t=1$, with the lifted quantum-number square root, gives the explicitly
		positive limit
		\begin{equation}\label{eq:subsector-positive-limit}
			\lim_{t\to1}D_{(r_1,1)(r_2,1)(r_3,1)}^{\,2}
			=\frac{n!\,d!\,(H+1)!}
			{c!\,m!\,a!\,b!\sqrt{(a+1)(b+1)(c+1)}}
			\left[\prod_{k=1}^{m}
			\frac{k!\,(H-k)!}{(a-k)!\,(b-k)!}\right]^2>0.
		\end{equation}
		For example, the cancellation follows from
		$\Gamma(kt)^2\Gamma(1-kt)/\Gamma((k+1)t-1)
		\sim(-1)^k/[k(t-1)]$ and
		$\sqrt{[r_1][r_2][r_3]}\to(-1)^H\sqrt{r_1r_2r_3}$.
		Every remaining possible real zero or pole is at $t=v/k$ with
		$1\leq k<q$. If $v/k\neq1$, then
		$|v/k-1|\geq1/k>1/q$. Thus the real segment from $1$ to
		$1\pm1/q$, including its physical endpoint after removing the
		singularity at $1$, contains no zero or pole. Moreover,
		$[r_1][r_2][r_3]$ is positive on this segment: its limit at $1$ is
		positive by fusion parity, and no factor vanishes. Its lifted square
		root is therefore real throughout. The squared subsector coefficient
		stays positive. Applying the same argument to $t^{-1}=q/p$,
		whose unitary endpoint is $1\mp1/p$, proves positivity for the
		$r=1$ subsector.
		
		For the mixed factors, direct cancellation of the diagonal zero
		factors in Eq.~\eqref{def:Y} yields
		\begin{equation}
			\lim_{t\to1}\frac{Y(0,2\Bell_i)}{Y(1,2\Bell_i)}
			=\frac{1}{\min(r_i,s_i)}>0.
		\end{equation}
		Every zero of a nonempty $Y$ product is at $t=k^-/k^+$.
		The truncated fusion bounds give $1\leq k^+<q$ and
		$1\leq k^-<p$, so no such zero other than $1$ lies on the
		unitary segment. The displayed ratios therefore remain positive.
		The remaining quotient of $Y$ products in
		Eq.~\eqref{ope_chiral_general:final} is real and nonzero at the
		physical endpoint; its square is positive. Hence only the explicit
		phase can affect the sign of $D_{123}^{\,2}$. Since the $s_i$ are odd,
		$\ell_i^-$ are integers, and the truncated triangle conditions make
		$(\ell_1^++\ell_3^+-\ell_2^+)
		(\ell_2^-+\ell_3^--\ell_1^-)$ an integer. Squaring the phase in that
		equation gives precisely Eq.~\eqref{eq:unitary-chiral-square-sign}.
		
		In every exceptional spectral pairing $r_i\equiv\bar r_i\pmod2$,
		while the left and right second labels are the same. The two signs in
		Eq.~\eqref{eq:unitary-chiral-square-sign} therefore agree. Their
		product is strictly positive, proving
		Eq.~\eqref{eq:unitary-reference-real}. No enumeration of the individual
		branch signs is needed for this conclusion.
	\end{proof}

	\section{\texorpdfstring{Conventions for $\alpha_{IJK}$ with one $\mathbb{Z}_2$ simple current}{Conventions for alpha IJK with one Z2 simple current}}\label{app:simple_current_conventions}
	This appendix expands the construction in Section \ref{sec:triplet_orbit} for the reference residues:
	\begin{itemize}
		\item $(p,q)=(1,12),(5,12)$ for $E_6$;
		\item $(p,q)=(1,18),(5,18),(7,18)$ for $E_7$;
		\item $(p,q)=(1,30),(7,30),(11,30),(13,30)$ for $E_8$.
	\end{itemize}
	The conventions for the other residues follow either by complex conjugation, which gives a consistent convention,
	$$\alpha_{IJK}\Big{|}_{(p,q)}=\left(\alpha_{IJK}\Big{|}_{(-p,q)}\right)^*\,,$$
	or by the phase relation \eqref{eq:periodic_phase}.
	
	We fix the sign conventions for reduced coefficients $\alpha_{IJK}$ involving one $\mathbb{Z}_2$ simple current. Specifically, we consider
	\begin{itemize}
		\item $\alpha_{(5,5)(\ell,\bar\ell)(5-\ell,5-\bar\ell)}$ for $E_6$-series;
		\item $\alpha_{(8,8)(\ell,\bar\ell)(8-\ell,8-\bar\ell)}$, $\alpha_{(8,0)(\ell,\bar\ell)(8-\ell,\bar\ell)}$ and $\alpha_{(0,8)(\ell,\bar\ell)(\ell,8-\bar\ell)}$ for $E_7$-series;
		\item $\alpha_{(14,14)(\ell,\bar\ell)(14-\ell,14-\bar\ell)}$, $\alpha_{(14,0)(\ell,\bar\ell)(14-\ell,\bar\ell)}$ and $\alpha_{(0,14)(\ell,\bar\ell)(\ell,14-\bar\ell)}$ for $E_8$-series.
	\end{itemize}
	
	Throughout this appendix we suppress the second Kac labels and use the odd-$s$ representatives. In a coefficient with one $\mathbb{Z}_2$ simple current, that insertion has $s=1$ and the other two fields have the same allowed $s$. A generic triplet can have any allowed odd labels $s_1,s_2,s_3$; the displayed relations use the independence of the reduced coefficients from these labels. The reference value $p=w=1$ is formal and does not define a physical minimal model.
	
	Ratios relating generic triplets below are shorthand for the corresponding cross-multiplied identities. This interpretation also applies when the generic coefficients vanish. The simple-current coefficients that occur as known denominators are nonzero.

	\subsection{\texorpdfstring{Convention for $E_6$-series}{Convention for E6-series}}
	The $E_6$ series has only a diagonal $\mathbb{Z}_2$ operator labeled by $(\ell,\bar{\ell},s)=(5,5,1)$.
	
	We first determine $\alpha_{IJK}$ with one external operator equal to the $\mathbb{Z}_2$ simple current:
	$$\alpha_{(5,5)(\ell,\bar\ell)(5-\ell,5-\bar\ell)}\,.$$
	Equation \eqref{alpha:constr_Z2diagonal2} fixes the squares of these coefficients and hence determines each coefficient up to a sign. To fix the sign, we use the operator-sign redundancy $V_I\rightarrow-V_I$. Let $(\ell,\bar{\ell},s)$ and $(5-\ell,5-\bar{\ell},s)$ be a pair of operators related by reflection. In the $E_6$ case, there are no operators with $(\ell,\bar{\ell})=(\tfrac{5}{2},\tfrac{5}{2})$, so each pair contains two different operators. Except for the pair containing the identity, whose coupling is already normalized, we can choose the sign of one field to fix the sign of $\alpha$. It is consistent to use the convention
	\begin{equation}\label{alpha:convention_E6}
		\begin{split}
			\alpha_{(5,5)(\ell,\bar\ell)(5-\ell,5-\bar\ell)}&=\sqrt{(-1)^{f_{12,k,w}(5,5-\ell,\ell)+f_{12,k,w}(5,5-\bar{\ell},\bar{\ell})}} \\
			&\times
			\begin{cases}
				i^{S_{\ell,\bar{\ell},1}+S_{5-\ell,5-\bar{\ell},1}} & \text{for } \ell<\tfrac{5}{2}\,,\\
				(-i)^{S_{\ell,\bar{\ell},1}+S_{5-\ell,5-\bar{\ell},1}} & \text{for } \ell>\tfrac{5}{2}\,.\\
			\end{cases}
		\end{split}
	\end{equation}
	Here we always take the principal branch for the square-root function:
	\begin{equation}
		\begin{split}
			\sqrt{1}=1\,,\quad\sqrt{-1}=i\,.
		\end{split}
	\end{equation}
	This convention is consistent with the permutation properties \eqref{eq:permutation_alpha}.
	
	The identity operator has a fixed sign. The remaining gauge transformations that preserve this convention are generated by
	\begin{itemize}
		\item Flip both signs in any reflection pair other than $\{V_{0,0,1},V_{5,5,1}\}$:
		$V_{\ell,\bar{\ell},s},V_{5-\ell,5-\bar{\ell},s}\rightarrow-V_{\ell,\bar{\ell},s},-V_{5-\ell,5-\bar{\ell},s}$.
		\item Choose one operator from every reflection pair other than $\{V_{0,0,1},V_{5,5,1}\}$ and flip all their signs together with the sign of $V_{5,5,1}$.
	\end{itemize}
	Both transformations leave $V_{0,0,1}$ unchanged and preserve the normalized identity couplings.
	
	The spin content of the $E_6$ series is given by
	\begin{equation}
		\begin{split}
			S_{\ell,\bar{\ell},s}
			&=
			\frac{(\ell-\bar{\ell})\bigl(p(\ell+\bar{\ell}+1)-12s\bigr)}{12}=
			\begin{cases}
				0 & \text{for }(\ell,\ell,s)\,, \\
				3s-p & \text{for }(0,3,s)\,, \\
				3s-2p & \text{for }(2,5,s)\,, \\
				2s-p & \text{for }(\frac{3}{2},\frac{7}{2},s)\,.
			\end{cases}
		\end{split}
	\end{equation}
	Thus we have
	\begin{equation}
		\begin{split}
			S_{\ell,\bar{\ell},1}+S_{5-\ell,5-\bar{\ell},1}=\begin{cases}
				p & (\ell,\bar{\ell})=(0,3)\,,(5,2)\,, \\
				-p & (\ell,\bar{\ell})=(3,0)\,,(2,5)\,, \\
				0 & (\ell,\bar{\ell})=(\frac{3}{2},\frac{7}{2})\,,(\frac{7}{2},\frac{3}{2})\,, \\
				0 & \ell=\bar{\ell}\,. \\			
			\end{cases}
		\end{split}
	\end{equation}
	Although \eqref{alpha:convention_E6} is a consistent convention, it differs from the convention induced by \eqref{eq:periodic_phase} by operator-sign redefinitions. We therefore use it only for the reference residues: 
	$$p=w\ (+\,0\cdot 12)\quad\text{with }w=1,5.$$
	In these cases, Eq.~\eqref{alpha:convention_E6} becomes
	\begin{equation}\label{alpha:convention_E6_final}
		\boxed{\begin{aligned}
				\alpha_{(5,5)(\ell,\bar\ell)(5-\ell,5-\bar\ell)}&=\sqrt{(-1)^{f_{12,0,w}(5,5-\ell,\ell)+f_{12,0,w}(5,5-\bar{\ell},\bar{\ell})}} \\
				&\times
				\begin{cases}
					i^{w} & \text{for } (\ell,\bar{\ell})=(0,3),(3,0)\,,\\
					i^{-w} & \text{for } (\ell,\bar{\ell})=(2,5),(5,2)\,,\\
					1 & \text{otherwise}\,. \\
				\end{cases}
		\end{aligned}}
	\end{equation}
	
	Next, let us look at the relation \eqref{alpha:constr_Z2diagonal1}. In the $E_6$ case, it reads
	\begin{equation}\label{alpha:constr_Z2diagonal1_E6}
		\begin{split}
			\frac{
				\alpha_{(\ell_1,\bar\ell_1)(\ell_2,\bar\ell_2)(\ell_3,\bar\ell_3)}
			}{
				\alpha_{(5-\ell_1,5-\bar\ell_1)(5-\ell_2,5-\bar\ell_2)(\ell_3,\bar\ell_3)}
			}
			&=
			\frac{
				\alpha_{(5,5)(\ell_2,\bar\ell_2)(5-\ell_2,5-\bar\ell_2)}
			}{
				\alpha_{(5,5)(\ell_1,\bar\ell_1)(5-\ell_1,5-\bar\ell_1)}
			}
			(-1)^{\ell_{231}+\bar\ell_{231}} \\
			&\quad\times(-1)^{f_{12,k,w}(\ell_2,\ell_3,\ell_1)+f_{12,k,w}(\bar{\ell}_2,\bar{\ell}_3,\bar{\ell}_1)}
		\end{split}
	\end{equation}
	The known phase on the right-hand side, together with the permutation relations, organizes the triplets $[IJK]$ into orbits. Within each orbit, the reduced coefficients are related by these phases.
	
	Below we demonstrate in detail how the orbit decomposition works for $E_6$. We will not repeat the same analysis for $E_7$ and $E_8$, since the final results for these two cases are similar but quite lengthy. 
	
	According to \eqref{def:model}, the $E_6$ spectrum is organized in three sectors
	\begin{equation}
		\begin{split}
			\id&=\left\{(0,0),(0,3),(3,0),(3,3)\right\}\,, \\
			\sigma&=\left\{(\tfrac{3}{2},\tfrac{3}{2}),(\tfrac{3}{2},\tfrac{7}{2}),(\tfrac{7}{2},\tfrac{3}{2}),(\tfrac{7}{2},\tfrac{7}{2})\right\}\,, \\
			\epsilon&=\left\{(2,2),(2,5),(5,2),(5,5)\right\}\,. \\
		\end{split}
	\end{equation}
	The three-sector decomposition suggests Ising-type fusion rules. The bootstrap solution obeys the corresponding sector selection rules
	\begin{equation}
		\begin{split}
			&\id\times\id=\id\,,\quad\id\times\sigma=\sigma\,,\quad\id\times\epsilon=\epsilon\,, \\
			&\sigma\times\sigma=\id+\epsilon\,,\quad\sigma\times\epsilon=\sigma\,,\quad\epsilon\times\epsilon=\id\,. \\
		\end{split}
	\end{equation}
	These are rules for the three sectors, supplemented by the second-Kac-label fusion conditions in the full theory. We do not impose them as extra assumptions in the bootstrap calculation; they are properties of the resulting solution. 
	
	Fusion with the $\mathbb{Z}_2$ simple current $(5,5)$, which belongs to the $\epsilon$ sector, acts on the sectors as
	\begin{equation}
		\begin{split}
			\id\rightarrow\epsilon\,,\quad\sigma\rightarrow\sigma\,,\quad \epsilon\rightarrow\id\,. \\
		\end{split}
	\end{equation}
	
	\noindent\textbf{Triplets of type $\id\id\id$. }
	The OPE coefficients for triplets of type $\epsilon\epsilon\id$ can be obtained from $\id\id\id$ by the $\mathbb{Z}_2$ transformation. Within $\id\id\id$ the triplet representatives are
	\begin{equation}
		\begin{split}
			[(0,3),(0,3),(0,3)]\,,\\
			[(3,0),(3,0),(3,0)]\,,\\
			[(0,3),(3,0),(3,3)]\,,\\
			[(0,3),(3,3),(3,3)]\,,\\
			[(3,0),(3,3),(3,3)]\,, \\  
			[(3,3),(3,3),(3,3)]\,. \\
		\end{split}
	\end{equation}
	Using $\mathbb{Z}_2$, we have the orbits
	\begin{align}
		\mathcal{C}_1^+&=\bigl\{[(0,3),(0,3),(0,3)]\,,\ [(5,2),(5,2),(0,3)]\bigr\}\,,\notag\\[2pt]
		\mathcal{C}_1^-&=\bigl\{[(3,0),(3,0),(3,0)]\,,\ [(2,5),(2,5),(3,0)]\bigr\}\,,\notag\\[2pt]
		\mathcal{C}_2&=\bigl\{[(0,3),(3,0),(3,3)]\,,\ [(2,5),(5,2),(3,3)]\,,\notag\\*
		&\qquad [(0,3),(2,5),(2,2)]\,,\ [(3,0),(5,2),(2,2)]\bigr\}\,,\notag\\[2pt]
		\mathcal{C}_3^+&=\bigl\{[(0,3),(3,3),(3,3)]\,,\ [(0,3),(2,2),(2,2)]\,,\notag\\*
		&\qquad [(5,2),(2,2),(3,3)]\bigr\}\,,\notag\\[2pt]
		\mathcal{C}_3^-&=\bigl\{[(3,0),(3,3),(3,3)]\,,\ [(3,0),(2,2),(2,2)]\,,\notag\\*
		&\qquad [(2,5),(2,2),(3,3)]\bigr\}\,,\notag\\[2pt]
		\mathcal{C}_4&=\bigl\{[(3,3),(3,3),(3,3)]\,,\ [(3,3),(2,2),(2,2)]\bigr\}\,.
	\end{align}
	
	Here the superscripts $\pm$ indicate the orbits which are related by a global parity transformation
	\begin{equation}
		\begin{split}
			(\ell,\bar{\ell})\rightarrow(\bar{\ell},\ell)\,.
		\end{split}
	\end{equation}
	
	\noindent\textbf{Triplets of type $\sigma\sigma\id$. } The triplets of type $\sigma\sigma\epsilon$ can be obtained by $\mathbb{Z}_2$ transformation. Within $\sigma\sigma\id$ the representatives modulo $\mathbb{Z}_2$ transformations are
	\begin{equation}
		\begin{split}
			[(3/2,3/2),(3/2,3/2),(0,3)]\,,\\
			[(3/2,3/2),(3/2,3/2),(3,0)]\,,\\
			[(3/2,3/2),(3/2,3/2),(3,3)]\,,\\
			[(3/2,3/2),(3/2,7/2),(0,3)]\,,\\
			[(3/2,3/2),(7/2,3/2),(3,0)]\,,\\
			[(3/2,3/2),(3/2,7/2),(3,3)]\,,\\
			[(3/2,3/2),(7/2,3/2),(3,3)]\,,\\
			[(3/2,3/2),(7/2,7/2),(3,3)]\,, \\  
			[(3/2,7/2),(3/2,7/2),(0,3)]\,, \\
			[(7/2,3/2),(7/2,3/2),(3,0)]\,, \\
			[(3/2,7/2),(3/2,7/2),(3,3)]\,, \\
			[(3/2,7/2),(7/2,3/2),(3,3)]\,, \\
		\end{split}
	\end{equation}
	Using $\mathbb{Z}_2$, we have the orbits
	\begin{align}
		\mathcal{C}_5^+&=\bigl\{[(\tfrac{3}{2},\tfrac{3}{2}),(\tfrac{3}{2},\tfrac{3}{2}),(0,3)]\,,\ [(\tfrac{7}{2},\tfrac{7}{2}),(\tfrac{7}{2},\tfrac{7}{2}),(0,3)]\,,\notag\\*
		&\qquad [(\tfrac{3}{2},\tfrac{3}{2}),(\tfrac{7}{2},\tfrac{7}{2}),(5,2)]\bigr\}\,,\notag\\[2pt]
		\mathcal{C}_5^-&=\bigl\{[(\tfrac{3}{2},\tfrac{3}{2}),(\tfrac{3}{2},\tfrac{3}{2}),(3,0)]\,,\ [(\tfrac{7}{2},\tfrac{7}{2}),(\tfrac{7}{2},\tfrac{7}{2}),(3,0)]\,,\notag\\*
		&\qquad [(\tfrac{3}{2},\tfrac{3}{2}),(\tfrac{7}{2},\tfrac{7}{2}),(2,5)]\bigr\}\,,\notag\\[2pt]
		\mathcal{C}_6&=\bigl\{[(\tfrac{3}{2},\tfrac{3}{2}),(\tfrac{3}{2},\tfrac{3}{2}),(3,3)]\,,\ [(\tfrac{7}{2},\tfrac{7}{2}),(\tfrac{7}{2},\tfrac{7}{2}),(3,3)]\,,\notag\\*
		&\qquad [(\tfrac{3}{2},\tfrac{3}{2}),(\tfrac{7}{2},\tfrac{7}{2}),(2,2)]\bigr\}\,,\notag\\[2pt]
		\mathcal{C}_7^+&=\bigl\{[(\tfrac{3}{2},\tfrac{3}{2}),(\tfrac{3}{2},\tfrac{7}{2}),(0,3)]\,,\ [(\tfrac{7}{2},\tfrac{7}{2}),(\tfrac{7}{2},\tfrac{3}{2}),(0,3)]\,,\notag\\*
		&\qquad [(\tfrac{3}{2},\tfrac{3}{2}),(\tfrac{7}{2},\tfrac{3}{2}),(5,2)]\,,\ [(\tfrac{7}{2},\tfrac{7}{2}),(\tfrac{3}{2},\tfrac{7}{2}),(5,2)]\bigr\}\,,\notag\\[2pt]
		\mathcal{C}_7^-&=\bigl\{[(\tfrac{3}{2},\tfrac{3}{2}),(\tfrac{7}{2},\tfrac{3}{2}),(3,0)]\,,\ [(\tfrac{7}{2},\tfrac{7}{2}),(\tfrac{3}{2},\tfrac{7}{2}),(3,0)]\,,\notag\\*
		&\qquad [(\tfrac{3}{2},\tfrac{3}{2}),(\tfrac{3}{2},\tfrac{7}{2}),(2,5)]\,,\ [(\tfrac{7}{2},\tfrac{7}{2}),(\tfrac{7}{2},\tfrac{3}{2}),(2,5)]\bigr\}\,,\notag\\[2pt]
		\mathcal{C}_8^+&=\bigl\{[(\tfrac{3}{2},\tfrac{3}{2}),(\tfrac{3}{2},\tfrac{7}{2}),(3,3)]\,,\ [(\tfrac{7}{2},\tfrac{7}{2}),(\tfrac{7}{2},\tfrac{3}{2}),(3,3)]\,,\notag\\*
		&\qquad [(\tfrac{3}{2},\tfrac{3}{2}),(\tfrac{7}{2},\tfrac{3}{2}),(2,2)]\,,\ [(\tfrac{7}{2},\tfrac{7}{2}),(\tfrac{3}{2},\tfrac{7}{2}),(2,2)]\bigr\}\,,\notag\\[2pt]
		\mathcal{C}_8^-&=\bigl\{[(\tfrac{3}{2},\tfrac{3}{2}),(\tfrac{7}{2},\tfrac{3}{2}),(3,3)]\,,\ [(\tfrac{7}{2},\tfrac{7}{2}),(\tfrac{3}{2},\tfrac{7}{2}),(3,3)]\,,\notag\\*
		&\qquad [(\tfrac{3}{2},\tfrac{3}{2}),(\tfrac{3}{2},\tfrac{7}{2}),(2,2)]\,,\ [(\tfrac{7}{2},\tfrac{7}{2}),(\tfrac{7}{2},\tfrac{3}{2}),(2,2)]\bigr\}\,,\notag\\[2pt]
		\mathcal{C}_9&=\bigl\{[(\tfrac{3}{2},\tfrac{3}{2}),(\tfrac{7}{2},\tfrac{7}{2}),(3,3)]\,,\ [(\tfrac{3}{2},\tfrac{3}{2}),(\tfrac{3}{2},\tfrac{3}{2}),(2,2)]\,,\notag\\*
		&\qquad [(\tfrac{7}{2},\tfrac{7}{2}),(\tfrac{7}{2},\tfrac{7}{2}),(2,2)]\bigr\}\,,\notag\\[2pt]
		\mathcal{C}_{10}^+&=\bigl\{[(\tfrac{3}{2},\tfrac{7}{2}),(\tfrac{3}{2},\tfrac{7}{2}),(0,3)]\,,\ [(\tfrac{7}{2},\tfrac{3}{2}),(\tfrac{7}{2},\tfrac{3}{2}),(0,3)]\,,\notag\\*
		&\qquad [(\tfrac{3}{2},\tfrac{7}{2}),(\tfrac{7}{2},\tfrac{3}{2}),(5,2)]\bigr\}\,,\notag\\[2pt]
		\mathcal{C}_{10}^-&=\bigl\{[(\tfrac{7}{2},\tfrac{3}{2}),(\tfrac{7}{2},\tfrac{3}{2}),(3,0)]\,,\ [(\tfrac{3}{2},\tfrac{7}{2}),(\tfrac{3}{2},\tfrac{7}{2}),(3,0)]\,,\notag\\*
		&\qquad [(\tfrac{7}{2},\tfrac{3}{2}),(\tfrac{3}{2},\tfrac{7}{2}),(2,5)]\bigr\}\,,\notag\\[2pt]
		\mathcal{C}_{11}&=\bigl\{[(\tfrac{3}{2},\tfrac{7}{2}),(\tfrac{3}{2},\tfrac{7}{2}),(3,3)]\,,\ [(\tfrac{7}{2},\tfrac{3}{2}),(\tfrac{7}{2},\tfrac{3}{2}),(3,3)]\,,\notag\\*
		&\qquad [(\tfrac{3}{2},\tfrac{7}{2}),(\tfrac{7}{2},\tfrac{3}{2}),(2,2)]\bigr\}\,,\notag\\[2pt]
		\mathcal{C}_{12}&=\bigl\{[(\tfrac{3}{2},\tfrac{7}{2}),(\tfrac{7}{2},\tfrac{3}{2}),(3,3)]\,,\ [(\tfrac{3}{2},\tfrac{7}{2}),(\tfrac{3}{2},\tfrac{7}{2}),(2,2)]\,,\notag\\*
		&\qquad [(\tfrac{7}{2},\tfrac{3}{2}),(\tfrac{7}{2},\tfrac{3}{2}),(2,2)]\bigr\}\,.
	\end{align}
	
	We denote collectively by $\mathcal{C}_0$ the triplets containing the identity or the $\mathbb{Z}_2$ simple current: 
	\begin{equation}
		\begin{split}
			\mathcal{C}_0=\left\{[(\ell,\bar\ell),(\ell,\bar\ell),(0,0)],[(\ell,\bar\ell),(5-\ell,5-\bar\ell),(5,5)]\right\}
		\end{split}
	\end{equation}
	Here the identity and simple-current insertions have $s=1$, and the remaining two fields have the same allowed $s$. All reduced coefficients in $\mathcal{C}_0$ are fixed to $\pm1$ or $\pm i$.
	
	Thus the nontrivial triplets in the $E_6$ series form 18 orbits before imposing parity relations and the remaining crossing equations. The 18 representative reduced coefficients are variables to be determined, not freely adjustable parameters of the final solution. Parity is a property of the bootstrap solution; imposing the resulting parity relations leaves 12 nontrivial representatives.

	\subsection{\texorpdfstring{Convention for $E_7$-series}{Convention for E7-series}}
	The $E_7$ series has three $\mathbb{Z}_2$ operators labeled by $(\ell,\bar{\ell},s)=(0,8,1)$, $(8,0,1)$ and $(8,8,1)$. Together with the identity operator, they form a $\mathbb{Z}_2\times\mathbb{Z}_2$ fusion subalgebra. The operators in $E_7$ are organized into three types of orbits, presented in figures \ref{fig:first_type}, \ref{fig:second_type} and \ref{fig:third_type}.
	\begin{figure}[htbp]
		\centering
		\begin{tikzpicture}
			\node (00) at (0,0) {$(\ell,\ell)$};
			\node (08) at (2,2) {$(\ell,8-\ell)$};
			\node (80) at (2,-2) {$(8-\ell,\ell)$};
			\node (88) at (4,-0) {$(8-\ell,8-\ell)$};
			
			\draw (00) -- node[midway, above left] {$(0,8)$} (08);
			\draw (00) -- node[midway, below left] {$(8,0)$} (80);
			\draw (08) -- node[midway, above right] {$(8,0)$} (88);
			\draw (80) -- node[midway, below right] {$(0,8)$} (88);
			\draw (00) -- node[midway, above] {$(8,8)\quad$}(88);
			\draw (80) -- node[midway, below right] {$(8,8)$}(08);
		\end{tikzpicture}
		\caption{Orbits of the first type. $\ell=0,2,3$}
		\label{fig:first_type}
	\end{figure}
	
	\begin{figure}[htbp]
		\centering
		\begin{tikzpicture}
			\node (00) at (0,0) {$(4,1)$};
			\node (88) at (3,0) {$(4,7)$};
			
			\draw (00) -- node[midway, above] {$(8,8)$}node[midway, below] {$(0,8)$} (88);
			\draw (00) to[out=135,in=225,looseness=8] node[midway, left] {$(8,0)$} (00);
			\draw (88) to[out=45,in=-45,looseness=8] node[midway, right] {$(8,0)$} (88);
			
			\node (00) at (0,-2) {$(1,4)$};
			\node (88) at (3,-2) {$(7,4)$};
			
			\draw (00) -- node[midway, above] {$(8,8)$}node[midway, below] {$(8,0)$} (88);
			\draw (00) to[out=135,in=225,looseness=8] node[midway, left] {$(0,8)$} (00);
			\draw (88) to[out=45,in=-45,looseness=8] node[midway, right] {$(0,8)$} (88);
		\end{tikzpicture}
		\caption{Orbits of the second type.}
		\label{fig:second_type}
	\end{figure}
	
	\begin{figure}[htbp]
		\centering
		\begin{tikzpicture}
			\node (00) at (0,0) {$(4,4)$};
			\draw (00) to[out=45,in=-45,looseness=8] node[midway, right] {$(0,8),(8,0),(8,8)$} (00);
		\end{tikzpicture}
		\caption{Orbits of the third type.}
		\label{fig:third_type}
	\end{figure}
	
	We will use the spin content of $E_7$-series, given by
	\begin{equation}
		\begin{split}
			S_{\ell,\bar{\ell},s}=\frac{(\ell-\bar{\ell})(p(\ell+\bar{\ell}+1)-18s)}{18}=\begin{cases}
				0 & \text{for }(\ell,\ell,s)\,, \\
				8s-4p & \text{for }(0,8,s)\,, \\
				4s-2p & \text{for }(2,6,s)\,, \\
				2s-p & \text{for }(3,5,s)\,, \\
				3s-p & \text{for }(1,4,s)\,, \\
				3s-2p & \text{for }(4,7,s)\,. \\
			\end{cases}
		\end{split}
	\end{equation}
	
	We first determine the reduced coefficients with one external $\mathbb{Z}_2$ simple current:
	$$\alpha_{(8,0)(\ell,\bar\ell)(8-\ell,\bar\ell)},\quad \alpha_{(0,8)(\ell,\bar\ell)(\ell,8-\bar\ell)},\quad\alpha_{(8,8)(\ell,\bar\ell)(8-\ell,8-\bar\ell)}\,.$$
	
	Consider the first type. There are three cases, generated by $(\ell,\bar{\ell})=(0,0),(2,2),(3,3)$. By \eqref{alpha:constr_Z2chiral2} we have
	\begin{equation}\label{alpha:constr_Z2chiral2_E7}
		\begin{split}
			\left[\alpha_{(8,0)(\ell,\ell)(8-\ell,\ell)}\right]^2&=\left[\alpha_{(0,8)(\ell,\ell)(\ell,8-\ell)}\right]^2 \\
			&=(-1)^{S_{8,0}+S_{\ell,\ell}+S_{8-\ell,\ell}+f_{18,k,w}(8,8-\ell,\ell)} \\
			&=\begin{cases}
				1 & \ell=0\,, \\
				(-1)^{f_{18,0,w}(8,6,2)} & \ell=2\,, \\
				(-1)^{1+f_{18,0,w}(8,5,3)} & \ell=3\,. \\
			\end{cases} \\
			\left[\alpha_{(8,8)(\ell,\ell)(8-\ell,8-\ell)}\right]^2&=1
		\end{split}
	\end{equation}
	Using the sign freedom of the non-$\mathbb{Z}_2$ operators, we set
	\begin{equation}\label{alpha:E7_type1_convention}
		\begin{split}
			\alpha_{(8,0)(\ell,\ell)(8-\ell,\ell)}&=\alpha_{(0,8)(\ell,\ell)(\ell,8-\ell)} =\begin{cases}
				1 & \ell=0\,, \\
				i^{f_{18,0,w}(8,6,2)} & \ell=2\,, \\
				i^{1+f_{18,0,w}(8,5,3)} & \ell=3\,. \\
			\end{cases} \\
			\alpha_{(8,8)(\ell,\ell)(8-\ell,8-\ell)}&=1\,.
		\end{split}
	\end{equation}
	This convention makes the possible parity symmetry manifest. For fixed signs of the $\mathbb{Z}_2$ operators, the remaining operator-sign freedom in each first-type orbit is a simultaneous flip of $(\ell,\ell)$, $(8-\ell,\ell)$, $(\ell,8-\ell)$ and $(8-\ell,8-\ell)$, with $s\neq1$ when $\ell=0$. The signs of the $\mathbb{Z}_2$ operators will be fixed below; any change in those signs must be accompanied by operator-sign changes that preserve \eqref{alpha:E7_type1_convention}. We now determine the remaining coefficients involving one simple current.
	
	For $\alpha_{(8,0)(8-\ell,8-\ell)(\ell,8-\ell)}$, we use \eqref{alpha:constr_Z2diagonal1}
	\begin{equation}
		\begin{split}
			\frac{
				\alpha_{(8,0)(8-\ell,8-\ell)(\ell,8-\ell)}
			}{
				\alpha_{(0,8)(\ell,\ell)(\ell,8-\ell)}
			}
			&=
			\frac{
				\alpha_{(8,8)(8-\ell,8-\ell)(\ell,\ell)}
			}{
				\alpha_{(8,8)(8,0)(0,8)}
			}
			(-1)^{f_{18,k,w}(8-\ell,\ell,8)+f_{18,k,w}(8-\ell,8-\ell,0)}
		\end{split}
	\end{equation}
	Later we will show that we can fix $\alpha_{(8,8)(8,0)(0,8)}=1$. By explicit calculation we have
	\begin{equation}\label{phi_q18_id1}
		\begin{split}
			(-1)^{f_{18,k,w}(8-\ell,\ell,8)}\equiv1\,,\quad \text{for }\ell=0,2,3\,.
		\end{split}
	\end{equation}
	The same analysis works for $\alpha_{(0,8)(8-\ell,8-\ell)(8-\ell,\ell)}$ and the result is the same. Therefore, together with the convention we get
	\begin{equation}
		\begin{split}
			\alpha_{(8,0)(8-\ell,8-\ell)(\ell,8-\ell)}&=\alpha_{(0,8)(8-\ell,8-\ell)(8-\ell,\ell)} \\
			&=\alpha_{(8,0)(\ell,\ell)(8-\ell,\ell)}(-1)^{f_{18,k,w}(8-\ell,8-\ell,0)} \\
			\text{for }&\ell=0,2,3\,. \\
		\end{split}
	\end{equation}
	
	The remaining coefficient is $\alpha_{(8,8)(\ell,8-\ell)(8-\ell,\ell)}$. We determine it using \eqref{alpha:constr_Z2antichiral1}:
	\begin{equation}
		\begin{split}
			\frac{
				\alpha_{(8,8)(\ell,8-\ell)(8-\ell,\ell)}
			}{
				\alpha_{(8,0)(\ell,\ell)(8-\ell,\ell)}
			}
			&=
			\frac{
				\alpha_{(0,8)(\ell,8-\ell)(\ell,\ell)}
			}{
				\alpha_{(0,8)(8,8)(8,0)}
			}
			(-1)^{f_{18,k,w}(8-\ell,\ell,8)}\,. \\
		\end{split}
	\end{equation}
	For $\ell=0$, it gives
	\begin{equation}
		\begin{split}
			\left(\alpha_{(8,8)(0,8)(8,0)}\right)^2=1\,.
		\end{split}
	\end{equation}
	Using the sign ambiguity of the diagonal $\mathbb{Z}_2$ operator, $(8,8,1)$, we set
	\begin{equation}
		\begin{split}
			\alpha_{(8,8)(0,8)(8,0)}=1\,.
		\end{split}
	\end{equation}
	This justifies the convention used above. For the other values of $\ell$, the same equation gives
	\begin{equation}
		\begin{split}
			\alpha_{(8,8)(\ell,8-\ell)(8-\ell,\ell)}&=\alpha_{(8,0)(\ell,\ell)(8-\ell,\ell)}\alpha_{(0,8)(\ell,8-\ell)(\ell,\ell)}(-1)^{f_{18,k,w}(8-\ell,\ell,8)}  \\
			&=\alpha_{(8,0)(\ell,\ell)(8-\ell,\ell)}\alpha_{(8,0)(8-\ell,\ell)(\ell,\ell)}(-1)^{f_{18,k,w}(8-\ell,\ell,8)}  \\
			&=(-1)^{f_{18,k,w}(8,8-\ell,\ell)}(-1)^{f_{18,k,w}(8-\ell,\ell,8)}  \\
			&=(-1)^{f_{18,k,w}(8,8-\ell,\ell)} \,. \\
		\end{split}
	\end{equation}
	Here, in the second line we used our parity-symmetric convention of gauge fixing; in the third line we used \eqref{alpha:constr_Z2chiral2}; in the last line we used \eqref{phi_q18_id1}.
	
	Table \ref{tab:E7_type1_conventions} summarizes these conventions.
	\begin{table}[htbp]
		\centering
		\renewcommand{\arraystretch}{1.35}
		\begin{tabular}{c c c c}
			\toprule
			Case & Triplet (ordered) & $\alpha$ & Permutation \\
			\midrule
			
			\multirow{6}{*}{$\ell=0$}
			& $(0,8)(0,0)(0,8)$
			& $1$
			& $+1$
			\\
			
			& $(8,0)(0,0)(8,0)$
			& $1$
			& $+1$
			\\
			
			& $(8,8)(0,0)(8,8)$
			& $1$
			& $+1$
			\\
			
			& $(8,0)(8,8)(0,8)$
			& $1$
			& $+1$
			\\
			
			& $(0,8)(8,8)(8,0)$
			& $1$
			& $+1$
			\\
			
			& $(8,8)(0,8)(8,0)$
			& $1$
			& $+1$
			\\
			
			\midrule
			
			\multirow{6}{*}{$\ell=2$}
			& $(0,8)(2,2)(2,6)$
			& $i^{f_{18,0,w}(8,6,2)}$
			& $+1$
			\\
			
			& $(8,0)(2,2)(6,2)$
			& $i^{f_{18,0,w}(8,6,2)}$
			& $+1$
			\\
			
			& $(8,8)(2,2)(6,6)$
			& $1$
			& $+1$
			\\
			
			& $(8,0)(6,6)(2,6)$
			& $i^{f_{18,0,w}(8,6,2)}
			(-1)^{f_{18,0,w}(6,6,0)}$
			& $+1$
			\\
			
			& $(0,8)(6,6)(6,2)$
			& $i^{f_{18,0,w}(8,6,2)}
			(-1)^{f_{18,0,w}(6,6,0)}$
			& $+1$
			\\
			
			& $(8,8)(2,6)(6,2)$
			& $(-1)^{f_{18,0,w}(8,6,2)}$
			& $+1$
			\\
			
			\midrule
			
			\multirow{6}{*}{$\ell=3$}
			& $(0,8)(3,3)(3,5)$
			& $i^{1+f_{18,0,w}(8,5,3)}$
			& $-1$
			\\
			
			& $(8,0)(3,3)(5,3)$
			& $i^{1+f_{18,0,w}(8,5,3)}$
			& $-1$
			\\
			
			& $(8,8)(3,3)(5,5)$
			& $1$
			& $+1$
			\\
			
			& $(8,0)(5,5)(3,5)$
			& $i^{1+f_{18,0,w}(8,5,3)}
			(-1)^{f_{18,0,w}(5,5,0)}$
			& $-1$
			\\
			
			& $(0,8)(5,5)(5,3)$
			& $i^{1+f_{18,0,w}(8,5,3)}
			(-1)^{f_{18,0,w}(5,5,0)}$
			& $-1$
			\\
			
			& $(8,8)(3,5)(5,3)$
			& $(-1)^{f_{18,0,w}(8,5,3)}$
			& $+1$
			\\
			
			\bottomrule
		\end{tabular}
		\caption{Reduced-coefficient conventions and permutation factors for the
			first-type orbits of the $E_7$ series. The $\alpha$ coefficients for
			other orderings are obtained by multiplying by the corresponding
			permutation factor: $+1$ for even permutations and the factor shown
			in the table for odd permutations.}
		\label{tab:E7_type1_conventions}
	\end{table}
	
	Then let us consider the second type. We give details on $\ell=4$ and $\bar\ell=1,7$, and the results for the other case can be obtained by swapping $\ell$ and $\bar\ell$.
	
	Consider $\alpha_{(8,0)(4,1)(4,1)}$. Using \eqref{alpha:constr_Z2chiral2} we get
	\begin{equation}
		\begin{split}
			\left[\alpha_{(8,0)(4,1)(4,1)}\right]^2=(-1)^{f_{18,0,w}(8,4,4)}=\begin{cases}
				1 & w=1,5,13,17\,, \\
				-1 & w=7,11\,. \\
			\end{cases}\,.
		\end{split}
	\end{equation}
	Using the gauge redundancy of the chiral $\mathbb{Z}_2$ operator, $(8,0)$, we set
	\begin{equation}
		\begin{split}
			\alpha_{(8,0)(4,1)(4,1)}=i^{f_{18,0,w}(8,4,4)}
		\end{split}
	\end{equation}
	
	The same analysis works when we swap the chiral and antichiral indices. We set
	\begin{equation}
		\begin{split}
			\alpha_{(0,8)(1,4)(1,4)}=i^{f_{18,0,w}(8,4,4)}\,.
		\end{split}
	\end{equation}
	
	The signs of all three $\mathbb{Z}_2$ operators are now fixed by $\alpha_{(8,0)(4,1)(4,1)}$, $\alpha_{(0,8)(1,4)(1,4)}$ and $\alpha_{(8,8)(0,8)(8,0)}$. No simultaneous sign flip involving these operators preserves the chosen conventions.
	
	For $\alpha_{(8,0)(4,7)(4,7)}$, we apply \eqref{alpha:constr_Z2antichiral1} to $\alpha_{(4,7)(4,7)(8,0)}$, which gives
	\begin{equation}
		\begin{split}
			\frac{
				\alpha_{(4,7)(4,7)(8,0)}
			}{
				\alpha_{(4,1)(4,1)(8,0)}
			}
			&=
			\frac{
				\alpha_{(0,8)(4,7)(4,1)}
			}{
				\alpha_{(0,8)(4,7)(4,1)}
			}
			(-1)^{f_{18,0,w}(7,0,7)}=1\,, \\
		\end{split}
	\end{equation}
	thus
	\begin{equation}
		\begin{split}
			\alpha_{(4,7)(4,7)(8,0)}=	\alpha_{(4,1)(4,1)(8,0)}=i^{f_{18,0,w}(8,4,4)}\,,
		\end{split}
	\end{equation}
	and similarly
	\begin{equation}
		\begin{split}
			\alpha_{(0,8)(7,4)(7,4)}=i^{f_{18,0,w}(8,4,4)}
		\end{split}
	\end{equation}
	
	Then let us look at $\alpha_{(0,8)(4,1)(4,7)}$. Using \eqref{alpha:constr_Z2antichiral2} we have
	\begin{equation}
		\begin{split}
			\left[\alpha_{(0,8)(4,1)(4,7)}\right]^2=(-1)^{1+f_{18,0,w}(8,7,1)}\,.
		\end{split}
	\end{equation}
	Using the gauge redundancy of either $(4,1)$ or $(4,7)$, we set
	\begin{equation}
		\begin{split}
			\alpha_{(0,8)(4,1)(4,7)}=i^{1+f_{18,0,w}(8,7,1)}\,.
		\end{split}
	\end{equation}
	
	The last element is $\alpha_{(8,8)(4,1)(4,7)}$. For this we use \eqref{alpha:constr_Z2chiral1} on $\alpha_{(0,8)(4,1)(4,7)}$, which gives
	\begin{equation}
		\begin{split}
			\frac{
				\alpha_{(0,8)(4,1)(4,7)}
			}{
				\alpha_{(8,8)(4,1)(4,7)}
			}
			&=
			\frac{
				\alpha_{(8,0)(4,1)(4,1)}
			}{
				\alpha_{(8,0)(0,8)(8,8)}
			}\,(-1)^{f_{18,k,w}(4,4,0)}\,.
		\end{split}
	\end{equation}
	So we get
	\begin{equation}
		\begin{split}
			\alpha_{(8,8)(4,1)(4,7)}=\frac{
				\alpha_{(0,8)(4,1)(4,7)}
			}{
				\alpha_{(8,0)(4,1)(4,1)}
			}(-1)^{f_{18,k,w}(4,4,0)}
			=i^{1+f_{18,0,w}(8,7,1)-f_{18,0,w}(8,4,4)+2f_{18,k,w}(4,4,0)}
		\end{split}
	\end{equation}
	By the same analysis, we have for $(1,4)$ and $(7,4)$:
	\begin{equation}
		\begin{split}
			\alpha_{(8,0)(1,4)(7,4)}&=i^{1+f_{18,0,w}(8,7,1)}\,, \\
			\alpha_{(8,8)(1,4)(7,4)}
			&=i^{1+f_{18,0,w}(8,7,1)-f_{18,0,w}(8,4,4)+2f_{18,k,w}(4,4,0)}\,. \\
		\end{split}
	\end{equation}
	
	Table \ref{tab:E7_type2_conventions} summarizes the conventions for the second-type orbits.
	\begin{table}[ht]
		\centering
		\renewcommand{\arraystretch}{1.35}
		\begin{tabular}{c c c c}
			\toprule
			Case & Triplet (ordered) & $\alpha$ & Permutation \\
			\midrule
			
			\multirow{4}{*}{$
				\substack{(4,1)\\  \\ (4,7)}
				$}
			& $(8,0)(4,1)(4,1)$
			& $i^{f_{18,0,w}(8,4,4)}$
			& $+1$
			\\
			
			& $(8,0)(4,7)(4,7)$
			& $i^{f_{18,0,w}(8,4,4)}$
			& $+1$
			\\
			
			& $(0,8)(4,1)(4,7)$
			& $i^{1+f_{18,0,w}(8,7,1)}$
			& $-1$
			\\
			
			& $(8,8)(4,1)(4,7)$
			& $i^{
				1+f_{18,0,w}(8,7,1)
				-f_{18,0,w}(8,4,4)+2f_{18,k,w}(4,4,0)
			}$
			& $-1$
			\\
			
			\midrule
			
			\multirow{4}{*}{$
				\substack{(1,4)\\  \\ (7,4)}
				$}
			& $(0,8)(1,4)(1,4)$
			& $i^{f_{18,0,w}(8,4,4)}$
			& $+1$
			\\
			
			& $(0,8)(7,4)(7,4)$
			& $i^{f_{18,0,w}(8,4,4)}$
			& $+1$
			\\
			
			& $(8,0)(1,4)(7,4)$
			& $i^{1+f_{18,0,w}(8,7,1)}$
			& $-1$
			\\
			
			& $(8,8)(1,4)(7,4)$
			& $i^{
				1+f_{18,0,w}(8,7,1)
				-f_{18,0,w}(8,4,4)+2f_{18,k,w}(4,4,0)
			}$
			& $-1$
			\\
			
			\bottomrule
		\end{tabular}
		\caption{Reduced-coefficient conventions and permutation factors for the
			second-type orbits of the $E_7$ series. The $\alpha$ coefficients
			for other orderings are obtained by multiplying by the corresponding
			permutation factor: $+1$ for even permutations and the factor shown
			in the table for odd permutations.}
		\label{tab:E7_type2_conventions}
	\end{table}
	
	For the third-type orbit, $(4,4)$, the relevant coefficients are
	$$\alpha_{(0,8)(4,4)(4,4)},\quad\alpha_{(8,0)(4,4)(4,4)},\quad\alpha_{(8,8)(4,4)(4,4)}\,.$$
	Each contains two copies of $(4,4)$, so a sign change of this field cannot alter the coefficient. The bootstrap equations determine these coefficients. To see this we consider the four-point function
	\begin{equation}
		\begin{split}
			\braket{(4,4)(4,1)(4,1)(4,4)}
		\end{split}
	\end{equation}
	whose bootstrap equation leads to
	\begin{equation}
		\begin{split}
			\alpha_{(8,0)(4,4)(4,4)}\alpha_{(8,0)(4,1)(4,1)}&=\begin{cases}
				1 & w=7,11\,, \\
				-1 & w=1,5,13,17\,, \\
			\end{cases} \\
			&=(-1)^{1+f_{18,0,w}(8,4,4)}\,. \\
		\end{split}
	\end{equation}
	Then we get
	\begin{equation}
		\begin{split}
			\alpha_{(8,0)(4,4)(4,4)}=-i^{f_{18,0,w}(8,4,4)}\,.
		\end{split}
	\end{equation}
	By a similar analysis we get
	\begin{equation}
		\begin{split}
			\alpha_{(0,8)(4,4)(4,4)}=-i^{f_{18,0,w}(8,4,4)}\,.
		\end{split}
	\end{equation}
	Applying \eqref{alpha:constr_Z2antichiral1} to $\alpha_{(8,0)(4,4)(4,4)}$, we get
	\begin{equation}
		\begin{split}
			\frac{
				\alpha_{(8,0)(4,4)(4,4)}
			}{
				\alpha_{(8,8)(4,4)(4,4)}
			}
			&=
			\frac{
				\alpha_{(0,8)(4,4)(4,4)}
			}{
				\alpha_{(0,8)(8,0)(8,8)}
			}
			(-1)^{f_{18,k,w}(4,4,0)}\,. \\
		\end{split}
	\end{equation}
	Therefore
	\begin{equation}
		\begin{split}
			\alpha_{(8,8)(4,4)(4,4)}=(-1)^{f_{18,0,w}(4,4,0)}\,.
		\end{split}
	\end{equation}
	One can check that
	\begin{equation}
		\begin{split}
			f_{18,0,w}(4,4,0)\equiv f_{18,0,w}(8,4,4)\mod{4}\,.
		\end{split}
	\end{equation}
	Table \ref{tab:E7_type3_conventions} summarizes the conventions for the third-type orbit.
	\begin{table}[htbp]
		\centering
		\renewcommand{\arraystretch}{1.35}
		\begin{tabular}{c c c c}
			\toprule
			Case & Triplet (ordered) & $\alpha$ & Permutation \\
			\midrule
			
			\multirow{3}{*}{$(4,4)$}
			& $(8,0)(4,4)(4,4)$
			& $-i^{f_{18,0,w}(4,4,0)}$
			& $+1$
			\\
			
			& $(0,8)(4,4)(4,4)$
			& $-i^{f_{18,0,w}(4,4,0)}$
			& $+1$
			\\
			
			& $(8,8)(4,4)(4,4)$
			& $(-1)^{f_{18,0,w}(4,4,0)}$
			& $+1$
			\\
			
			\bottomrule
		\end{tabular}
		\caption{Reduced-coefficient conventions and permutation factors for the
			third-type orbit of the $E_7$ series. The $\alpha$ coefficients for
			other orderings are obtained by multiplying by the corresponding
			permutation factor. In this orbit, all permutation factors are $+1$.}
		\label{tab:E7_type3_conventions}
	\end{table}
	
	Now we are ready to apply \eqref{alpha:constr_Z2diagonal1}, \eqref{alpha:constr_Z2chiral1} and \eqref{alpha:constr_Z2antichiral1} for the generic triplets 
	$$(\ell_1,\bar\ell_1)(\ell_2,\bar\ell_2)(\ell_3,\bar\ell_3)\,.$$
	The constraints are given by
	\begin{equation}\label{alpha:Z2_E7}
		\boxed{\begin{aligned}
				\frac{
					\alpha_{(\ell_1,\bar\ell_1)(\ell_2,\bar\ell_2)(\ell_3,\bar\ell_3)}
				}{
					\alpha_{(8-\ell_1,8-\bar\ell_1)(8-\ell_2,8-\bar\ell_2)(\ell_3,\bar\ell_3)}
				}
				&=
				\frac{
					\alpha_{(8,8)(\ell_2,\bar\ell_2)(8-\ell_2,8-\bar\ell_2)}
				}{
					\alpha_{(8,8)(\ell_1,\bar\ell_1)(8-\ell_1,8-\bar\ell_1)}
				}
				(-1)^{\ell_{231}+\bar\ell_{231}} \\
				&\quad\times(-1)^{f_{18,k,w}(\ell_2,\ell_3,\ell_1)+f_{18,k,w}(\bar{\ell}_2,\bar{\ell}_3,\bar{\ell}_1)} \\
				\frac{
					\alpha_{(\ell_1,\bar\ell_1)(\ell_2,\bar\ell_2)(\ell_3,\bar\ell_3)}
				}{
					\alpha_{(8-\ell_1,\bar\ell_1)(8-\ell_2,\bar\ell_2)(\ell_3,\bar\ell_3)}
				}
				&=
				\frac{
					\alpha_{(8,0)(\ell_2,\bar\ell_2)(8-\ell_2,\bar\ell_2)}
				}{
					\alpha_{(8,0)(\ell_1,\bar\ell_1)(8-\ell_1,\bar\ell_1)}
				}
				(-1)^{\ell_{231}+f_{18,k,w}(\ell_2,\ell_3,\ell_1)} \\
				\frac{
					\alpha_{(\ell_1,\bar\ell_1)(\ell_2,\bar\ell_2)(\ell_3,\bar\ell_3)}
				}{
					\alpha_{(\ell_1,8-\bar\ell_1)(\ell_2,8-\bar\ell_2)(\ell_3,\bar\ell_3)}
				}
				&=
				\frac{
					\alpha_{(0,8)(\ell_2,\bar\ell_2)(\ell_2,8-\bar\ell_2)}
				}{
					\alpha_{(0,8)(\ell_1,\bar\ell_1)(\ell_1,8-\bar\ell_1)}
				}
				(-1)^{\bar\ell_{231}+f_{18,k,w}(\bar\ell_2,\bar\ell_3,\bar\ell_1)} \\
		\end{aligned}}
	\end{equation}
	We would like to check the consistency condition. By combining the second and the third equations we have
	\begin{equation}
		\begin{split}
			&\frac{
				\alpha_{(\ell_1,\bar\ell_1)(\ell_2,\bar\ell_2)(\ell_3,\bar\ell_3)}
			}{
				\alpha_{(8-\ell_1,8-\bar\ell_1)(8-\ell_2,8-\bar\ell_2)(\ell_3,\bar\ell_3)}
			}\\
			&=	\frac{
				\alpha_{(8,0)(\ell_2,\bar\ell_2)(8-\ell_2,\bar\ell_2)}
			}{
				\alpha_{(8,0)(\ell_1,\bar\ell_1)(8-\ell_1,\bar\ell_1)}
			}
			(-1)^{\ell_{231}+f_{18,k,w}(\ell_2,\ell_3,\ell_1)} \\
			&\times\frac{
				\alpha_{(0,8)(8-\ell_2,\bar\ell_2)(8-\ell_2,8-\bar\ell_2)}
			}{
				\alpha_{(0,8)(8-\ell_1,\bar\ell_1)(8-\ell_1,8-\bar\ell_1)}
			}
			(-1)^{\bar\ell_{231}+f_{18,k,w}(\bar\ell_2,\bar\ell_3,\bar\ell_1)} \\
		\end{split}
	\end{equation}
	Consistency therefore requires
	\begin{equation}
		\begin{split}
			\frac{
				\alpha_{(8,0)(\ell_2,\bar\ell_2)(8-\ell_2,\bar\ell_2)}
			}{
				\alpha_{(8,0)(\ell_1,\bar\ell_1)(8-\ell_1,\bar\ell_1)}
			}\frac{
				\alpha_{(0,8)(8-\ell_2,\bar\ell_2)(8-\ell_2,8-\bar\ell_2)}
			}{
				\alpha_{(0,8)(8-\ell_1,\bar\ell_1)(8-\ell_1,8-\bar\ell_1)}
			}=\frac{
				\alpha_{(8,8)(\ell_2,\bar\ell_2)(8-\ell_2,8-\bar\ell_2)}
			}{
				\alpha_{(8,8)(\ell_1,\bar\ell_1)(8-\ell_1,8-\bar\ell_1)}
			}\,.
		\end{split}
	\end{equation}
	This condition holds provided that 
	$$
	\frac{\alpha_{(8,0)(\ell,\bar\ell)(8-\ell,\bar\ell)}\alpha_{(0,8)(8-\ell,\bar\ell)(8-\ell,8-\bar\ell)}}{\alpha_{(8,8)(\ell,\bar\ell)(8-\ell,8-\bar\ell)}}
	$$
	is independent of the operator label $(\ell,\bar\ell)$. To see this, we apply the relation
	\begin{equation}
		\begin{split}
			\frac{
				\alpha_{(\ell_1,\bar\ell_1)(\ell_2,\bar\ell_2)(\ell_3,\bar\ell_3)}
			}{
				\alpha_{(\ell_1,\ell_{\mathbb{Z}_2}-\bar\ell_1)(\ell_2,\bar\ell_2)(\ell_3,\ell_{\mathbb{Z}_2}-\bar\ell_3)}
			}
			&=
			\frac{
				\alpha_{(0,\ell_{\mathbb{Z}_2})(\ell_1,\bar\ell_1)(\ell_1,\ell_{\mathbb{Z}_2}-\bar\ell_1)}
			}{
				\alpha_{(0,\ell_{\mathbb{Z}_2})(\ell_3,\bar\ell_3)(\ell_3,\ell_{\mathbb{Z}_2}-\bar\ell_3)}
			}
			(-1)^{\bar\ell_{123}} \\
			&\quad\times(-1)^{f_{q,k,w}(\bar{\ell}_1,\bar{\ell}_2,\bar{\ell}_3)}
		\end{split}
	\end{equation}
	on $\alpha_{(8,0)(\ell,\bar\ell)(8-\ell,\bar\ell)}$, which gives
	\begin{equation}
		\begin{split}
			\frac{\alpha_{(8,0)(\ell,\bar\ell)(8-\ell,\bar\ell)}\alpha_{(0,8)(8-\ell,\bar\ell)(8-\ell,8-\bar\ell)}}{\alpha_{(8,8)(\ell,\bar\ell)(8-\ell,8-\bar\ell)}}&=
			\frac{
				\alpha_{(0,8)(8,0)(8,8)}\alpha_{(0,8)(8-\ell,\bar\ell)(8-\ell,8-\bar\ell)}
			}{
				\alpha_{(0,8)(8-\ell,\bar\ell)(8-\ell,8-\bar\ell)}
			}\\
			&\quad\times(-1)^{f_{18,k,w}(0,\bar{\ell},\bar{\ell})} \\
			&=1\,. \\
		\end{split}
	\end{equation}
	This verifies consistency.

	\subsection{\texorpdfstring{Convention for $E_8$-series}{Convention for E8-series}}
	Like the $E_7$ series, the $E_8$ series has three $\mathbb{Z}_2$ operators labeled by $(\ell,\bar\ell,s)=(0,14,1)$, $(14,0,1)$ and $(14,14,1)$. However, the $E_8$ series has only orbits of the first type.
	\begin{figure}[htbp]
		\centering
		\begin{tikzpicture}
			\node (00) at (0,0) {$(\ell,\bar{\ell})$};
			\node (08) at (3,2) {$(\ell,14-\bar{\ell})$};
			\node (80) at (3,-2) {$(14-\ell,\bar{\ell})$};
			\node (88) at (6,-0) {$(14-\ell,14-\bar{\ell})$};
			
			\draw (00) -- node[midway, above left] {$(0,14)$} (08);
			\draw (00) -- node[midway, below left] {$(14,0)$} (80);
			\draw (08) -- node[midway, above right] {$(14,0)$} (88);
			\draw (80) -- node[midway, below right] {$(0,14)$} (88);
			\draw (00) -- node[midway, above] {$(14,14)\quad$}(88);
			\draw (80) -- node[midway, below right] {$(14,14)$}(08);
		\end{tikzpicture}
		\caption{Orbits of the first type.}
		\label{fig:E8}
	\end{figure}
	
	The treatment parallels that of the $E_7$ series. We use the spin content of the $E_8$ series:
	\begin{equation}
		\begin{split}
			S_{\ell,\bar{\ell},s}=\frac{(\ell-\bar{\ell})(p(\ell+\bar{\ell}+1)-30s)}{30}=\begin{cases}
				0 & \text{for }(\ell,\ell,s)\,, \\
				5s-p\equiv0\mod{2} & \text{for }(0,5,s)\,, \\
				9s-3p \equiv0\mod{2}& \text{for }(0,9,s)\,, \\
				14s-7p\equiv1\mod{2} & \text{for }(0,14,s)\,, \\
				4s-2p\equiv0\mod{2} & \text{for }(5,9,s)\,, \\
				9s-6p \equiv1\mod{2}& \text{for }(5,14,s)\,, \\
				5s-4p \equiv1\mod{2}& \text{for }(9,14,s)\,, \\
				3s-p\equiv0\mod{2} & \text{for }(3,6,s)\,, \\
				5s-2p \equiv1\mod{2}& \text{for }(3,8,s)\,, \\
				8s-4p\equiv0\mod{2} & \text{for }(3,11,s)\,, \\
				2s-p \equiv1\mod{2}& \text{for }(6,8,s)\,, \\
				5s-3p\equiv0\mod{2} & \text{for }(6,11,s)\,, \\
				3s-2p \equiv1\mod{2}& \text{for }(8,11,s)\,. \\
			\end{cases}
		\end{split}
	\end{equation}
	We first consider $\alpha$'s with an external $\mathbb{Z}_2$ operator. To avoid overcounting, we choose the reference $(\ell,\bar{\ell})$ to be
	\begin{equation}
		\begin{split}
			(\ell,\bar\ell)=(0,0),(5,5),(0,5),(5,0),(3,3),(6,6),(3,6),(6,3)\,.
		\end{split}
	\end{equation}
	They generate all the operators under $\mathbb{Z}_2\times\mathbb{Z}_2$ actions.
	
	For the chiral $\mathbb{Z}_2$ operator, we use \eqref{alpha:constr_Z2chiral2}:
	\begin{equation}
		\begin{split}
			\left[\alpha_{(14,0)(\ell,\bar\ell)(14-\ell,\bar\ell)}\right]^2&=(-1)^{S_{14,0}+S_{\ell,\bar{\ell}}+S_{14-\ell,\bar{\ell}}+f_{30,k,w}(14,14-\ell,\ell)} \\
			&=\begin{cases}
				(-1)^{f_{30,0,w}(14,14-\ell,\ell)} & (\ell,\bar\ell)=(0,0),(0,5),(6,6),(6,3) \\
				(-1)^{1+f_{30,0,w}(14,14-\ell,\ell)} & (\ell,\bar\ell)=(5,5),(5,0),(3,3),(3,6) \\
			\end{cases}
		\end{split}
	\end{equation}
	Using the gauge redundancy of either $(\ell,\bar\ell)$ or $(14-\ell,\bar\ell)$, we set\footnote{For all eight coprime residues $w=1,7,11,13,17,19,23,29$, we have $f_{30,0,w}(14,14,0)\equiv0\pmod{4}$. Thus $\alpha_{(14,0)(0,0)(14,0)}=1$, as required by the identity normalization. The same observation applies to the antichiral simple current.}
	\begin{equation}
		\begin{split}
			\alpha_{(14,0)(\ell,\bar\ell)(14-\ell,\bar\ell)}
			&=\begin{cases}
				i^{f_{30,0,w}(14,14-\ell,\ell)} & (\ell,\bar\ell)=(0,0),(0,5),(6,6),(6,3) \\
				i^{1+f_{30,0,w}(14,14-\ell,\ell)} & (\ell,\bar\ell)=(5,5),(5,0),(3,3),(3,6) \\
			\end{cases}
		\end{split}
	\end{equation}
	Similarly, for the antichiral $\mathbb{Z}_2$ operator, we use \eqref{alpha:constr_Z2antichiral2} and set
	\begin{equation}
		\begin{split}
			\alpha_{(0,14)(\ell,\bar\ell)(\ell,14-\bar\ell)}
			&=\begin{cases}
				i^{f_{30,0,w}(14,14-\bar\ell,\bar\ell)} & (\ell,\bar\ell)=(0,0),(5,0),(6,6),(3,6) \\
				i^{1+f_{30,0,w}(14,14-\bar\ell,\bar\ell)} & (\ell,\bar\ell)=(5,5),(0,5),(3,3),(6,3) \\
			\end{cases}\,. \\
		\end{split}
	\end{equation}
	
	For the diagonal $\mathbb{Z}_2$ operator, we use \eqref{alpha:constr_Z2diagonal2}
	\begin{equation}
		\begin{split}
			&\left[\alpha_{(14,14)(\ell,\bar\ell)(14-\ell,14-\bar\ell)}\right]^2\\
			&=(-1)^{S_{\ell,\bar{\ell}}+S_{14-\ell,14-\bar{\ell}}+f_{30,k,w}(14,14-\ell,\ell)+f_{30,k,w}(14,14-\bar{\ell},\bar{\ell})} \\
			&=\begin{cases}
				1 & (\ell,\bar\ell)=(0,0),(5,5),(3,3),(6,6) \\
				(-1)^{1+f_{30,k,w}(14,14-\ell,\ell)+f_{30,k,w}(14,14-\bar{\ell},\bar{\ell})} & (\ell,\bar\ell)=(0,5),(5,0),(3,6),(6,3) \\
			\end{cases}
		\end{split}
	\end{equation}
	Using the gauge redundancy of $(14-\ell,14-\bar{\ell})$, we set
	\begin{equation}
		\begin{split}
			&\alpha_{(14,14)(\ell,\bar\ell)(14-\ell,14-\bar\ell)} \\
			&=\begin{cases}
				1 & (\ell,\bar\ell)=(0,0),(5,5),(3,3),(6,6) \\
				i^{1+f_{30,k,w}(14,14-\ell,\ell)+f_{30,k,w}(14,14-\bar{\ell},\bar{\ell})} & (\ell,\bar\ell)=(0,5),(5,0),(3,6),(6,3) \\
			\end{cases}
		\end{split}
	\end{equation}
	
	The signs of the $\mathbb{Z}_2$ operators remain available. To constrain this freedom, we consider the identity orbit, with $s=1$, which contains
	$$(\ell,\bar\ell)=(0,0),(14,0),(0,14),(14,14)\,.$$
	The non-trivial triplet is $[(14,0),(0,14),(14,14)]$, for which we have
	\begin{equation}
		\begin{split}
			\left[\alpha_{(14,0)(0,14)(14,14)}\right]^2=(-1)^{f_{30,k,w}(14,0,14)}=1\,.
		\end{split}
	\end{equation}
	So we can set
	\begin{equation}
		\begin{split}
			\alpha_{(14,0)(0,14)(14,14)}=1\,.
		\end{split}
	\end{equation}
	Among the $\mathbb{Z}_2$ operators, this leaves only simultaneous flips of two signs, accompanied by the operator-sign changes needed to preserve the preceding conventions.
	
	For $\alpha_{(0,14)(14-\ell,\bar\ell)(14-\ell,14-\bar\ell)}$, we use \eqref{alpha:constr_Z2chiral1}:
	\begin{equation}
		\begin{split}
			\frac{
				\alpha_{(0,14)(14-\ell,\bar\ell)(14-\ell,14-\bar\ell)}
			}{
				\alpha_{(14,14)(\ell,\bar\ell)(14-\ell,14-\bar\ell)}
			}
			&=
			\frac{
				\alpha_{(14,0)(14-\ell,\bar\ell)(\ell,\bar\ell)}
			}{
				\alpha_{(14,0)(0,14)(14,14)}
			}
			(-1)^{f_{30,k,w}(14-\ell,14-\ell,0)} \\
		\end{split}
	\end{equation}
	which implies
	\begin{equation}
		\begin{split}
			\alpha_{(0,14)(14-\ell,\bar\ell)(14-\ell,14-\bar\ell)}&=\alpha_{(14,0)(14-\ell,\bar\ell)(\ell,\bar\ell)}\alpha_{(14,14)(\ell,\bar\ell)(14-\ell,14-\bar\ell)}(-1)^{f_{30,0,w}(14-\ell,14-\ell,0)}\,. \\
		\end{split}
	\end{equation}
	Similarly
	\begin{equation}
		\begin{split}
			\alpha_{(14,0)(\ell,14-\bar\ell)(14-\ell,14-\bar\ell)}&=\alpha_{(0,14)(\ell,14-\bar\ell)(\ell,\bar\ell)}\alpha_{(14,14)(\ell,\bar\ell)(14-\ell,14-\bar\ell)}(-1)^{f_{30,0,w}(14-\bar\ell,14-\bar\ell,0)}\,. \\
		\end{split}
	\end{equation}
	
	The last element is $\alpha_{(14,14)(14-\ell,\bar\ell)(\ell,14-\bar\ell)}$. We use \eqref{alpha:constr_Z2chiral1}:
	\begin{equation}
		\begin{split}
			\frac{
				\alpha_{(14,14)(14-\ell,\bar\ell)(\ell,14-\bar\ell)}
			}{
				\alpha_{(0,14)(\ell,\bar\ell)(\ell,14-\bar\ell)}
			}
			&=
			\frac{
				\alpha_{(14,0)(14-\ell,\bar\ell)(\ell,\bar\ell)}
			}{
				\alpha_{(14,0)(14,14)(0,14)}
			}
			(-1)^{f_{30,k,w}(14-\ell,\ell,14)} \\
		\end{split}
	\end{equation}
	which implies
	\begin{equation}
		\begin{split}
			\alpha_{(14,14)(14-\ell,\bar\ell)(\ell,14-\bar\ell)}=\alpha_{(0,14)(\ell,\bar\ell)(\ell,14-\bar\ell)}\alpha_{(14,0)(14-\ell,\bar\ell)(\ell,\bar\ell)}(-1)^{f_{30,0,w}(14-\ell,\ell,14)}
		\end{split}
	\end{equation}
	
	Tables~\ref{tab:E8_type1_conventions} and \ref{tab:E8_type1_conventions_2} summarize these conventions. For compactness, we use the notation
	\[
	\chi_w(\ell,\bar\ell):=
	1+f_{30,0,w}(14,14-\ell,\ell)
	+f_{30,0,w}(14,14-\bar\ell,\bar\ell).
	\]
	
	Now we are ready to apply \eqref{alpha:constr_Z2diagonal1}, \eqref{alpha:constr_Z2chiral1} and \eqref{alpha:constr_Z2antichiral1} for the generic triplets 
	$$(\ell_1,\bar\ell_1)(\ell_2,\bar\ell_2)(\ell_3,\bar\ell_3)\,.$$
	The constraints are given by
	\begin{equation}\label{alpha:Z2_E8}
		\boxed{\begin{aligned}
				\frac{
					\alpha_{(\ell_1,\bar\ell_1)(\ell_2,\bar\ell_2)(\ell_3,\bar\ell_3)}
				}{
					\alpha_{(14-\ell_1,14-\bar\ell_1)(14-\ell_2,14-\bar\ell_2)(\ell_3,\bar\ell_3)}
				}
				&=
				\frac{
					\alpha_{(14,14)(\ell_2,\bar\ell_2)(14-\ell_2,14-\bar\ell_2)}
				}{
					\alpha_{(14,14)(\ell_1,\bar\ell_1)(14-\ell_1,14-\bar\ell_1)}
				}
				(-1)^{\ell_{231}+\bar\ell_{231}} \\
				&\quad\times(-1)^{f_{30,k,w}(\ell_2,\ell_3,\ell_1)+f_{30,k,w}(\bar{\ell}_2,\bar{\ell}_3,\bar{\ell}_1)} \\
				\frac{
					\alpha_{(\ell_1,\bar\ell_1)(\ell_2,\bar\ell_2)(\ell_3,\bar\ell_3)}
				}{
					\alpha_{(14-\ell_1,\bar\ell_1)(14-\ell_2,\bar\ell_2)(\ell_3,\bar\ell_3)}
				}
				&=
				\frac{
					\alpha_{(14,0)(\ell_2,\bar\ell_2)(14-\ell_2,\bar\ell_2)}
				}{
					\alpha_{(14,0)(\ell_1,\bar\ell_1)(14-\ell_1,\bar\ell_1)}
				}
				(-1)^{\ell_{231}+f_{30,k,w}(\ell_2,\ell_3,\ell_1)} \\
				\frac{
					\alpha_{(\ell_1,\bar\ell_1)(\ell_2,\bar\ell_2)(\ell_3,\bar\ell_3)}
				}{
					\alpha_{(\ell_1,14-\bar\ell_1)(\ell_2,14-\bar\ell_2)(\ell_3,\bar\ell_3)}
				}
				&=
				\frac{
					\alpha_{(0,14)(\ell_2,\bar\ell_2)(\ell_2,14-\bar\ell_2)}
				}{
					\alpha_{(0,14)(\ell_1,\bar\ell_1)(\ell_1,14-\bar\ell_1)}
				}
				(-1)^{\bar\ell_{231}+f_{30,k,w}(\bar\ell_2,\bar\ell_3,\bar\ell_1)} \\
		\end{aligned}}
	\end{equation}
	
	\begin{table}[htbp]
		\centering
		\renewcommand{\arraystretch}{1.35}
		\resizebox{\textwidth}{!}{%
			\begin{tabular}{c c c c}
				\toprule
				Case & Triplet (ordered) & $\alpha$ & Permutation \\
				\midrule
				\multirow{6}{*}{$(0,0)$}
				& $(0,14)(0,0)(0,14)$ & $1$ & $+1$ \\
				& $(14,0)(0,0)(14,0)$ & $1$ & $+1$ \\
				& $(14,14)(0,0)(14,14)$ & $1$ & $+1$ \\
				& $(0,14)(14,0)(14,14)$ & $1$ & $+1$ \\
				& $(14,0)(0,14)(14,14)$ & $1$ & $+1$ \\
				& $(14,14)(14,0)(0,14)$ & $1$ & $+1$ \\
				\midrule
				
				\multirow{6}{*}{$(5,5)$}
				& $(0,14)(5,5)(5,9)$
				& $i^{1+f_{30,0,w}(14,9,5)}$ & $-1$ \\
				& $(14,0)(5,5)(9,5)$
				& $i^{1+f_{30,0,w}(14,9,5)}$ & $-1$ \\
				& $(14,14)(5,5)(9,9)$
				& $1$ & $+1$ \\
				& $(0,14)(9,5)(9,9)$
				& $-i^{1+f_{30,0,w}(14,9,5)}(-1)^{f_{30,0,w}(9,9,0)}$ & $-1$ \\
				& $(14,0)(5,9)(9,9)$
				& $-i^{1+f_{30,0,w}(14,9,5)}(-1)^{f_{30,0,w}(9,9,0)}$ & $-1$ \\
				& $(14,14)(9,5)(5,9)$
				& $-i^{2+2f_{30,0,w}(14,9,5)}(-1)^{f_{30,0,w}(9,5,14)}$ & $+1$ \\
				\midrule
				
				\multirow{6}{*}{$(0,5)$}
				& $(0,14)(0,5)(0,9)$
				& $i^{1+f_{30,0,w}(14,9,5)}$ & $-1$ \\
				& $(14,0)(0,5)(14,5)$
				& $i^{f_{30,0,w}(14,14,0)}$ & $+1$ \\
				& $(14,14)(0,5)(14,9)$
				& $i^{\chi_w(0,5)}$ & $-1$ \\
				& $(0,14)(14,5)(14,9)$
				& $i^{f_{30,0,w}(14,14,0)+\chi_w(0,5)}(-1)^{f_{30,0,w}(14,14,0)}$ & $-1$ \\
				& $(14,0)(0,9)(14,9)$
				& $-i^{1+f_{30,0,w}(14,9,5)+\chi_w(0,5)}(-1)^{f_{30,0,w}(9,9,0)}$ & $+1$ \\
				& $(14,14)(14,5)(0,9)$
				& $i^{1+f_{30,0,w}(14,14,0)+f_{30,0,w}(14,9,5)}(-1)^{f_{30,0,w}(14,0,14)}$ & $-1$ \\
				\midrule
				
				\multirow{6}{*}{$(5,0)$}
				& $(0,14)(5,0)(5,14)$
				& $i^{f_{30,0,w}(14,14,0)}$ & $+1$ \\
				& $(14,0)(5,0)(9,0)$
				& $i^{1+f_{30,0,w}(14,9,5)}$ & $-1$ \\
				& $(14,14)(5,0)(9,14)$
				& $i^{\chi_w(5,0)}$ & $-1$ \\
				& $(0,14)(9,0)(9,14)$
				& $-i^{1+f_{30,0,w}(14,9,5)+\chi_w(5,0)}(-1)^{f_{30,0,w}(9,9,0)}$ & $+1$ \\
				& $(14,0)(5,14)(9,14)$
				& $i^{f_{30,0,w}(14,14,0)+\chi_w(5,0)}(-1)^{f_{30,0,w}(14,14,0)}$ & $-1$ \\
				& $(14,14)(9,0)(5,14)$
				& $-i^{1+f_{30,0,w}(14,9,5)+f_{30,0,w}(14,14,0)}(-1)^{f_{30,0,w}(9,5,14)}$ & $-1$ \\
				\bottomrule
		\end{tabular}}
		\caption{Reduced-coefficient conventions and permutation factors for the first-type orbits of the $E_8$ series, part one. We use the notation $\chi_w(\ell,\bar\ell):=
			1+f_{30,0,w}(14,14-\ell,\ell)
			+f_{30,0,w}(14,14-\bar\ell,\bar\ell)$.}
		\label{tab:E8_type1_conventions}
	\end{table}
	\begin{table}[htbp]
		\centering
		\renewcommand{\arraystretch}{1.35}
		\resizebox{\textwidth}{!}{%
			\begin{tabular}{c c c c}
				\toprule
				Case & Triplet (ordered) & $\alpha$ & Permutation \\
				\midrule
				\multirow{6}{*}{$(3,3)$}
				& $(0,14)(3,3)(3,11)$
				& $i^{1+f_{30,0,w}(14,11,3)}$ & $-1$ \\
				& $(14,0)(3,3)(11,3)$
				& $i^{1+f_{30,0,w}(14,11,3)}$ & $-1$ \\
				& $(14,14)(3,3)(11,11)$
				& $1$ & $+1$ \\
				& $(0,14)(11,3)(11,11)$
				& $-i^{1+f_{30,0,w}(14,11,3)}(-1)^{f_{30,0,w}(11,11,0)}$ & $-1$ \\
				& $(14,0)(3,11)(11,11)$
				& $-i^{1+f_{30,0,w}(14,11,3)}(-1)^{f_{30,0,w}(11,11,0)}$ & $-1$ \\
				& $(14,14)(11,3)(3,11)$
				& $-i^{2+2f_{30,0,w}(14,11,3)}(-1)^{f_{30,0,w}(11,3,14)}$ & $+1$ \\
				\midrule
				
				\multirow{6}{*}{$(6,6)$}
				& $(0,14)(6,6)(6,8)$
				& $i^{f_{30,0,w}(14,8,6)}$ & $+1$ \\
				& $(14,0)(6,6)(8,6)$
				& $i^{f_{30,0,w}(14,8,6)}$ & $+1$ \\
				& $(14,14)(6,6)(8,8)$
				& $1$ & $+1$ \\
				& $(0,14)(8,6)(8,8)$
				& $i^{f_{30,0,w}(14,8,6)}(-1)^{f_{30,0,w}(8,8,0)}$ & $+1$ \\
				& $(14,0)(6,8)(8,8)$
				& $i^{f_{30,0,w}(14,8,6)}(-1)^{f_{30,0,w}(8,8,0)}$ & $+1$ \\
				& $(14,14)(8,6)(6,8)$
				& $i^{2f_{30,0,w}(14,8,6)}(-1)^{f_{30,0,w}(8,6,14)}$ & $+1$ \\
				\midrule
				
				\multirow{6}{*}{$(3,6)$}
				& $(0,14)(3,6)(3,8)$
				& $i^{f_{30,0,w}(14,8,6)}$ & $+1$ \\
				& $(14,0)(3,6)(11,6)$
				& $i^{1+f_{30,0,w}(14,11,3)}$ & $-1$ \\
				& $(14,14)(3,6)(11,8)$
				& $i^{\chi_w(3,6)}$ & $-1$ \\
				& $(0,14)(11,6)(11,8)$
				& $-i^{1+f_{30,0,w}(14,11,3)+\chi_w(3,6)}(-1)^{f_{30,0,w}(11,11,0)}$ & $+1$ \\
				& $(14,0)(3,8)(11,8)$
				& $i^{f_{30,0,w}(14,8,6)+\chi_w(3,6)}(-1)^{f_{30,0,w}(8,8,0)}$ & $-1$ \\
				& $(14,14)(11,6)(3,8)$
				& $-i^{1+f_{30,0,w}(14,11,3)+f_{30,0,w}(14,8,6)}(-1)^{f_{30,0,w}(11,3,14)}$ & $-1$ \\
				\midrule
				
				\multirow{6}{*}{$(6,3)$}
				& $(0,14)(6,3)(6,11)$
				& $i^{1+f_{30,0,w}(14,11,3)}$ & $-1$ \\
				& $(14,0)(6,3)(8,3)$
				& $i^{f_{30,0,w}(14,8,6)}$ & $+1$ \\
				& $(14,14)(6,3)(8,11)$
				& $i^{\chi_w(6,3)}$ & $-1$ \\
				& $(0,14)(8,3)(8,11)$
				& $i^{f_{30,0,w}(14,8,6)+\chi_w(6,3)}(-1)^{f_{30,0,w}(8,8,0)}$ & $-1$ \\
				& $(14,0)(6,11)(8,11)$
				& $-i^{1+f_{30,0,w}(14,11,3)+\chi_w(6,3)}(-1)^{f_{30,0,w}(11,11,0)}$ & $+1$ \\
				& $(14,14)(8,3)(6,11)$
				& $i^{1+f_{30,0,w}(14,8,6)+f_{30,0,w}(14,11,3)}(-1)^{f_{30,0,w}(8,6,14)}$ & $-1$ \\
				\bottomrule
		\end{tabular}}
		\caption{Reduced-coefficient conventions and permutation factors for the first-type orbits of the $E_8$ series, part two. The $\alpha$ coefficients for other orderings are obtained by multiplying by the corresponding permutation factor: $+1$ for even permutations and the factor shown in the table for odd permutations. We use the notation $\chi_w(\ell,\bar\ell):=
			1+f_{30,0,w}(14,14-\ell,\ell)
			+f_{30,0,w}(14,14-\bar\ell,\bar\ell)$.}
		\label{tab:E8_type1_conventions_2}
	\end{table}

\newpage
	\section{Comparison with previous results}\label{app:literature-comparison}

This appendix compares our OPE coefficients with the earlier calculations discussed in the introduction. The comparison requires matching the field labels, two-point normalizations, and the chiral factors divided out in defining the reduced coefficients. In particular, the squares of the OPE coefficients and their absolute squares need not coincide, since some coefficients are imaginary in our convention. For the $SU(2)$ WZW calculations, the affine level is $k=q-2$, and the representation labels are related to ours by $r=2\ell+1$ and $\bar r=2\bar\ell+1$. Their quantum-group parameter is $e^{i\pi/q}$, so these calculations can be compared with our formal $p=1$ data after the respective chiral factors are removed. The unitary minimal-model results at $p=q-1$ are related to this reference case by the phase and conjugation transformations of Section~\ref{subsec:finite_p_reduction}. We keep these convention changes explicit when comparing signs.

We begin with the WZW calculations. Douglas and Trivedi calculated the reduced OPE coefficients of the $E_7$ $SU(2)$ WZW model~\cite{Douglas:1988rv}. Their Table~1 gives the squares of these coefficients after removing diagonal Fateev--Zamolodchikov factors and fixing the normalization of the $SU(2)$ invariant tensors. We find exact agreement with all $26$ printed entries at our reference value $(p,q)=(1,18)$. The agreement includes negative squares, such as $\alpha_{(1,4)(1,4)(1,4)}^2=-\sqrt{2}$, as well as the vanishing coefficients. Using their reflection and left--right relations, together with identity normalization and the chiral fusion rules, the table determines all $17^3$ ordered squared coefficients. Every entry agrees with our result. This comparison determines the algebraic squares, but does not independently determine the relative signs of the unsquared coefficients.

Kato and Kitazawa studied the same $E_7$ WZW theory and expressed its squared coefficients in terms of diagonal WZW coefficients~\cite{Kato:1988ct}. Their normalization differs from that of Douglas and Trivedi: the diagonal coefficient used as a reference in a given row is not always the geometric mean of the left- and right-moving diagonal coefficients. After accounting for this difference, all $23$ ratios in their Table~1 agree exactly with our data, and the corresponding numerical values agree to the precision printed in the table. Thus the two $E_7$ calculations provide consistent checks of our magnitudes and zeros. We do not identify their positive squared quantities directly with $\alpha_{IJK}^2$ when the latter is negative in our convention.

Fuchs developed a method for extracting non-diagonal OPE coefficients from monodromy-invariant four-point functions~\cite{Fuchs:1989prl}. The explicit $E_6$ and $E_7$ WZW coefficients were given by Fuchs and Klemm~\cite{Fuchs:1989kz}. After removing the diagonal $A$-series factors as in their Eq.~(6.15), all six nonzero $E_6$ relative squares in Eq.~(6.16) and all twenty nonzero $E_7$ relative squares in Eq.~(6.23) agree exactly with $|\alpha_{IJK}|^2$ at $p=1$. Their one $E_6$ and three $E_7$ displayed zeros also agree.

For $E_8$, all $32$ relative squared magnitudes in Table~1 of Fuchs, Klemm, and Scheich~\cite{Fuchs:1989zp}, including its two zeros, agree exactly with our $p=1$ data in the same diagonal normalization. These comparisons concern magnitudes and zeros; we have not matched the unsquared coefficients in a common convention.

For unitary minimal models, Furlan, Ganchev, and Petkova calculated the $(A_{10},E_6)$ theory, corresponding to $(p,q)=(11,12)$ in our notation~\cite{Furlan:1989ra}. Their exceptional Kac label is the second one, whereas it is the first one here. After exchanging these labels and matching normalizations, the nine reduced-square relations in their Eq.~(5.4a) agree exactly with the squared magnitudes of our reduced coefficients. Their factorization formula, Eq.~(5.4c), and its symmetry relations reproduce the complete set of $12^3$ ordered magnitudes and zeros. The paper also studies larger fermionic and quasilocal algebras, whose additional fields are not part of the local spectrum considered here. Our comparison concerns the local algebra and its magnitudes. We have not completed the conversion of all their spinning-field signs; some of these signs are affected by the correction to the fusion-matrix convention discussed in Appendix~B of Ref.~\cite{Petkova:1994zs}.

Petkova and Zuber provide the most extensive comparison with the unitary exceptional minimal models~\cite{Petkova:1994zs}. They normalize the two-point function of a field $A$ to $(-1)^{S_A}$ and define relative coefficients $d_{AB}^{C}$ by removing positive diagonal $A$-series factors. Their scalar coefficients are the structure constants $M_{abc}$ of the Pasquier algebra. After putting the two sets of coefficients in the same quantum-group convention, we find agreement with all their scalar matrices. For $E_8$, the difference between our scalar signs and their nonnegative convention is removed by changing the signs of the scalar fields with $r=17,19,23,29$. Writing $A=(a,\bar a)$, $B=(b,\bar b)$, and $C=(c,\bar c)$ in first Kac labels, their factorization formula $|d_{AB}^{C}|^2=M_{abc}M_{\bar a\bar b\bar c}$ then reproduces every $E_6$ and $E_8$ magnitude and zero. For $E_7$, where this factorization does not hold in general, their Eqs.~(2.18)--(2.19) and the reflection relations likewise reproduce the complete squared-magnitude tensor.

The $E_6$ and $E_7$ comparisons with Petkova and Zuber also include all relative signs. We match their fusion-matrix convention using Eq.~(B.3) of their paper and convert the spin-dependent two-point metric to our normalization. For $E_6$, this conversion includes an additional phase for couplings containing two fields with half-integer chiral spins. Independently reconstructing their coefficients from the scalar matrices, identity normalization, and Eqs.~(2.5), (2.13), and (2.20)--(2.27), we find agreement for all $12^3$ and $17^3$ ordered entries, respectively, after one consistent field-sign redefinition in each theory. This verifies the complete signed $E_6$ and $E_7$ relative OPE coefficients in the common reference sector. For $E_8$, the authors report determining the remaining spinning-field signs, but do not print the full list. Our direct comparison with that paper therefore establishes the scalar signs and all magnitudes and zeros, rather than an entry-by-entry comparison of the complete signed $E_8$ tensor.

More recently, Nivesvivat and Ribault obtained explicit signed coefficients for the full $E_6$ family~\cite{Nivesvivat:2025odb}. Their reduced coefficients use a different reference factor and two-point normalization from ours. We therefore restore both reference factors and compare coefficients in the same two-point normalization. We checked all $18$ representatives with $s_I=s_J=s_K=1$ at $p=5,7,11,13,17,19,23,25$, with $q=12$. These physical models cover every coprime residue modulo $24$. All $144$ normalized-square comparisons agree numerically, with relative differences below $10^{-60}$.\footnote{We thank Rongvoram Nivesvivat for clarifying the parity conventions in email correspondence.} We also compared the signs of all $18$ representatives at the two unitary values $p=11,13$. In each case, one consistent choice of operator signs relates their coefficients to ours. The additional zeros agree for both reference cases $p=1,5$, and a separate spinning example with second Kac labels equal to three agrees after the same normalization procedure. These checks include nonunitary models, although the complete signed comparison was performed only at the two unitary values specified above.

The earlier WZW and unitary minimal-model calculations thus agree with our data in their common reference sector. These comparisons do not independently test the other nonunitary $E_7$ and $E_8$ residues, or certify all crossing equations exactly. The comparison with Nivesvivat and Ribault supplies an additional check across all independent $E_6$ residues, with the numerical and signed scopes stated above.
	
	\newpage
	\section{An obstruction to a parity-invariant sign gauge}\label{app:parity_counter_example}
    A parity-odd scalar can force a relative sign between OPE coefficients
    that cannot be removed by operator-sign redefinitions. To illustrate this,
    consider three decoupled free bosons and $U(1)$ currents
    normalized as
    \begin{equation}
    	J_a(z)J_b(0)\sim\frac{\delta_{ab}}{z^2},
    	\qquad
    	T(z)=\frac12\sum_{a=1}^3:J_a^2:(z),
    \end{equation}
    with analogous antiholomorphic expressions. Compactness is not essential for this argument: the $U(1)$ currents here come from the shift symmetries and involve only derivative fields, whose local correlators are independent of the compactification radii. See
    \cite{DiFrancesco:1997nk} for the free-boson OPE formalism.
    Spatial parity exchanges the holomorphic and antiholomorphic currents
    and satisfies $P^2=1$.
    
    Define the normalized chiral fields
    \begin{equation}
    	V=J_1,\qquad
    	S=\frac{:J_1^2:-:J_2^2:}{2},\qquad
    	U=:J_1J_2:,\qquad
    	W=\frac{:(J_1+J_2)J_3:}{\sqrt2}.
    \end{equation}
    They are Virasoro primaries of weights
    $h_V=1$ and $h_S=h_U=h_W=2$.
    In particular, the quadratic fields are primary because their traceless
    coefficient matrices eliminate the fourth-order pole in their OPE with
    $T$. They have unit two-point coefficients, and $\langle SU\rangle=0$.
    The bulk fields
    \begin{equation}
    	A=\frac{S\bar U-U\bar S}{\sqrt2},\qquad
    	B=V\bar W,\qquad
    	\widetilde B=W\bar V
    \end{equation}
    are therefore normalized Virasoro primaries of weights
    $(2,2)$, $(1,2)$, and $(2,1)$, respectively.
    Here bars denote antiholomorphic counterparts.
    Their parity transformations, with the reflected coordinates suppressed,
    are
    \begin{equation}
    	P\cdot A=-A,\qquad
    	P\cdot B=\widetilde B,\qquad
    	P\cdot \widetilde B=B.
    \end{equation}
    Thus $A$ is a parity-odd scalar, and the underlying parity transformation of field
    labels is $P(A)=A$, $P(B)=\widetilde B$, and $P(\widetilde B)=B$.
    
    Writing $c_{XYZ}$ for the normalized chiral three-point coefficients,
    Wick contractions give
    \begin{equation}
    	c_{SVV}=c_{UWW}=1,\qquad
    	c_{UVV}=c_{SWW}=0.
    \end{equation}
    Left-right factorization then yields
    \begin{align}
    	C_{ABB}
    	&=\frac{c_{SVV}c_{UWW}-c_{UVV}c_{SWW}}{\sqrt2}
    	=\frac1{\sqrt2},\\
    	C_{A\widetilde B\widetilde B}
    	&=\frac{c_{SWW}c_{UVV}-c_{UWW}c_{SVV}}{\sqrt2}
    	=-\frac1{\sqrt2}.
    \end{align}
    The relative minus sign is required by parity symmetry.
    Under the residual operator-sign transformations
    $A\mapsto\varepsilon_A A$,
    $B\mapsto\varepsilon_B B$, and
    $\widetilde B\mapsto\varepsilon_{\widetilde B}\widetilde B$,
    with $\varepsilon_A,\varepsilon_B,\varepsilon_{\widetilde B}\in\{\pm1\}$,
    both coefficients acquire the same factor $\varepsilon_A$.
    Consequently,
    \begin{equation}
    	\frac{C'_{ABB}}{C'_{A\widetilde B\widetilde B}}=-1
    \end{equation}
    in every such gauge, so the equality
    $C'_{IJK}=C'_{P(I)\,P(J)\,P(K)}$ cannot hold for these fixed
    primary labels.\footnote{The free-boson theory has degenerate Virasoro
    	primaries, so this statement concerns the specified primary basis modulo
    	operator-sign redefinitions. Mixing degenerate primaries is an additional
    	freedom: for example, $S\bar U$ and $U\bar S$ are exchanged by parity,
    	whereas their antisymmetric combination is parity odd.}

	\newpage
	\bibliographystyle{unsrtnat}
	\bibliography{E_series}
	
\end{document}